 \newtheorem{theorem}{Theorem}[section]
 \newtheorem{lemma}{Lemma}[section]
 \newtheorem{corollary}{Corollary}
 \newtheorem{definition}{Definition}[section]
 \newtheorem{proposition}{Proposition}[section]
\newcommand{\tw}{{\mathbf{tw}}}
\newcommand{\pw}{\mbox{\bf pw}}
\newcommand{\h}[1]{\end{document}}
\newcommand{\cS}{{\mathcal{S}}}
\newcommand{\cO}{{\mathcal{O}}}
\newcommand{\defparproblem}[4]{
  \vspace{1mm}
\noindent\fbox{
  \begin{minipage}{0.96\textwidth}
  \begin{tabular*}{\textwidth}{@{\extracolsep{\fill}}lr} #1  & {\bf{Parameter:}} #3 \\ \end{tabular*}
  {\bf{Input:}} #2  \\
  {\bf{Question:}} #4
  \end{minipage}
  }
  \vspace{1mm}
}
\newcommand{\I}{\cal{I}}
\newcommand{\mat}{$M=(E,{\cal I})$}
\newcommand{\whnd}[1]{\widehat{#1}}
\newcommand{\rep}[2] {$\widehat{{\cal #1}} \subseteq_{rep}^{#2} {\cal #1}$}
\newcommand{\lminrep}[2] {$\widehat{#1} \subseteq_{minrep}^{#2} #1$}
\newcommand{\minrep}[2] {$\widehat{{\cal #1}} \subseteq_{minrep}^{#2} {\cal #1}$}
\newcommand{\maxrep}[2] {$\widehat{{\cal #1}} \subseteq_{maxrep}^{#2} {\cal #1}$}
\newcommand{\bnoml}[2]{  $\binom{{#1}}{{#2}}$}
\newcommand{\rank}[1]{$\mbox{\sf rank}(#1)$}
\newcommand{\wf}{${w}:{\cal S} \rightarrow \mathbb{N}$}
\newcommand{\tgem}{$\cO\left({p+q \choose p} t p^\omega + t {p+q \choose q} ^{\omega-1} \right)$}
\newcommand{\repmat}[1]{$A_{#1}$}
\title{Representative Sets of Product Families}
\author{
{\large\sc Fedor V. Fomin\thanks{University of Bergen, Norway. \texttt{\{fomin|daniello\}@ii.uib.no}}}\addtocounter{footnote}{-1}
\and {\large \sc Daniel Lokshtanov}\footnotemark 
\and {\large \sc Fahad Panolan}\thanks{Institute of Mathematical Sciences, India. \texttt{\{fahad|saket\}@imsc.res.in}} \addtocounter{footnote}{-1}
\and {\large\sc Saket Saurabh\footnotemark \addtocounter{footnote}{-2}$~$\footnotemark}
}
\date{}
\begin{document}
\maketitle

\thispagestyle{empty}
 \begin{abstract}

A subfamily ${\cal F}'$ of a set family ${\cal F}$ is said to $q$-{\em represent}  ${\cal F}$ if for every $A \in {\cal F}$ and $B$ of size $q$ such that $A \cap B = \emptyset$ there exists a set $A' \in {\cal F}'$ such that $A' \cap B = \emptyset$. In a recent paper [SODA 2014] three of the authors gave an algorithm that given as input a family ${\cal F}$ of sets of size $p$ together with an integer $q$, efficiently computes a $q$-representative family ${\cal F'}$ of ${\cal F}$ of size approximately ${p+q \choose p}$, and demonstrated several applications of this algorithm. In this paper, we consider the efficient computation of $q$-representative sets for {\em product} families ${\cal F}$. A family ${\cal F}$ is a product family if there exist families ${\cal A}$ and ${\cal B}$ such that ${\cal F} = \{A \cup B~:~A \in {\cal A}, B \in {\cal B}, A \cap B = \emptyset\}$.  Our main technical contribution is an algorithm which given ${\cal A}$, ${\cal B}$ and $q$ computes a $q$-representative family ${\cal F}'$ of ${\cal F}$. The running time of our algorithm is {\em sublinear} in $|{\cal F}|$ for many choices of  ${\cal A}$,  ${\cal B}$ and $q$ which occur naturally in several dynamic programming algorithms. We also give an algorithm for the computation of $q$-representative sets for product families ${\cal F}$ in the more general setting where $q$-representation also involves independence in a matroid in addition to disjointness. This algorithm considerably outperforms the naive approach where one first computes ${\cal F}$ from ${\cal A}$ and ${\cal B}$, and then computes the $q$-representative family ${\cal F}'$ from ${\cal F}$.

We give two applications of our new algorithms for computing $q$-representative sets for product families. The first is a $3.8408^kn^{\cO(1)}$ deterministic algorithm for the \textsc{Multilinear Monomial Detection} (\textsc{$k$-MlD}) problem. The second is a significant improvement of deterministic dynamic programming algorithms for ``connectivity problems'' on graphs of bounded treewidth.
 \end{abstract}
\newpage
\setcounter{page}{1}

\section{Introduction}

Let \mat{} be a matroid and let  ${\cal S}=\{S_1, \dots, S_t\}$ be a family of subsets of $E$ of size  $p$. A subfamily $\widehat{\cal{S}}\subseteq \cal S$ is {\em $q$-representative} for $\cal S$ if  for every set $Y\subseteq  E$ of size at most $q$, if there is a set $X \in \cal S$ disjoint from $Y$ with $X\cup Y \in \I$, then there is a set $\widehat{X} \in \widehat{\cal S}$ disjoint from $Y$  with $\widehat{X} \cup  Y \in \I$. By the classical result of Lov{\'a}sz \cite{Lovasz77}, there exists a representative family  \rep{S}{q} with at most  $\binom{p+q}{p}$ sets. 
However,  it is a very non-trivial question how to construct  such a representative family  efficiently. 
It appeared already in the 1980's that representative families can be extremely useful in dynamic programming algorithms and that faster computation of representative families leads to more efficient algorithms. 

Recently, three of the authors in \cite{FominLS13} showed that a $q$-representative family  with at most  \bnoml{p+q}{p} sets can be found in  \tgem \, operations over the field representing the matroid. Here, $\omega<2.373$ is the matrix multiplication exponent.
For the special case of uniform matroids on $n$ elements,  a faster algorithm computing a representative family in time  $\cO((\frac{p+q}{q})^q \cdot 2^{o(p+q)}\cdot t \cdot \log{n})$ was given. The results of Fomin et al.~\cite{FominLS13} improved over previous work by Monien~\cite{Monien85} and Marx~\cite{Marx:2006ys,Marx09}, and led to the fastest known deterministic parameterized algorithms for   \textsc{$k$-Path},  \textsc{$k$-Tree}, and more generally, for  \textsc{$k$-Subgraph Isomorphism}, where the $k$-vertex pattern graph is of constant treewidth \cite{FominLS13}.

All currently known algorithms that use fast computation of representative sets as a subroutine are based on dynamic programming. It is therefore very tempting to ask whether it is possible to compute representative sets faster for families that arise naturally in dynamic programs, than for general families. A class of families which often arises in dynamic programs is the class of {\em product} families; a family ${\cal F}$ is the {\em product} of ${\cal A}$ and ${\cal B}$ if  ${\cal F} = \{A \cup B~:~A \in {\cal A}, B \in {\cal B} \wedge A \cap B = \emptyset\}$. Product families naturally appear in dynamic programs where sets represent partial solutions and two partial solutions can be combined if they are disjoint. For an example, in the $k$-{\sc Path} problem partial solutions are vertex sets of paths starting at a particular root vertex $v$, and two such paths may be combined to a longer path if and only if they are disjoint (except for overlapping at $v$). Many other examples exist---essentially product families can be thought of as a {\em subset convolution}~\cite{BellmanK62,BellmanK62a}, and the wide applicability of the fast subset convolution technique of Bjorklund et al~\cite{BjorklundHKK07}  is largely due to the frequent demand to compute product families in dynamic programs.

\medskip\noindent
\textbf{Our results.}
Our main technical contributions are two algorithms for the computation of representative sets for product  families, one for uniform, and one for linear matroids. 
For uniform matroids we give an algorithm which given an integer $q$ and  families ${\cal A}$, ${\cal B}$ of sets of sizes $p_1$ and $p_2$ over the ground set of size $n$,   computes a $q$-representative family ${\cal F}'$ of ${\cal F}$. The running time of our algorithm is {\em sublinear} in $|{\cal F}|$ for many choices of  ${\cal A}$,  ${\cal B}$ and $q$ which occur naturally in several dynamic programming algorithms. 
For example, let $q$, $p_1$, $p_2$ be integers. Let  $k=q+p_1+p_2$ and suppose that we have  families 
${\cal A}$ and  ${\cal B}$, which are $(k-p_1)$ and $(k-p_2)$-representative families. Then the sizes of these families are roughly 
$|{\cal A}|={{k}\choose{p_1}}$ and $|{\cal B}|={{k}\choose{p_2}}$.  In particular, when $p_1=p_2=\lceil k/2\rceil$  both families are of size roughly $2^k$, and thus the cardinality of ${\cal F}$ is approximately $4^k$.  On the other hand, for any choice of $p_1$,  $p_2$, and $k$, our algorithm outputs a $(k-p_1-p_2)$-representative family of  ${\cal F}$  of size roughly ${{k}\choose{p_1+p_2}}$ in time $3.8408^k n^{\cO(1)}$. For many choices of $p_1$, $p_2$ and $q$ our algorithm runs significantly faster than  $3.8408^k n^{\cO(1)}$. The expression capturing the running time dependence on $p_1$, $p_2$ and $q$ can be found in 
Theorem~\ref{thm:product_uniform} and Corollary~\ref{cor:product_uniform}.

Our second algorithm is for computing representative families of product families, when the universe is also enriched with a linear matroid. More formally, let \mat{} be a matroid and let ${\cal A},  {\cal B}\subseteq {\cal I}$. Then  let
$
 {\cal F} ={\cal A}\bullet {\cal B}=\{A \cup B~: A \cup B\in {\cal I},  A\in  {\cal A}, B \in {\cal B} \text{ and } A \cap B = \emptyset\}.
$
Just as for uniform matroids, a naive approach for computing a representative familiy of ${\cal F} $ would be to compute the product ${\cal A}\bullet {\cal B}$ first and then compute a representative family of the product. The fastest currently known algorithm for computing a representative family is by Fomin et al.~\cite{FominLS13} and has running time approximately ${p+q \choose p}^{\omega-1}|{\cal F}|$. We give an algorithm that significantly outperforms the naive approach. An appealing feature of our algorithm is that it works by reducing the computation of a representative family for ${\cal F}$ to the computation of represesentative families for many smaller families. Thus an improved algorithm for the computation of representative sets for general families will automatically accelerate our algorithm for product families as well. The expression of the running time of our algorithm can be found in Theorem~\ref{thm:repset_product}.

%

%

\medskip\noindent
\textbf{Applications.} 
 Our first application is a deterministic algorithm for the following parameterized version of multilinear monomial testing.

\defparproblem{{\sc Multilinear Monomial Detection ($k$-MlD)}}{An arithmetic circuit $C$ over $\Bbb{Z}^+$ representing a polynomial $P(X)$ over $\Bbb{Z}^+$.}{$k$}{Does  $P (X)$ construed as a sum of monomials contain a multilinear monomial of degree $k$?}

This is the  central problem in the algebraic approach of Koutis and Williams for designing fast parameterized algorithms \cite{Koutis08,Koutis12,KW09,Williams09}.  The idea behind  the approach is to translate a given  problem into the language of algebra by reducing it to the problem of deciding whether a constructed polynomial has a multilinear monomial of degree $k$. As it is mentioned implicitly by Koutis in \cite{Koutis08}, \textsc{$k$-MlD} can be solved in time $(2e)^kn^{\cO(1)}$, where $n$ is the input length,  by making use of  color coding. The color coding technique of Alon, Yuster and Zwick \cite{AlonYZ} is a fundamental and widely used technique in the design of parameterized algorithms. It appeared that most of the  problems solvable by making use of color coding can be reduced to a multilinear monomial testing.  
 Williams  \cite{Williams09} gave a \emph{randomized}  algorithm solving  \textsc{$k$-MlD}  in time $2^kn^{\cO(1)}$.  
The  algorithms based on the algebraic method of Koutis-Williams provide a dramatic  improvement for a number of fundamental problems \cite{bjrklund_et_al:LIPIcs:2013:3919,BjHuKK10,FominLRS12,GuillemotS13,Koutis08,Koutis12,KW09,Williams09}. 
 
%

%

The advantage of the algebraic approach over color coding is that for a number of parameterized problems, the algorithms based on this approach have much better exponential dependence on the parameter.  On the other hand color coding based algorithms admit direct derandomization~\cite{AlonYZ} and are able to handle integer weights with running time overhead poly-logarithmic in the weights. Obtaining deterministic algorithms matching the running times of the algebraic methods, but sharing these nice features of color coding remain a challenging open problem. 
  Our deterministic algorithm for \textsc{$k$-MlD} is the first non-trivial step towards resolving this problem. In fact, our algorithm solves a weighted version of \textsc{$k$-MlD}, where the elements of $X$ are assigned weights and the task is to find a $k$-multilinear term with minimum weight. The running time of our deterministic algorithm is 
  $\cO(3.8408^{k}2^{o(k)}s(C) n \log{W}\log^2{n})$, where $s( C)$ is the size of the circuit and $W$ is the maximum weight of an element from $X$.  We also provide an algorithm for a more general version of multilinear monomial testing, where   variables of  a monomial   should form  an  independent set  of a linear matroid.  

%


  The second application of our fast computation of representative families is for dynamic programming algorithms on graph of bounded treewidth.  
  It is well known that many intractable problems can be solved efficiently when   the input graph has bounded treewidth. Moreover, many fundamental problems like 
\textsc{Maximum Independent Set} or \textsc{Minimum Dominating Set} can be solved in  time $2^{\cO(t)}n$.
  On the other hand, it was believed until very recently  that for some ``connectivity''  problems such as  \textsc{Hamiltonian Cycle} or \textsc{Steiner Tree}  no such algorithm exists. 
In their  breakthrough paper, Cygan et al.  
\cite{Cygan11} introduced a new algorithmic framework called Cut\&Count and used it to obtain  $2^{\cO(t)}n^{\cO(1)}$   time  Monte Carlo algorithms for a number of   connectivity problems.  
Recently, Bodlaender et al.  \cite{BodlaenderCK12} obtained the first  deterministic single-exponential algorithms for these problems using two novel approaches. One of the approaches of Bodlaender et al. is  based on     rank estimations in specific matrices and the second based on matrix-tree theorem and computation of determinants.
In~\cite{FominLS13}, Fomin et al. used efficient algorithms for computing representative families of linear matroids to provide yet another approach for single-exponential algorithms on graphs of bounded treewdith. 

It is interesting to note that  for a number of  connectivity problems such as \textsc{Steiner Tree}  or  \textsc{Feedback Vertex Set}  
 the ``bottleneck'' of treewidth based dynamic programming algorithms is the \emph{join} operation. For example, as it was shown by Bodlaender et al.  in~\cite{BodlaenderCK12}, \textsc{Feedback Vertex Set} and \textsc{Steiner Tree} can be solved in time  $\cO\left((1+2^\omega)^{\pw} \pw^{\cO(1)}  n  \right)$ and $\cO\left((1+2^{\omega+1})^{\tw} \tw^{\cO(1)}  n  \right)$, where $\pw$ and $\tw$ are the pathwidth and the treewidth of the input graph. 
  The reason for  the difference in the  exponents of these two algorithms is   due to the cost of  the join operation, which is required for treewidth and does not occur for pathwidth.   For many computational problems on graphs of bounded treewidth in the join 
nodes of the decomposition, the family of partial solutions is the product of the families of its children, and we wish to store a representative set (for a graphic matroid) for this product family. Here our second algorithm comes into play. By making use of this algorithm one can obtain faster deterministic algorithms for many connectivity problems. We exemplify this by providing algorithms with running time 
 $\cO\left(( 1+ 2^{\omega-1}\cdot 3)^{\tw}{\tw}^{O(1)}n \right)$ for \textsc{Feedback Vertex Set} and \textsc{Steiner Tree}.

\medskip
\noindent
{\bf Our methods.} 
The engine behind our algorithm for the computation of representative sets of product families is a new construction of pseudorandom coloring families. A {\em coloring} of a universe $U$ is simply a function $f : U \rightarrow \{red, blue\}$. Consider a pair of disjoint sets $A$ and $B$, with $|A|=p$ and $|B|=q$. A random coloring which colors each element in $U$ red with probability $\frac{p}{p+q}$ and blue with probability $\frac{q}{p+q}$ will color $A$ red and $B$ blue with probability roughly 
$\frac{1}{{p+q \choose p}}$. Thus a family of slightly more than ${p+q \choose p}$ such random colorings will contain, with high probability, for each pair of disjoint sets $A$ and $B$, with $|A|=p$ and $|B|=q$ a function which colors $A$ red and $B$ blue. The fast computation of representative sets of Fomin et al.~\cite{FominLS13} deterministically constructs a collection of colorings which mimics this property of random coloring families. The colorings in the family are used to {\em witness disjointedness}, since a coloring which colors $A$ red and $B$ blue certifies that $A$ and $B$ are disjoint. In our setting we can use such coloring families both for witnessing disjointedness in the computation of representive sets, and in the computation of ${\cal F} = {\cal A} \bullet {\cal B}$. After all, each set in ${\cal F}$ is the disjoint union of a set in ${\cal A}$ and a set in ${\cal B}$. In order to make this idea work we need to make a deterministic construction of coloring familes which mimics even more properties of random colorings than the construction from~\cite{FominLS13}. We believe that the new construction of coloring families will find applications beyond our algorithm. We demonstrate this by showing how the new construction can be used to speed-up the deterministic algorithm for \textsc{$k$-Path} of Fomin et al.~\cite{FominLS13}  from 
$\cO(2.851^{k} n \log^2 {n})$ to 
  $\cO(2.619^k n \log^2 {n})$.

For linear matroids, our algorithm computes a representative family ${\cal F'}$ of  ${\cal F} = {\cal A} \bullet {\cal B}$ as follows. First the family ${\cal F}$ is broken up into many smaller families ${\cal F}_1, \ldots,  {\cal F}_t$, then a representative family ${\cal F}_i'$ is computed for each ${\cal F}_i$. Finally ${\cal F}'$ is obtained by computing a representative family of $\bigcup_i {\cal F}_i'$ using the algorithm of Fomin et al~\cite{FominLS13} for computing representative families. The speedup over the naive method is due to the fact that (a) $\bigcup_i {\cal F}_i'$ is much smaller than ${\cal F}$ and (b) that each ${\cal F}_i$ has a certain structure which ensures better upper bounds on the size of ${\cal F}_i'$, and allows ${\cal F}_i'$ to be computed faster.


 \section{Preliminaries}\label{sec:prelim}

In this section we give various definitions which we make use of in the paper. 

\medskip

\noindent 
{\bf Graphs.} Let~$G$ be a graph with vertex set $V(G)$ and edge set $E(G)$. A graph~$G'$ is a 
\emph{subgraph} of~$G$ if~$V(G') \subseteq V(G)$ and~$E(G') \subseteq E(G)$. 
The subgraph~$G'$ is called an \emph{induced subgraph} of~$G$ if~$E(G')
 = \{ uv \in E(G) \mid u,v \in V(G')\}$, in this case, $G'$~is also called the subgraph \emph{induced by~$V(G')$} 
and denoted by~$G[V(G')]$. For a vertex set $S$, by $G \setminus S$ we denote $G[V(G) \setminus S]$, 
and by $E(S)$ we denote the edge set $E(G[S])$.  
For an edge set $E'$, we denote $G\setminus E'$ to represent the graph with vertex set $V(G)$ and edge set 
$E(G)\setminus E'$. 
 
%
%
%
%

\medskip
\noindent
{\bf Sets, Functions and Constants.}  
Let $[n]=\{0,\ldots,n-1\}$ and ${[n] \choose i}=\{X~|~X\subseteq [n],~|X|=i\}$. Furthermore for any ground set $U$, 
we  use $2^{U}$  to denote the family of all subsets of $U$. 
We call a function $f~:~2^{U}\rightarrow \mathbb{N}$, {\em additive} if for any subsets $X$ and $Y$ of $U$ we have 
that $f(X)+f(Y)=f(X\cup Y)-f(X\cap Y)$. 


A monomial $Z=x_1^{s_1}\cdots x_n^{s_n}$ of a polynomial $P(x_1,\ldots,x_n)$ is called {\em multilinear} if 
$s_i\in \{0,1\}$ for all $i\in \{1,\ldots,n\}$. We say a monomial $Z=x_1^{s_1}\cdots x_n^{s_n}$ as 
{\em $k$-multilinear term}, if $Z$ is multilinear and $\sum_{i=1}^n s_i=k$.
Throughout the paper we use $\omega$ to denote the matrix multiplication exponent. The current best known bound on 
$\omega<2.373$~\cite{Williams12}.    


\subsection{Matroids and Representative Family}
In the next few subsections we give definitions related to matroids and representative family. 
For a broader overview on matroids we refer to~\cite{oxley2006matroid}. 
\begin{definition}
A pair \mat, where $E$ is a ground set and $\cal I$ is a family of subsets (called independent sets) of $E$, is a {\em matroid} if it satisfies the following conditions:
 \begin{enumerate}
 \item[\rm (I1)]  $\phi \in \cal I$. 
 \item[\rm (I2)]  If $A' \subseteq A $ and $A\in \cal I$ then $A' \in  \cal I$. 
 \item[\rm (I3)] If $A, B  \in \cal I$  and $ |A| < |B| $, then $\exists ~e \in  (B \setminus A) $  such that $A\cup\{e\} \in \cal I$.
 \end{enumerate}
\end{definition}
The axiom (I2) is also called the hereditary property and a pair $(E,\cal I)$  satisfying  only (I2) is called hereditary family.  
An inclusion wise maximal set of $\cal I$ is called a {\em basis} of the matroid. Using axiom (I3) it is easy to show that all the bases of a matroid  
have the same size. This size is called the {\em rank} of the matroid $M$, and is denoted by \rank{M}. 
The {\em uniform matroids} are among the simplest examples of matroids. A pair \mat{} over an $n$-element ground 
set $E$, is called a uniform matroid if the family of independent sets  is given by 
${\cal I}=\{A\subseteq E~|~|A|\leq k\}$, where $k$ is some constant. This matroid is also denoted as $U_{n,k}$.
 \subsection{Linear Matroids and Representable Matroids} 
 Let $A$ be a matrix over an arbitrary field $\mathbb F$ and let $E$ be the set of columns of $A$. Given $A$ we define the matroid 
 \mat{} as follows. A set $X \subseteq E$ is independent (that is $X\in \cal I$) if the corresponding columns are linearly independent over $\mathbb F$.  
The matroids that can be defined by such a construction are called {\em linear matroids}, and if a matroid can be defined by a matrix $A$ over a 
field $\mathbb F$, then we say that the matroid is representable over $\mathbb F$. 
That is, a matroid \mat{} of rank $d$ is representable over a field $\mathbb F$ if there exist vectors in 
$\mathbb{F}^d$ correspond to the elements such that linearly independent sets of vectors 
correspond to independent sets of the matroid. A matroid \mat{}  is called {\em representable} or {\em linear} if 
it is representable over some field $\mathbb F$. 

 \subsection{Graphic Matroids}  
 Given a graph $G$, a graphic matroid \mat{} is defined by taking elements as edges of $G$ (that is $E=E(G)$) and $F\subseteq E(G)$ is in $\cal I$ if it forms a spanning forest in the  graph $G$.  The graphic matroid is representable over any field of size at least $2$. 
 Consider the matrix $A_M$  with a row for each vertex $i \in V(G)$ and a column for each edge $e = ij\in E(G)$. In the column corresponding to 
 $e=ij$, all entries are $0$, except for a $1$ in $i$ or $j$ (arbitrarily) and a $-1$ in the other. This is a representation over reals. To obtain a representation over a field $\mathbb F$, one simply needs to take the representation given above over reals and simply replace all $-1$ 
 by the additive inverse of $1$ 

 \begin{proposition}[\cite{oxley2006matroid}]
\label{prop:graphicrep}
Graphic matroids are representable over any field of size at least $2$.
\end{proposition}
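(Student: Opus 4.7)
The plan is to verify directly that the matrix $A_M$ described just above the proposition is a representation of the graphic matroid of $G$ over any field $\mathbb{F}$ with $|\mathbb{F}|\geq 2$. Concretely, I would show that a subset $F\subseteq E(G)$ is acyclic (i.e., forms a forest) if and only if the columns of $A_M$ indexed by $F$ are linearly independent over $\mathbb{F}$. The argument rests on the local structure of $A_M$: every column has exactly two nonzero entries, a $1$ and the additive inverse of $1$, in the rows indexed by the two endpoints of the corresponding edge. Over $\mathbb{F}_2$ these two entries coincide and $A_M$ reduces to the usual unsigned vertex--edge incidence matrix, but the argument goes through uniformly in all cases.

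For the direction ``forest $\Rightarrow$ columns independent'', I would induct on $|F|$. Any nonempty forest contains a leaf, i.e., a vertex $v$ of degree $1$ in the subgraph $(V(F),F)$. In the row of $A_M$ indexed by $v$, exactly one column among $\{A_M[\cdot,e]:e\in F\}$ is nonzero, namely the column of the unique edge $e_v\in F$ incident to $v$. Hence any vanishing linear combination $\sum_{e\in F}\alpha_e\,A_M[\cdot,e]=0$ forces $\alpha_{e_v}=0$; deleting $e_v$ leaves a smaller forest, and the induction hypothesis gives $\alpha_e=0$ for all $e\in F$.

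For the converse ``contains a cycle $\Rightarrow$ columns dependent'', let $v_1e_1v_2e_2\cdots v_ke_kv_1$ be a cycle in $F$. I would construct an explicit nontrivial relation $\sum_{i=1}^{k}\alpha_i\,A_M[\cdot,e_i]=0$ with $\alpha_i\in\{+1,-1\}$, built greedily along the cycle: fix $\alpha_1$, and for $j=1,\dots,k-1$ choose $\alpha_{j+1}$ so that the two cycle-edges $e_j$ and $e_{j+1}$ meeting at $v_{j+1}$ cancel in row $v_{j+1}$. This is always possible because both relevant entries are $\pm 1$. The one point that is not automatic is that the cancellation also closes up at $v_1$, and this comes down to a short telescoping identity using $s^2=1$ for every $s\in\{+1,-1\}$. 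In characteristic $2$ the same construction degenerates to the all-ones combination of the cycle columns, which vanishes because each cycle vertex accumulates $1+1=0$.

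The main subtlety, therefore, is the closing-up condition at $v_1$ in the cycle direction, uniformly in the characteristic. Once that short computation is in place, both implications are established and the proposition follows; the fact that the construction uses only $0$, $1$, and the additive inverse of $1$ means that it makes sense, and works, over every field of size at least $2$.
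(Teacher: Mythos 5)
Your proof is correct: the leaf-induction for ``forest $\Rightarrow$ independent columns'' and the sign-propagation along a cycle are exactly the standard verification that the signed incidence matrix described just before the proposition represents the graphic matroid, and the closing-up condition at $v_1$ that you flag is indeed automatic, since the two nonzero entries of each column are additive inverses of one another, so the choice $\alpha_i = s_{e_i}(v_{i+1})$ (the entry of the $i$-th cycle column at the next vertex of the traversal) cancels at every cycle vertex, including $v_1$, and degenerates to the all-ones relation in characteristic~$2$. The paper itself offers no proof---it cites Oxley and only describes the matrix---so your argument is the standard textbook verification underlying that citation rather than a departure from the paper's approach.
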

 
%

\subsection{Representative Family}

In this section we define $q$-representative family of a given family and state Theorems~\cite{FominLS13} regarding 
its compuation.  
\begin{definition}[{\bf $q$-Representative Family}~\cite{FominLS13}]
Given a matroid  \mat{} and a family $\cal S$ of subsets of $E$, we say that a subfamily 
$\widehat{\cal{S}}\subseteq \cal S$ is {\em $q$-representative} for $\cal S$ if the following holds: for every 
set $Y\subseteq  E$ of size at most $q$, if there is a set $X \in \cal S$ disjoint from $Y$ with $X\cup Y \in \I$, 
then there is a set $\whnd{X} \in \whnd{\cal S}$ disjoint from $Y$ with $\whnd{X} \cup  Y \in \I$.  
If $\widehat{\cal S} \subseteq {\cal S}$ is $q$-representative for ${\cal S}$ we write \rep{S}{q}. 
\end{definition}

In other words if some independent set in $\cal S$ can be extended to a larger independent set by $q$ new elements, 
then there is a set in $\widehat{\cal S}$ that can be extended by the same $q$ elements. A weighted variant of 
$q$-representative families  is defined as follows. It is useful for solving problems where we are looking for 
objects of maximum or minimum weight.
\begin{definition}[{\bf Min/Max $q$-Representative Family}~\cite{FominLS13}]
Given a matroid  \mat{}, a family $\cal S$ of subsets of $E$ and a non-negative weight function \wf{} we say that a 
subfamily $\widehat{\cal{S}}\subseteq \cal S$ is {\em min $q$-representative} ({\em max $q$-representative}) for 
$\cal S$ if the following holds: for every set $Y\subseteq  E$ of size at most $q$, if there is a set 
$X \in \cal S$ disjoint from $Y$ with $X\cup Y \in \I$, then there is a set $\whnd{X} \in \whnd{\cal S}$ disjoint 
from $Y$ with 
\begin{enumerate}
\item  $\whnd{X} \cup  Y \in \I$; and 
\item $w(\whnd{X} )\leq w(X)$ ($w(\whnd{X} )\geq w(X)$).  
\end{enumerate}
We use  \minrep{S}{q} (\maxrep{S}{q}) to denote a  min $q$-representative (max $q$-representative) family for 
$\cal S$. 
\end{definition}
 \begin{definition}
Given two families of independent sets ${\cal L}_1$ and ${\cal L}_2$ of a matroid \mat{}, we define 
$${\cal L}_1 \bullet {\cal L}_2 = \{X \cup Y ~|~ X\in {\cal L}_1 \wedge Y\in {\cal L}_2 \wedge X \cap Y = \emptyset\wedge X \cup Y\in {\cal I}\}.$$ 
\end{definition}
For normal set families ${\cal A}$ and ${\cal B}$ (in uniform matroid), note that 
${\cal A} \bullet {\cal B} = \{X \cup Y ~|~ X\in {\cal A} \wedge Y\in {\cal B} \wedge X \cap Y = \emptyset\}.$ 

We say that a family  $\cS = \{S_1,\ldots, S_t\}$ of independent sets is a {\em $p$-family} if each set in $\cal S $ 
is of size $p$. 
We state three lemmata providing basic results about representative family. 
These lemmata works for weighted variant representative family.   
%
\begin{lemma}[\cite{FominLS13}]
\label{lem:reptransitive}
Let \mat{} be a matroid and $\cal S$ be a family of subsets of $E$. If  ${\cal S}' \subseteq_{rep}^{q} {\cal S}$  and $\widehat{{\cal S}}\subseteq_{rep}^{q} {\cal S}'$, then   \rep{S}{q}.   
\end{lemma}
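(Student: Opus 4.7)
The plan is to prove transitivity by a direct two-step chaining argument straight from the definition of $q$-representative family. Fix an arbitrary set $Y \subseteq E$ with $|Y| \leq q$, and suppose there exists $X \in {\cal S}$ with $X \cap Y = \emptyset$ and $X \cup Y \in {\cal I}$. The goal is to exhibit some $\widehat{X} \in \widehat{\cal S}$ disjoint from $Y$ satisfying $\widehat{X} \cup Y \in {\cal I}$.

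First I would invoke the hypothesis ${\cal S}' \subseteq_{rep}^{q} {\cal S}$ on the pair $(X, Y)$. Since $X \in {\cal S}$, $X \cap Y = \emptyset$, $X \cup Y \in {\cal I}$, and $|Y| \leq q$, the definition directly yields a set $X' \in {\cal S}'$ with $X' \cap Y = \emptyset$ and $X' \cup Y \in {\cal I}$. The crucial observation is that this $X'$ is now in precisely the form required to serve as the "input set" for the second hypothesis.

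Next I would apply $\widehat{\cal S} \subseteq_{rep}^{q} {\cal S}'$ to the pair $(X', Y)$. Since $X' \in {\cal S}'$, $X' \cap Y = \emptyset$, $X' \cup Y \in {\cal I}$, and $|Y| \leq q$, the definition yields some $\widehat{X} \in \widehat{\cal S}$ with $\widehat{X} \cap Y = \emptyset$ and $\widehat{X} \cup Y \in {\cal I}$, which is exactly the conclusion we need. As $Y$ was arbitrary of size at most $q$, this establishes $\widehat{\cal S} \subseteq_{rep}^{q} {\cal S}$.

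For the weighted variants, the same chaining works with an extra line of bookkeeping: in the min-representative case the first step gives $w(X') \leq w(X)$ and the second gives $w(\widehat{X}) \leq w(X')$, so $w(\widehat{X}) \leq w(X)$; the max case is symmetric. Since the proof is essentially definitional, there is no real technical obstacle. The only thing one must be careful about is ensuring that the intermediate witness $X'$ produced in the first application is genuinely of the right shape (element of ${\cal S}'$, disjoint from $Y$, extending to an independent set) to be fed into the second application, which is exactly what the definition of $\subseteq_{rep}^{q}$ delivers.
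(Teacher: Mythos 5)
Your chaining argument is correct and is exactly the standard definitional proof of this lemma (the paper itself only cites \cite{FominLS13} for it, and the proof there proceeds the same way, including the weighted bookkeeping you note). The only cosmetic omission is the remark that $\widehat{\cal S}\subseteq{\cal S}$, which is immediate from $\widehat{\cal S}\subseteq{\cal S}'\subseteq{\cal S}$.
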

%
%
\begin{lemma}[\cite{FominLS13}]
\label{lem:repunion}
Let \mat{} be a matroid and $\cal S$ be a family of subsets of $E$. If  ${\cal S}={\cal S}_1\cup \cdots \cup {\cal S}_\ell$ and 
$\widehat{\cal{S}}_i\subseteq_{rep}^q {\cal S}_i$,  then $\cup_{i=1}^\ell \widehat{\cal{S}}_i \subseteq_{rep}^{q} {\cal S}$.   
\end{lemma}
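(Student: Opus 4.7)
The plan is to prove this directly from the definition of $q$-representative family by a one-step unwinding: unpack the hypothesis on the union, locate the relevant piece $\mathcal{S}_i$, and invoke the representative property on that piece.

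Concretely, I would fix an arbitrary $Y \subseteq E$ with $|Y| \le q$ and assume there exists some $X \in \mathcal{S}$ with $X \cap Y = \emptyset$ and $X \cup Y \in \mathcal{I}$; the goal is to produce $\widehat{X} \in \bigcup_{i=1}^\ell \widehat{\mathcal{S}}_i$ witnessing the same extension property. Since $\mathcal{S} = \mathcal{S}_1 \cup \cdots \cup \mathcal{S}_\ell$, there is an index $i \in \{1, \dots, \ell\}$ with $X \in \mathcal{S}_i$. This $X$ is still disjoint from $Y$ and still satisfies $X \cup Y \in \mathcal{I}$, so it is a witness to the representative property of $\widehat{\mathcal{S}}_i \subseteq_{rep}^q \mathcal{S}_i$ applied to $Y$. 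That property yields $\widehat{X} \in \widehat{\mathcal{S}}_i$ with $\widehat{X} \cap Y = \emptyset$ and $\widehat{X} \cup Y \in \mathcal{I}$. Because $\widehat{\mathcal{S}}_i \subseteq \bigcup_{j=1}^\ell \widehat{\mathcal{S}}_j$, the same $\widehat{X}$ witnesses that $\bigcup_j \widehat{\mathcal{S}}_j$ is $q$-representative for $\mathcal{S}$.

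For the weighted variants (min and max $q$-representative), the argument is identical except that one additionally carries the weight inequality $w(\widehat{X}) \le w(X)$ (respectively $\ge$) supplied by the hypothesis $\widehat{\mathcal{S}}_i \subseteq_{minrep}^q \mathcal{S}_i$ (respectively $\subseteq_{maxrep}^q$); no new step is needed.

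There is really no main obstacle here: the lemma is a pigeonhole-style observation, and the only thing to be careful about is to apply the representative property of the correct piece $\widehat{\mathcal{S}}_i$ corresponding to the piece $\mathcal{S}_i$ that contains the witness $X$, rather than attempting to combine information across the $\widehat{\mathcal{S}}_j$.
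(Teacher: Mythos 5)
Your proof is correct and is essentially the standard argument (the paper itself only cites~\cite{FominLS13} for this lemma, where the same one-step unwinding of the definition is used): locate the part $\mathcal{S}_i$ containing the witness $X$, apply $\widehat{\mathcal{S}}_i\subseteq_{rep}^q\mathcal{S}_i$, and note that the resulting $\widehat{X}$ lies in the union. Your remark on carrying the weight inequality for the min/max variants is also exactly how the weighted case is handled.
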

%
%
\begin{lemma}[\cite{FominLS13}]
\label{lem:repconvolution}
Let \mat{} be a matroid of rank $k$ and ${\cal S}_1$ be a $p_1$-family of independent sets,  ${\cal S}_2$ be a $p_2$-family of independent sets,  $\widehat{\cal{S}}_1 \subseteq_{rep}^{k-p_1} {\cal S}_1$ and  $\widehat{\cal{S}}_2 \subseteq_{rep}^{k-p_2} {\cal S}_2$. Then $\widehat{\cal{S}}_1 \bullet \widehat{\cal{S}}_2 \subseteq_{rep}^{k-p_1-p_2} {\cal S}_1 \bullet {\cal S}_2$.
\end{lemma}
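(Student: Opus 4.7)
The plan is to prove the lemma by a two-step replacement argument, peeling off one factor at a time. Fix an arbitrary $Y \subseteq E$ with $|Y| \leq k - p_1 - p_2$, and suppose there exists $Z \in \mathcal{S}_1 \bullet \mathcal{S}_2$ with $Z \cap Y = \emptyset$ and $Z \cup Y \in \mathcal{I}$. By the definition of $\bullet$, write $Z = A \cup B$ where $A \in \mathcal{S}_1$, $B \in \mathcal{S}_2$, $A \cap B = \emptyset$, and $A \cup B \in \mathcal{I}$. The goal is to produce $\widehat{A} \in \widehat{\mathcal{S}}_1$ and $\widehat{B} \in \widehat{\mathcal{S}}_2$ that can together play the role of $A$ and $B$.

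For the first step, I would set $Y_1 = B \cup Y$. Since $|B| = p_2$ and $|Y| \leq k - p_1 - p_2$, we have $|Y_1| \leq k - p_1$. Also $A \cap Y_1 = \emptyset$ (as $A \cap B = \emptyset$ and $A \cap Y = \emptyset$), and $A \cup Y_1 = A \cup B \cup Y \in \mathcal{I}$. Invoking $\widehat{\mathcal{S}}_1 \subseteq_{rep}^{k-p_1} \mathcal{S}_1$ with the set $Y_1$, one obtains $\widehat{A} \in \widehat{\mathcal{S}}_1$ with $\widehat{A} \cap Y_1 = \emptyset$ and $\widehat{A} \cup Y_1 \in \mathcal{I}$. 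In particular $\widehat{A} \cup B \cup Y \in \mathcal{I}$ and $\widehat{A}$ is disjoint from both $B$ and $Y$.

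For the second step, set $Y_2 = \widehat{A} \cup Y$. Since $|\widehat{A}| = p_1$ and $|Y| \leq k - p_1 - p_2$, we have $|Y_2| \leq k - p_2$. From the first step we know $B \cap Y_2 = \emptyset$ and $B \cup Y_2 \in \mathcal{I}$. Applying $\widehat{\mathcal{S}}_2 \subseteq_{rep}^{k-p_2} \mathcal{S}_2$ with $Y_2$ yields $\widehat{B} \in \widehat{\mathcal{S}}_2$ with $\widehat{B} \cap Y_2 = \emptyset$ and $\widehat{B} \cup Y_2 \in \mathcal{I}$. Then $\widehat{A} \cap \widehat{B} = \emptyset$ and $\widehat{A} \cup \widehat{B} \cup Y \in \mathcal{I}$, so $\widehat{A} \cup \widehat{B}$ is a legitimate member of $\widehat{\mathcal{S}}_1 \bullet \widehat{\mathcal{S}}_2$, disjoint from $Y$, whose union with $Y$ is independent. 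This is exactly what is required.

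The only subtle point is keeping track of cardinalities so the representation hypotheses actually apply: both $|Y_1|$ and $|Y_2|$ must be checked against the correct thresholds, which is where the assumption $|Y| \leq k - p_1 - p_2$ is used tightly. There is nothing deeper to overcome; the argument is essentially a bookkeeping exercise. For the weighted min/max variant mentioned in the definitions, the same reduction applies verbatim with an added monotonicity chain $w(\widehat{A}) \leq w(A)$ and $w(\widehat{B}) \leq w(B)$ (respectively $\geq$), which combine to $w(\widehat{A} \cup \widehat{B}) \leq w(A \cup B)$ whenever the weight function is additive on disjoint unions.
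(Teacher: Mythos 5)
Your proof is correct and is exactly the standard two-step peeling argument (replace $A$ using $Y_1=B\cup Y$, then replace $B$ using $Y_2=\widehat{A}\cup Y$, with the cardinality checks you note); the paper does not reprove this lemma but cites it from Fomin et al., where the same argument is used. The only point worth stating explicitly is that $\widehat{A}\cup\widehat{B}\in\mathcal{I}$ follows from $\widehat{A}\cup\widehat{B}\cup Y\in\mathcal{I}$ by the hereditary property, so membership in $\widehat{\cal{S}}_1\bullet\widehat{\cal{S}}_2$ is indeed legitimate.
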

%
%

\begin{theorem}[\cite{FominLS13}]
\label{thm:repsetlovaszweighted}
Let \mat{}   be a linear matroid of rank $p+q=k$, $ \cS = \{S_1,\ldots, S_t\}$ be a $p$-family of independent sets and 
\wf{} be a non-negative weight function. Then there exists \minrep{S}{q} (\maxrep{S}{q}) of size \bnoml{p+q}{p}.  
Moreover, 
given a representation \repmat{M}  of $M$ over a field $ \mathbb{F}$, we can find  \minrep{S}{q} (\maxrep{S}{q}) of size at most   \bnoml{p+q}{p} in   \tgem \, operations over $ \mathbb{F}$. 
 \end{theorem}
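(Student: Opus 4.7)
The proof translates the combinatorial condition of $q$-representation into linear algebra via the Plücker embedding, and then invokes the weighted matroid greedy on the resulting vector matroid. Fix a matrix representation $A_M$ of $M$ over $\mathbb{F}$; replacing it by its row-echelon form, we may assume it has exactly $p+q$ rows. For each $S_i\in\cS$, let $v_i\in\mathbb{F}^{\binom{p+q}{p}}$ be the vector whose coordinates are the $\binom{p+q}{p}$ maximal minors of the $(p+q)\times p$ submatrix $A_M[S_i]$; for each $q$-subset $Y\subseteq E$, let $u_Y\in\mathbb{F}^{\binom{p+q}{q}}$ be the analogous vector of minors of $A_M[Y]$. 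By the Laplace/Cauchy--Binet expansion of $\det A_M[S_i\cup Y]$ along the columns of $S_i$ versus the columns of $Y$, there is a signed bilinear pairing $\langle\cdot,\cdot\rangle$ (matching each $p$-subset of rows with its complement) such that $S_i\cup Y\in\I$ if and only if $\langle v_i,u_Y\rangle\neq 0$.

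\textbf{Existence and weight.} The key observation is that if $\{v_i : S_i\in\widehat{\cS}\}$ spans $\mathrm{span}\{v_1,\dots,v_t\}$, then $\widehat{\cS}$ is $q$-representative: writing $v_j=\sum_{S_i\in\widehat{\cS}}\alpha_i v_i$ gives $\langle v_j,u_Y\rangle=\sum_i\alpha_i\langle v_i,u_Y\rangle$, so some $\langle v_i,u_Y\rangle$ is nonzero whenever the left-hand side is. Taking a basis yields a representative family of size at most $\binom{p+q}{p}$, proving the existential bound. For the min (resp.\ max) variant I would run the classical matroid greedy on the vector matroid of $\{v_i\}$: sort indices by nondecreasing (resp.\ nonincreasing) weight $w(S_i)$ and keep $v_i$ whenever it is linearly independent of the vectors retained so far. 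The output still spans the whole span and hence is $q$-representative; the weight inequality follows from a standard exchange argument, namely if $v_{S^*}$ is rejected then at that moment $v_{S^*}=\sum_i\alpha_i v_{\widehat{S}_i}$ with each $w(\widehat{S}_i)\le w(S^*)$, so $\langle v_{S^*},u_Y\rangle\neq 0$ forces some $\langle v_{\widehat{S}_i},u_Y\rangle\neq 0$ with $w(\widehat{S}_i)\le w(S^*)$.

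\textbf{Algorithm and obstacle.} Each $v_i$ is computed by evaluating all $\binom{p+q}{p}$ maximal minors of a $(p+q)\times p$ matrix, which via shared-LU precomputation costs $\cO(\binom{p+q}{p}p^\omega)$ per set and $\cO(\binom{p+q}{p}tp^\omega)$ in total. The greedy is implemented by maintaining the partial basis in reduced-row-echelon form inside $\mathbb{F}^{\binom{p+q}{p}}$; using block LU updates and fast rectangular matrix multiplication, each independence test plus insertion amortizes to $\cO(\binom{p+q}{p}^{\omega-1})$, adding $\cO(t\binom{p+q}{p}^{\omega-1})$ across all $t$ sets. Summing the two terms yields the claimed bound \tgem. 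The step I expect to be most delicate is the weight-respecting greedy: one must rule out that greedy keeps a heavy $v_{\widehat{S}}$ pairing nontrivially with some $u_Y$ while a strictly lighter unkept $v_S$ also pairs nontrivially with the same $u_Y$, uniformly in $Y$. The exchange identity above handles this cleanly once formulated, but it requires care to make it work against arbitrary $Y$ rather than a single fixed target. Achieving the per-insertion cost $\cO(\binom{p+q}{p}^{\omega-1})$ rather than the naive $\cO(\binom{p+q}{p}^2)$ also requires batching insertions and invoking fast matrix multiplication, though this is a by-now-standard linear-algebra trick.
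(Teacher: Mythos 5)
This theorem is imported from~\cite{FominLS13} and not reproved in the paper, and your argument is essentially the proof given there: the wedge/Pl\"ucker embedding of each $p$-set into $\mathbb{F}^{\binom{p+q}{p}}$, the signed Cauchy--Binet/Laplace pairing identifying $\langle v_i,u_Y\rangle$ with $\det A_M[S_i\cup Y]$, a weight-ordered basis extracted by fast Gaussian elimination, and the exchange argument giving the min/max variant within the stated $\cO\bigl(\binom{p+q}{p}tp^\omega+t\binom{p+q}{p}^{\omega-1}\bigr)$ bound. The only details left implicit are routine: sets $Y$ of size strictly less than $q$ reduce to the size-exactly-$q$ case by extending $X\cup Y$ to a $k$-element independent set (possible since $M$ has rank $k$), and a nonzero pairing automatically forces $\widehat{X}\cap Y=\emptyset$ because a shared element yields a repeated column and hence a zero determinant.
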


%
\begin{theorem}[\cite{FominLS13}]
\label{thm:repsetweighted} 
There is an algorithm that given a $p$-family ${\cal A}$ of sets over a universe $U$ of size
$n$, an integer $q$, and a non-negative weight function $w~:~{\cal A}\rightarrow {\mathbb N}$ with maximum value 
at most $W$, computes in time 
$\cO\left(|{\cal A}|\cdot \left(\frac{p+q}{q}\right)^q\cdot\log n +|{\cal A}|\cdot \log |{\cal A}|\cdot\log W\right)$ 
a subfamily $\widehat{\cal A}\subseteq {\cal A}$ such that 
$|\widehat{\cal A}|\leq {p+q \choose q}\cdot 2^{o(p+q)}\cdot\log n$ and \minrep{A}{q} (\maxrep{A}{q})
\end{theorem}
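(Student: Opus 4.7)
The plan is to derandomize the following random process. Sample a coloring $f: U \to \{0,1\}$ in which each element is independently labelled $0$ with probability $p/(p+q)$ and $1$ with probability $q/(p+q)$. For any fixed disjoint pair $(A, Y)$ with $|A|=p$ and $|Y|=q$, the probability that $f$ labels all of $A$ with $0$ and all of $Y$ with $1$ is (by Stirling) at least $\binom{p+q}{p}^{-1}/\sqrt{p+q}$. So if one draws $m = O(\binom{p+q}{p}\cdot\sqrt{p+q}\cdot\log n)$ such colorings and, for each one, keeps the minimum-weight $A \in {\cal A}$ whose elements are all labelled $0$, the resulting subfamily is, with high probability, a min $q$-representative family of size at most $m$. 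This already yields the desired output size; the remaining tasks are derandomization and realizing the stated running time.

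The derandomization uses an explicit $(n,p,q)$-\emph{separating family} $\Lambda$: a family of $\{0,1\}$-valued colorings of $U$ such that for every disjoint pair $(A, Y)$ with $|A|=p$ and $|Y|=q$, some $f \in \Lambda$ satisfies $f(A) \subseteq \{0\}$ and $f(Y) \subseteq \{1\}$. The Naor--Schulman--Srinivasan construction, obtained by composing an explicit $(n,p+q)$-splitter with an enumeration of the $\binom{p+q}{p}$ bipartitions of $[p+q]$ into $p$ zeros and $q$ ones, yields such a $\Lambda$ of size $\binom{p+q}{p}\cdot 2^{o(p+q)}\log n$ in time polynomial in the output. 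Given $\Lambda$, the algorithm returns $\widehat{\cal A} = \{A^*_f : f \in \Lambda\}$, where $A^*_f$ is, whenever it exists, a minimum-weight set in $\{A \in {\cal A} : f(A) \subseteq \{0\}\}$. Correctness is immediate: given $A \in {\cal A}$ and a disjoint $Y$ with $|Y|\le q$, pad $Y$ to a set $Y' \subseteq U \setminus A$ of size exactly $q$ (we may assume $|U|\ge p+q$, otherwise the statement is trivial), invoke the separating property on $(A, Y')$ to produce $f \in \Lambda$, and observe that $A^*_f$ is defined, weighs at most $w(A)$, and is disjoint from $Y' \supseteq Y$.

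The main obstacle is matching the running time, in particular getting the factor $((p+q)/q)^q$ rather than $\binom{p+q}{p}$. A naive scan over $\Lambda \times {\cal A}$ costs $\Theta(|\Lambda|\cdot|{\cal A}|)$, which is too large when $p$ is much bigger than $q$. My plan is to exploit the product structure of $\Lambda$, writing $\Lambda = \{\pi_C \circ h : h \in H,\, C \in \binom{[p+q]}{p}\}$, where $H$ is an explicit perfect hash family from $U$ to $[p+q]$ of size $2^{o(p+q)}\log n$ and $\pi_C$ zeroes out exactly the coordinates in $C$. For each $A \in {\cal A}$ and each $h \in H$, the only candidate partition is $C = h(A)$, so we update only $|H|$ buckets per set $A$, not all $|\Lambda|$. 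A running minimum per bucket $(h, C)$ maintained via a priority queue keyed by the $\log W$-bit weights contributes the $|{\cal A}|\log|{\cal A}|\log W$ term. The bound $|{\cal A}|\cdot((p+q)/q)^q\log n$ for the first term then follows from a tighter accounting of $|H|$, instantiating the NSS construction asymmetrically so that its size scales with $((p+q)/q)^q$ rather than with $\binom{p+q}{q}$; this final accounting is the delicate technical step, while the representation property itself drops out of the separating-family construction essentially for free.
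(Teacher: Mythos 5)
The overall shape of your argument (biased colorings witness disjointness; keep one minimum-weight set per coloring; derandomize via an explicit separating family built from hash functions composed with partitions) is indeed the spirit of the construction in~\cite{FominLS13}, which this paper does not reprove but generalizes in Section~3. However, your proposal has a genuine gap at exactly the point you defer. You assume an explicit $(p+q)$-perfect hash family $H$ from $U$ into a range of size exactly $p+q$ with $|H| = 2^{o(p+q)}\log n$. No such family exists: a fixed function into $[p+q]$ is injective on a fixed $(p+q)$-set with probability $(p+q)!/(p+q)^{p+q} \approx e^{-(p+q)}$, so any such family has size $2^{\Omega(p+q)}$ (roughly $e^{p+q}$, strictly larger than $\binom{p+q}{p}$). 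The small explicit families (Alon--Yuster--Zwick, and Proposition~\ref{prop:hashFun} here) map into a range of size $(p+q)^2$, but then your enumeration step ranges over $\binom{(p+q)^2}{p}$ partitions, which destroys both the size bound $\binom{p+q}{p}2^{o(p+q)}\log n$ and the running time. With the non-existent $H$, your bucket count $|H|\cdot\binom{p+q}{p}$ and per-set work $|H|$ are both off by an exponential factor.

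The second, related problem is that the decisive trade-off --- a collection of total size about $\binom{p+q}{p}\cdot 2^{o(p+q)}\log n$ in which every $p$-set is contained in only $\bigl(\frac{p+q}{q}\bigr)^{q}\cdot 2^{o(p+q)}\log n$ members (this ``degree'' is what the $\bigl(\frac{p+q}{q}\bigr)^q$ factor in the running time comes from) --- is precisely what your sketch does not deliver. In your $\pi_C\circ h$ format the number of candidate colorings per set equals $|H|$, and no ``asymmetric instantiation'' of the enumeration changes that, because skewing the choice of $C$ only changes which single bucket each injective $h$ contributes. In~\cite{FominLS13} (and in the generalized form proved here as Lemma~\ref{lem:twin_sep_coll_construction} via Lemmas~\ref{lem:twin_sep_coll_brute_force}, \ref{lem:twinreduceUniverse} and~\ref{lem:splitSolution}) the collection is instead built from $x$-biased random-like subsets with $x=p/(p+q)$, whose total size scales as $x^{-p}(1-x)^{-q}\approx\binom{p+q}{p}$ while the degree of a $p$-set scales only as $(1-x)^{-q}=\bigl(\frac{p+q}{q}\bigr)^q$; making this explicit requires the splitter step that shrinks $p+q$ to polylogarithmic size so a brute-force search is affordable, followed by the universe reduction. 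So the part you call ``the delicate technical step'' is not an accounting refinement of your construction --- it is the actual proof, and your proposed route to it does not work as stated. (A minor further point: to get the $|{\cal A}|\log|{\cal A}|\log W$ term you should sort ${\cal A}$ by weight once and fill buckets first-come-first-served; maintaining per-bucket minima with comparisons inside the main loop would instead cost $|{\cal A}|\cdot|H|\cdot\log W$.)
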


    \section{Representative set computation for product families}


In this section we design a faster algorithm to find $q$-representative family for product families. Our main technical tool is a generalization of {\em $n$-$p$-$q$-separating collection} defined in~\cite{FominLS13} to compute  $q$-representative families of an arbitrary family. In fact we design a {\em family} of $n$-$p$-$q$-separating collections of various sizes governed by a parameter $0<x<1$.  The construction of generalized $n$-$p$-$q$-separating collection  is similar to the proof given in~\cite{FominLS13}. However, the new construction  requires some additional ideas and the proof is slightly more involved. Finally, we combine two $n$-$p$-$q$-separating collections obtained with different parameters to obtain the desired algorithm for product families. 

\subsection{Generalized $n$-$p$-$q$-separating collection}

We start with the formal definition of {\em generalized $n$-$p$-$q$-separating collection}.

\begin{definition}
\label{def:twincollection}
A generalized $n$-$p$-$q$-separating collection ${\cal C}$ is a tuple $({\cal F}, \chi, \chi')$, where ${\cal F}$ is a family of sets over a universe $U$ of size $n$, $\chi$ is a function from 
$\bigcup_{p'\leq p}{U\choose p'}$ to $2^{\cal F}$ and $\chi'$ is a function from $\bigcup_{q'\leq q}{U\choose q'}$ to $2^{\cal F}$ such that the following properties are satisfied
\begin{enumerate}
 \item for every $A\in \bigcup_{p'\leq p}{U\choose p'}$ and $F \in \chi(A)$, $A \subseteq F$,
 \item for every $B\in \bigcup_{q'\leq q}{U\choose q'}$ and $F \in \chi'(B)$, $F\cap B=\emptyset$, 
 \item for every pairwise disjoint sets $A_1\in {U \choose p_1},A_2\in {U \choose p_1},\cdots, A_r\in {U \choose p_r}$ and $B \in {U \choose q}$ such that $p_1+\cdots+p_r=p$, 
$\exists F\in \chi(A_1)\cap\chi(A_2)\ldots\chi(A_r)\cap \chi'(B)$.
\end{enumerate}
The size of  $({\cal F},\chi, \chi')$ is $|{\cal F}|$, the $(\chi,p')$-degree of $({\cal F},\chi,\chi')$ for $p'\leq p$ is $\max_{A \in {U \choose p'}} |\chi(A)|$, 
and the $(\chi',q')$-degree of $({\cal F},\chi, \chi')$ for $q'\leq q$ is $\max_{B \in {U \choose q'}} |\chi'(B)|$. 
\end{definition}

A {\em construction} of generalized separating collections is a data structure, that given $n$, $p$ and $q$ initializes and outputs a family ${\cal F}$ of sets over the universe $U$ of size $n$. 
After the initialization one can query the data structure by giving it a set  $A \in \bigcup_{p'\leq p}{U \choose p'}$ or $B\in \bigcup_{q'\leq q}{U \choose q'}$, the data structure 
then outputs a family  $\chi(A) \subseteq 2^{\cal F}$ or $\chi'(B)\subseteq 2^{\cal F}$ respectively. Together the tuple ${\cal C}= ({\cal F},\chi, \chi')$ computed by the data structure 
should form a {\em generalized} $n$-$p$-$q$-{\em separating collection}.

We call the time the data structure takes to initialize and output ${\cal F}$ the {\em initialization time}. 
The {\em $(\chi,p')$-query time}, $p'\leq p$, of the data structure is the maximum time the data structure uses to compute $\chi(A)$ over all $A \in {U \choose p'}$. Similarly, the 
{\em $(\chi',q')$-query time}, $q'\leq q$, of the data structure is the maximum time the data structure uses to compute $\chi'(B)$ over all $B \in {U \choose q'}$.
The initialization time of the data structure and the size of ${\cal C}$ are functions of  $n$, $p$ and $q$. The initialization time is denoted by 
$\tau_I(n,p,q)$, size of ${\cal C}$ is denoted by $\zeta(n,p,q)$. The $(\chi,p')$-query time and $(\chi,p')$-degree of 
$\cal C$, $p'\leq p$, are functions of 
$n,p',p,q$ and is denoted by ${Q_{(\chi,p')}}(n,p,q)$ and $\Delta_{(\chi,p')}(n,p,q)$ respectively. Similarly, the $(\chi',q')$-query time and $(\chi',q')$-degree of ${\cal C}$, $q'\leq q$,  are functions of 
$n,q',p,q$ and are denoted by ${Q_{(\chi',q')}}(n,p,q)$ and $\Delta_{(\chi',q')}(n,p,q)$ respectively.  
We are now ready to state the main technical  tool of this subsection.

\begin{lemma}
\label{lem:twin_sep_coll_construction}
Given a constant $x$ such that $0<x<1$, there is a construction of generalized $n$-$p$-$q$- separating collection with the following parameters
\begin{itemize} 
\setlength\itemsep{-.7mm}
\item size, $\zeta(n,p,q) \leq 2^{\cO(\frac{p+q}{\log\log\log(p+q)})}\cdot \frac{1}{x^p(1-x)^q}\cdot (p+q)^{\cO(1)} \cdot \log n$
\item initialization time, $\tau_I(n,p,q) \leq  2^{\cO(\frac{p+q}{\log\log\log(p+q)})}\cdot \frac{1}{x^p(1-x)^q}\cdot (p+q)^{\cO(1)} \cdot n\log n$
\item $(\chi,p')$-degree, $\Delta_{(\chi,p')}(n,p,q) \leq  2^{\cO(\frac{p+q}{\log\log\log(p+q)})}\cdot \frac{1}{x^{p-p'}(1-x)^q}\cdot (p+q)^{\cO(1)} \cdot \log n$
\item $(\chi,p')$-query time, $Q_{(\chi,p')}(n,p,q) \leq  2^{\cO(\frac{p+q}{\log\log\log(p+q)})}\cdot \frac{1}{x^{p-p'}(1-x)^q}\cdot (p+q)^{\cO(1)} \cdot \log n$
\item $(\chi',q')$-degree, $\Delta_{(\chi',q')}(n,p,q) \leq  2^{\cO(\frac{p+q}{\log\log\log(p+q)})}\cdot \frac{1}{x^{p}(1-x)^{q-q'}}\cdot (p+q)^{\cO(1)} \cdot \log n$
\item $(\chi',q')$-query time, $Q_{(\chi',q')}(n,p,q) \leq  2^{\cO(\frac{p+q}{\log\log\log(p+q)})}\cdot \frac{1}{x^{p}(1-x)^{q-q'}}\cdot (p+q)^{\cO(1)} \cdot \log n$
\end{itemize}
\end{lemma}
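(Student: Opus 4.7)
The plan is to mirror the derandomized construction of (balanced) separating collections from~\cite{FominLS13}, but throughout with a biased coin-flip experiment of parameter $x$. The underlying random experiment places each $u \in U$ in $F$ independently with probability $x$. For a fixed $(p+q)$-subset $S \subseteq U$ partitioned as $S = A \sqcup B$ with $|A|=p$, $|B|=q$, one has $\Pr[A \subseteq F \wedge F \cap B = \emptyset] = x^p(1-x)^q$, so a random family of size $\tilde{\cO}(1/(x^p(1-x)^q))\log n$ is a biased separating collection with high probability; moreover $\Pr[A' \subseteq F] = x^{|A'|}$, which matches the target degree $\Delta_{(\chi,p')}$ in expectation (and symmetrically for $\Delta_{(\chi',q')}$).

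To derandomize, set $k = p+q$ and proceed in two stages. First, apply an $(n,k)$-splitter $\Lambda$ built from the Naor--Schulman--Srinivasan construction with the iterated refinement used in~\cite{FominLS13}; this yields $|\Lambda| = 2^{\cO(k/\log\log\log k)}k^{\cO(1)}\log n$ hash functions $\lambda : U \to [m]$ with $m = \mathrm{poly}(k)$, such that every $k$-subset of $U$ is made injective by some $\lambda \in \Lambda$. Second, construct a small-universe biased separating family $\cF_m \subseteq 2^{[m]}$ of size $1/(x^p(1-x)^q)\cdot k^{\cO(1)}$ by a direct derandomization (for instance via conditional expectations on the biased coin-flip experiment restricted to $[m]$) such that for every $k$-subset $T \subseteq [m]$ and every partition $T = A' \sqcup B'$ with $|A'|=p$, some $F_m \in \cF_m$ satisfies $F_m \cap T = A'$. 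The final family is $\cF := \{\lambda^{-1}(F_m) : \lambda \in \Lambda,\, F_m \in \cF_m\}$, of size $|\Lambda|\cdot|\cF_m|$, matching $\zeta(n,p,q)$. Given a partition $S = A \sqcup B$, an injective $\lambda \in \Lambda$ on $S$ together with a suitable $F_m$ produces $F = \lambda^{-1}(F_m)$ with $F \cap S = A$, verifying condition~3 of Definition~\ref{def:twincollection}.

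For the query operations, given $A' \subseteq U$ with $|A'| = p'$, iterate over $\lambda \in \Lambda$, discard those that are not injective on $A'$, and for the rest use a precomputed inverted index on $\cF_m$ to retrieve $\{F_m : \lambda(A') \subseteq F_m\}$. With the small-universe construction chosen so that no $p'$-subset of $[m]$ is covered by more than the expected number of $\cF_m$-members, the contribution per $\lambda$ is $1/(x^{p-p'}(1-x)^q)\cdot k^{\cO(1)}$; summing over $\lambda \in \Lambda$ yields the stated $\Delta_{(\chi,p')}$ and $Q_{(\chi,p')}$ bounds, and the $\chi'$ case is symmetric under swapping ``contains'' with ``avoids'' and $x$ with $1-x$. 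The initialization time $\tau_I$ is the combined cost of building $\Lambda$, $\cF_m$ and the inverted indices, which matches the stated bound; the $n\log n$ factor is absorbed entirely into the splitter stage.

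The hardest step will be obtaining the \emph{deterministic, simultaneous} degree bounds across all $p' \le p$ and $q' \le q$: a plain random $\cF_m$ gives these only in expectation, so the derandomization must preserve joint regularity under all sub-pattern containments. I anticipate resolving this by a layered refinement of $[m]$ in which a secondary splitter first partitions $[m]$ into $\mathrm{polylog}(k)$ blocks, a biased bit generator is run independently on each block, and the resulting blockwise patterns are recombined so that for every $A' \subseteq [m]$ the count $|\{F_m \in \cF_m : A' \subseteq F_m\}|$ factorises through the local biased expectations. Propagating the $2^{\cO(k/\log\log\log k)}$ factor from the splitter stage consistently through size, degree, and query-time bounds, and verifying the $n\log n$ initialization charge, completes the argument.
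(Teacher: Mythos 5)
There is a genuine gap, and it sits exactly where you flag ``the hardest step.'' Your plan is: (i) a hash/splitter stage mapping $U$ to $[m]$ with $m=\mathrm{poly}(k)$, and (ii) a single direct derandomization (e.g.\ conditional expectations) producing a biased family $\cF_m$ over $[m]$ of size $\frac{1}{x^p(1-x)^q}\cdot k^{\cO(1)}$ with the separation property \emph{and} all degree bounds holding simultaneously for every $p'\le p$, $q'\le q$. Step (ii) is not justified and is, in fact, the whole difficulty: over a universe of size $\mathrm{poly}(k)$ there are $2^{\Theta(k\log k)}$ constraints (choices of $A\sqcup B$), so a conditional-expectations or exhaustive search at this scale costs time $2^{\Theta(k\log k)}$, which exceeds the permitted initialization time $2^{\cO(k/\log\log\log k)}\cdot\frac{1}{x^p(1-x)^q}\cdot k^{\cO(1)}\cdot n\log n$ (for constant $x$ the latter is only singly exponential), and you give no pessimistic estimator that simultaneously controls the $\chi$- and $\chi'$-degrees for all $p',q'$. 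Moreover, you attribute the $2^{\cO(k/\log\log\log k)}$ loss to the splitter stage; that mislocates it. A $(p+q)$-perfect hash family to range $(p+q)^2$ already has size $k^{\cO(1)}\log n$ (Proposition~\ref{prop:hashFun}), so no subexponential factor arises there, and if your $\cF_m$ really had only $k^{\cO(1)}$ overhead the lemma would be stronger than what is claimed.

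The paper resolves this differently: it proves existence of the biased family with the degree bounds via Chernoff (Lemma~\ref{lem:twin_sep_coll_brute_force}) and finds it by brute force, which is only affordable on very small instances; it then alternates two reductions --- universe reduction via $(p+q)$-perfect hashing (Lemma~\ref{lem:twinreduceUniverse}) and a splitting step over \emph{consecutive} partitions of $\{1,\dots,n\}$ into $t=\lceil (p+q)/\log^2(p+q)\rceil$ parts (Lemma~\ref{lem:splitSolution}), crucially exploiting that there are only $\binom{n+t-1}{t-1}=n^{\cO(t)}$ such partitions --- and it iterates this alternation three times. Each split-then-reduce round turns the overhead $n^{\cO(k/\log^2 k)}$ into $2^{\cO(k/\log k)}$, then $2^{\cO(k/\log\log k)}$, then $2^{\cO(k/\log\log\log k)}$, and simultaneously shrinks the parameters on which the brute-force base construction is invoked until its doubly-exponential cost (terms like $2^{2^{(2\log\log(p+q))^4}}$) is dominated by the allowed initialization time. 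Your ``layered refinement of $[m]$ with a secondary splitter and blockwise biased bits'' gestures at this splitting idea, but it is a one-shot sketch: it does not specify how the unknown distribution of $|A\cap U_i|$ across blocks is enumerated (the paper sums over all tuples in ${\cal Z}^p_{s,t}$ and all consecutive partitions, and this enumeration is exactly what is later tamed by re-hashing), it does not explain why a single layer would yield three logarithms in the exponent rather than one, and it does not say how the blockwise base families are constructed deterministically within the time budget. Without these, the proposal essentially restates the lemma for the small universe rather than proving it.
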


We first give a road map to prove Lemma~\ref{lem:twin_sep_coll_construction}. 
The proof of  Lemma~\ref{lem:twin_sep_coll_construction} uses three auxiliary lemmata. 
\begin{enumerate}
\item[(a.)] {\bf Existential Proof (Lemma~\ref{lem:twin_sep_coll_brute_force}}). This lemma shows that there is indeed a 
generalized $n$-$p$-$q$-separating collection with the required sizes, degrees and query time. Essentially, it shows that if we form a family  ${\cal F}=\{F_1,\ldots,F_t\}$ of sets of $U$ such that each $F_i$ is a random subset of $U$ where each element  is inserted into $F_i$ with probability $x$, then ${\cal F}$ has the desired sizes, degrees and query time. Thus, this also gives a brute force algorithm to design the family $\cal F$ by just guessing the family of desired size and then checking whether it is indeed  a generalized $n$-$p$-$q$-separating collection. 
\item[(b.)]  {\bf Universe Reduction (Lemma~\ref{lem:twinreduceUniverse}).} The construction obtained in Lemma~\ref{lem:twin_sep_coll_brute_force} 
has only one drawback that the initialization time is much larger than claimed in Lemma~\ref{lem:twin_sep_coll_construction}. To overcome this lacuna, we do not apply the construction in Lemma~\ref{lem:twin_sep_coll_brute_force} directly. 
We first prove a Lemma~\ref{lem:twinreduceUniverse} which helps us in reducing the universe size to $(p+q)^2$. This is done using the  known construction of $k$-perfect hash families of size $(p+q)^{\cO(1)} \log n$. 
Lemma~\ref{lem:twinreduceUniverse} alone  can not reduce the universe size sufficiently, that we can apply the construction of Lemma~\ref{lem:twin_sep_coll_brute_force}. 
\item[(c.)] {\bf Splitting Lemma (Lemma~\ref{lem:splitSolution}).} We give a splitter type construction in Lemma~\ref{lem:splitSolution} that when applied with 
Lemma~\ref{lem:twinreduceUniverse} makes the universe and other parameters small enough that we can apply the construction given in 
Lemma~\ref{lem:twin_sep_coll_brute_force}. In this construction we consider all the ``consecutive partitions''  of the universe into $t$ parts, assume that the sets $A\cup B$, $A=\cup_{i=1}^r A_i$, are distributed uniformly into $t$ parts and then use this information to obtain a construction of generalized separating collections in each part and then take the product of these collections to obtain a collection for the original instance.
\end{enumerate}

We start with an existential proof.

\begin{lemma}\label{lem:twin_sep_coll_brute_force}
Given $0<x<1$, there is a construction of generalized $n$-$p$-$q$-separating collections with 
\begin{itemize}\setlength\itemsep{-.7mm}
\item size $\zeta(n,p,q) =\cO \left(\frac{1}{x^{p}(1-x)^q} \cdot (p^2+q^2+1)\log n \right)$, 
\item initialization time $\tau_I(n,p,q) = \cO({2^n \choose \zeta(n,p,q)} \cdot \frac{1}{x^p(1-x)^q} \cdot n^{\cO(p+q)})$,
\item $(\chi,p')$-degree for $p'\leq p$, $\Delta_{(\chi,p')}(n,p,q) = \cO\left(\frac{1}{x^{p-p'}}\cdot\frac{(p^2+q^2+1)}{(1-x)^q} \cdot \log n\right)$
\item $(\chi,p')$-query time ${Q_{(\chi,p')}}(n,p,q) = \cO(\frac{1}{x^{p}(1-x)^q} \cdot n^{\cO(1)}).$
\item $(\chi',q')$-degree $\Delta_{(\chi',q')}(n,p,q)=\cO\left(\frac{1}{x^{p}(1-x)^{q-q'}}\cdot(p^2+q^2+1) \cdot \log n\right)$
\item $(\chi',q')$-query time ${Q_{(\chi',q')}}(n,p,q)=\cO(\frac{1}{x^{p}(1-x)^q} \cdot n^{\cO(1)}).$
\end{itemize}
\end{lemma}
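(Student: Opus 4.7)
The plan is to construct ${\cal F}=\{F_1,\dots,F_t\}$ probabilistically, where each $F_i$ is obtained by including every $u\in U$ in $F_i$ independently with probability $x$, and to take $t:=C\cdot\frac{(p^2+q^2+1)\log n}{x^p(1-x)^q}$ sets for a sufficiently large absolute constant $C$. I will then define $\chi$ and $\chi'$ in the only natural way consistent with Definition~\ref{def:twincollection}: $\chi(A):=\{F_i\in{\cal F}:A\subseteq F_i\}$ for $A\in\bigcup_{p'\leq p}\binom{U}{p'}$ and $\chi'(B):=\{F_i\in{\cal F}:F_i\cap B=\emptyset\}$ for $B\in\bigcup_{q'\leq q}\binom{U}{q'}$. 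Properties 1 and 2 then hold by construction, so everything reduces to showing that, with positive probability, property 3 and the claimed degree bounds all hold simultaneously.

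For property 3, fix any tuple $(A_1,\dots,A_r,B)$ of pairwise disjoint sets with $\sum_j|A_j|=p$ and $|B|=q$, and set $A=\bigcup_j A_j$. Since the $A_j$'s are disjoint and disjoint from $B$, a single $F_i$ witnesses the tuple (i.e., contains every $A_j$ and misses $B$) with probability exactly $x^p(1-x)^q$, and independence across the $F_i$'s gives failure probability $\bigl(1-x^p(1-x)^q\bigr)^t\leq \exp\bigl(-tx^p(1-x)^q\bigr)$. A crude count shows there are at most $\sum_{r=1}^{p} r^{p}\cdot\binom{n}{p}\binom{n}{q}\leq p^{p+1}n^{p+q}$ such tuples (choose $A$, partition it into $r$ ordered parts, then choose $B$), whose logarithm is $O((p+q)\log n + p\log p)$. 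Since $tx^p(1-x)^q=C(p^2+q^2+1)\log n$, taking $C$ large enough makes the union bound over all tuples at most $1/3$.

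For the degree bounds, fix $p'\leq p$ and $A\in\binom{U}{p'}$. Then $\mathbb{E}|\chi(A)|=tx^{p'}=\frac{C(p^2+q^2+1)\log n}{x^{p-p'}(1-x)^q}$, which is exactly the claimed $\Delta_{(\chi,p')}$ up to constants. A multiplicative Chernoff bound gives $\Pr[|\chi(A)|>2\mathbb{E}|\chi(A)|]\leq \exp(-\mathbb{E}|\chi(A)|/3)$, and since $\mathbb{E}|\chi(A)|\geq C(p^2+q^2+1)\log n$ (using $x^{p-p'}\leq 1$, $(1-x)^q\leq 1$), this probability shrinks fast enough that a union bound over the at most $n^{p'}\leq n^p$ choices of $A$ and the $p+1$ values of $p'$ keeps total failure below $1/3$. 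An entirely symmetric argument bounds $|\chi'(B)|\leq 2\mathbb{E}|\chi'(B)|$ simultaneously for all $B$ and all $q'\leq q$ with failure at most $1/3$. Combining the three union bounds shows that a collection meeting every requirement exists.

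Finally, the initialization procedure is brute-force derandomization: enumerate all subfamilies of $2^U$ of cardinality $t$, and for each candidate directly verify property 3 by iterating over all $O(p^{p+1}n^{p+q})$ disjoint tuples and testing all $t$ sets, as well as verify the degree bounds by iterating over all $A$ and $B$ of the relevant sizes. This yields the stated $\tau_I(n,p,q)=O\bigl(\binom{2^n}{\zeta(n,p,q)}\cdot\tfrac{1}{x^p(1-x)^q}\cdot n^{O(p+q)}\bigr)$. The query procedures simply scan ${\cal F}$ element by element, testing the defining containment/disjointness in $O(n)$ per set, which gives $Q_{(\chi,p')}=Q_{(\chi',q')}=O(tn)=O\bigl(\tfrac{1}{x^p(1-x)^q}\cdot n^{O(1)}\bigr)$. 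The one delicate point is calibrating $C$ so that the $(p^2+q^2+1)\log n$ factor simultaneously absorbs the $\log$ of the tuple count in property 3 and the Chernoff concentration requirement for the degree bounds; since $p^2+q^2+1\geq(p+q)/2$ and dominates both $p\log p$ and $p\log n / \log n$-type terms, a single large enough $C$ suffices.
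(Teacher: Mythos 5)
Your proposal is correct and takes essentially the same route as the paper's own proof: a random family of $t=\Theta\bigl(\frac{(p^2+q^2+1)\log n}{x^p(1-x)^q}\bigr)$ sets with inclusion probability $x$, the natural definitions of $\chi$ and $\chi'$, a union bound over the (polynomially-many-in-$n^{p+q}$) disjoint tuples for property 3, Chernoff plus a union bound for the degree bounds, brute-force enumeration of all size-$t$ subfamilies of $2^U$ for initialization, and linear scans for queries. The only differences are cosmetic calibrations (your constant $C$ and $2\mathbb{E}$ threshold with failure $1/3$ per event versus the paper's $6\mathbb{E}$ threshold with failure $1/n$ per event), plus the paper explicitly disposes of the degenerate very-small-$n$ case, where the probabilistic bound is vacuous, by taking $\mathcal{F}=2^U$.
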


\begin{proof}
We start by giving a randomized algorithm that with positive probability constructs a generalized $n$-$p$-$q$-separating collection ${\cal C} = ({\cal F},\chi, \chi')$ with the desired size and degree parameters. 
We will then discuss how to deterministically compute such a ${\cal C}$ within the required time bound. Set $t = \frac{1}{x^p(1-x)^q} \cdot (p^2+q^2+1)\log n$ and construct the family 
${\cal F} = \{F_1, \ldots, F_t\}$ as follows. Each set $F_i$ is a random subset of $U$, where each element of $U$ is inserted into $F_i$ with probability $x$. Distinct elements are inserted (or not) into $F_i$ 
independently, and the construction of the different sets in ${\cal F}$ is also independent. For each  $A \in \bigcup_{p'\leq p}{U\choose p'}$ we set $\chi(A) = \{F \in {\cal F}~:~A \subseteq F\}$ and 
for each $B\in \bigcup_{q'\leq q}{U\choose q'}$ we set $\chi'(B)=\{F\in {\cal F}~:~F\cap B=\emptyset\}$.

The size of ${\cal F}$ is within the required bound by construction. We now argue that with positive probability 
$({\cal F},\chi, \chi')$ is indeed a generalized $n$-$p$-$q$-separating collection, and that the degrees of ${\cal C}$ 
is within the required bounds as well. For fixed sets $A \in {U \choose p}$, $B \in {U\setminus A \choose q}$, and integer $i \leq t$, we consider the probability that $A \subseteq F_i$ and $B \cap F_i = \emptyset$. 
This probability is $x^{p}(1-x)^q$. Since each $F_i$ is constructed independently from the other sets in ${\cal F}$, the probability that {\em no} $F_i$ satisfies $A \subseteq F_i$ and $B \cap F_i = \emptyset$ is
\begin{align*} \left(1 - x^p(1-x)^q\right)^t \leq e^{-(p^2+q^2+1)\log n} = \frac{1}{n^{p^2+q^2+1}}.\end{align*}
For a fixed $A_1,A_2,\ldots,A_r$ and $B$ (choices in condition $3$), the probability that no $F_i$ in $\chi(A_1)\cap\chi(A_2)\cap \cdots \cap\chi(A_r)\cap \chi'(B)$ is equal to the 
probability that no $F_i$ in $\chi(A_1\cup A_2\cdots \cup A_r)\cap \chi'(B)$  (since $\chi(A')$ contains all 
the sets in ${\cal F}$ that contains $A'$ and $\chi'(B)$ contains all the sets in ${\cal F}$ that are disjoint from $B$). 
Hence the probability that condition $3$ fails is upper bounded by 
$$Y\cdot\frac{1}{n^{p^2+q^2+1}}$$
where $Y$ is the number of choices for $A_1,\ldots,A_r$ and $B$ in condition $3$. We upper bound $Y$ as follows.
There are ${n \choose p}$ choices for $A_1\cup\cdots\cup A_r$ and ${n \choose q}$ choices for $B$. 
For each choice of $A_1\cup\cdots\cup A_r$ there are at most $r^p$ choices of making $A_1,\ldots,A_r$ with some of them being empty as well. Note that $r\leq p$. 
Therefore the number of possible choices of sets $A_1,A_2,\ldots,A_r$ and $B$ in condition $3$ is upper bounded by ${n\choose p}{n\choose q}p^p\leq n^{2p+q}\leq n^{p^2+q^2}$. 
Hence the probability that condition $3$ in Definition~\ref{def:twincollection} fails is at most $\frac{1}{n}$.

We also need to upper bound the maximum degree of ${\cal C}$. For every $A \in  {U \choose p'}$, $|\chi(A)|$ is a random variable. For a fixed $A \in  {U \choose p'}$ and $i \leq t$ the probability 
that $A \subseteq F_i$ is exactly $x^{p'}$. Hence $|\chi(A)|$ is the sum of $t$ independent  $0/1$-random variables that each take value $1$ with probability $x^{p'}$. Hence the expected value of $|\chi(A)|$ is 
$$E[|\chi(A)|] = t \cdot x^{p'} = \frac{1}{x^{p-p'}(1-x)^q}\cdot (p^2+q^2+1)\log n$$ 
For every $B\in {U\choose q'}$, $|\chi'(B)|$ is also a random variable. For a fixed $B \in  {U \choose q'}$ and $i \leq t$ the probability that $A \cap F_i=\emptyset$ is exactly $(1-x)^{q'}$. 
Hence the expected value of $|\chi'(B)|$ is,
$$E[|\chi'(B)|] = t \cdot (1-x)^{q'} = \frac{1}{x^{p}(1-x)^{q-q'}}\cdot (p^2+q^2+1)\log n.$$
Standard Chernoff bounds~\cite[Theorem 4.4]{mitzenmacher2005probability} show that the probability that for any $A\in{U\choose p'}$, $|\chi(A)|$ 
is at least $6E[|\chi(A)|]$ is upper bounded by $2^{-6E[|\chi(A)|]} \leq \frac{1}{n^{p^2+q^2+1}}$. 
Similarly the probability that for any $B\in{U \choose q'}$, $|\chi'(B)|$ is at least $6E[|\chi'(B)|]$ is upper bounded by $2^{-6E[|\chi'(B)|]} \leq \frac{1}{n^{p^2+q^2+1}}$.  
There are  $\sum_{p'\leq p}{n \choose p'}\leq{n^{p^2}}$ choices for $A\in\bigcup_{p'\leq p}{U\choose p'}$ and $\sum_{q'\leq q}{n \choose q'}\leq{n^{q^2}}$ choices for $B\in\bigcup_{q'\leq q}{U\choose q'}$. 
Hence the union bound yields that the probability that there exists an $A\in\bigcup_{p'\leq p}{U\choose p'}$ such that $|\chi(A)|  > 6E[|\chi(A)|]$ or there exists $B\in \bigcup_{q'\leq q}{U\choose q'}$ such 
that $|\chi'(B)|  > 6E[|\chi'(B)|]$ is upper bounded by $\frac {1}{n}$. Thus ${\cal C}$ is a family of $n$-$p$-$q$-separating collections with the desired size and degree parameters with probability at least 
$1 - \frac{2}{n} > 0$. The degenerate case that  $1 - \frac{2}{n} \leq 0$ is handled 
by the family ${\cal F}$ containing all (at most four) subsets of $U$. 

To construct ${\cal F}$ within the stated initialization time bound, it is sufficient to try all families ${\cal F}$ of size $t$ and for each of the ${2^n \choose \zeta(n,p,q)}$ 
guesses, test whether it is indeed a family of $n$-$p$-$q$-separating collections in time $\cO(t \cdot n^{\cO(p+q)}) = \cO(\frac{1}{x^p(1-x)^q} \cdot n^{\cO(p+q)})$.

For the queries, we need to give an algorithm that given $A$, computes $\chi(A)$ (or $\chi'(A)$), under the assumption that ${\cal F}$ has already has been computed in the initialization step. 
This is easily done within the stated running time bound by going through every set $F \in {\cal F}$, checking whether $A \subseteq F$ (or $A\cap F=\emptyset$), and if so, inserting $F$ into $\chi(A)$ ($\chi'(A)$). 
This concludes the proof.
\end{proof}
We will now work towards improving the time bounds of Lemma~\ref{lem:twin_sep_coll_brute_force}. 
 To that end we will need a construction of {\em $k$-perfect hash functions} by Alon et al.~\cite{AlonYZ}
\begin{definition} 
A family of functions $f_1, \ldots, f_t$ from a universe $U$ of size $n$ to a universe of size $r$ is a $k$-perfect family of hash functions if for every set $S \subseteq U$ such that $|S|=k$ 
there exists an $i$ such that the restriction of $f_i$ to $S$ is injective.
\end{definition}
Alon et al.~\cite{AlonYZ} give very efficient constructions of $k$-perfect families of hash functions from a universe of size $n$ to a universe of size $k^2$.
\begin{proposition}[\cite{AlonYZ}]\label{prop:hashFun} 
For any universe $U$ of size $n$ there is a $k$-perfect family $f_1, \ldots, f_t$ of hash functions from $U$ to 
$[k^2]$ 
with $t = \cO(k^{\cO(1)} \cdot \log n)$. 
Such a family of hash functions can be constructed in time $\cO(k^{\cO(1)}n \log n)$. 
\end{proposition}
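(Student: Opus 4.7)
The plan is to prove the proposition in two steps: first an existential probabilistic argument that pins down the right size, and then a derandomization that matches this size with an explicit construction computable in $\cO(k^{\cO(1)} n \log n)$ time.

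For the existential step, fix a $k$-subset $S \subseteq U$ and a uniformly random function $f \colon U \to [k^2]$. The probability that $f$ is injective on $S$ is $\prod_{i=0}^{k-1}(1 - i/k^2) \geq 1 - \binom{k}{2}/k^2 \geq 1/2$ by a union bound over the collision events. Sampling $t_0 = \Theta(k \log n)$ independent functions and taking a union bound over the at most $\binom{n}{k} \leq n^k$ candidate $k$-subsets shows that with positive probability the family is $k$-perfect. Hence a family of size $\cO(k \log n)$ exists, establishing the target size bound information-theoretically.

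For the deterministic construction I would use the standard AYZ two-stage ``compose small hashes'' approach. Stage one produces a family $\GG$ of hash functions $g \colon U \to [m]$ with $m = k^{\cO(1)}$ such that for every $k$-subset $S$ some $g \in \GG$ is injective on $S$; this is where the universe shrinks from $n$ to poly$(k)$, and where the $\log n$ factor enters. I would build $\GG$ from $k$-wise independent hash functions realized as low-degree polynomials over a finite field of size polynomial in $n$ and $k$ (using the Joffe construction), and then derandomize the error probability either by iterating over an $\varepsilon$-biased sample space \`a la Naor-Naor or by invoking the splitter construction of Naor-Schulman-Srinivasan, both of which yield $|\GG| = k^{\cO(1)} \log n$ with evaluation time $k^{\cO(1)}$ per element. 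Stage two produces a $k$-perfect family $\HH$ of functions $[m] \to [k^2]$; since $m = k^{\cO(1)}$ the existential bound from the first paragraph gives $|\HH| = k^{\cO(1)}$, and because the universe is polynomial in $k$ one can either brute-force search over all $(k^2)^m$ functions to find such a family, or repeat the first-stage construction recursively now that $n$ has been replaced by poly$(k)$.

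The final family is $\FF = \{\, h \circ g : g \in \GG,\; h \in \HH\,\}$, of total size $k^{\cO(1)} \log n$. It is $k$-perfect: for any $k$-set $S \subseteq U$, pick $g \in \GG$ injective on $S$, whence $g(S)$ is a $k$-subset of $[m]$, and pick $h \in \HH$ injective on $g(S)$; then $h \circ g$ is injective on $S$. The construction time is dominated by evaluating each $g \in \GG$ on all $n$ elements of $U$ once, which takes $\cO(k^{\cO(1)} n \log n)$ in total, while the second stage runs in time depending only on $k$. The main obstacle is the first stage: getting $|\GG| = \mathrm{poly}(k) \cdot \log n$ rather than $2^{\cO(k)} \log n$ is where one must invoke a nontrivial derandomization tool such as $\varepsilon$-biased spaces or NSS splitters, and checking that the resulting hash family genuinely covers \emph{every} $k$-set (not just most of them) is the technically delicate point.
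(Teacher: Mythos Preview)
The paper does not prove this proposition at all: it is stated with a citation to Alon--Yuster--Zwick~\cite{AlonYZ} and used as a black box throughout Section~3. There is therefore nothing to compare your argument against in the paper itself.

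Your sketch is a reasonable reconstruction of the standard AYZ two-stage argument, and the existential probabilistic bound is correct. One point to tighten: in Stage~2 you offer brute-force search over all $(k^2)^m$ functions with $m = k^{\cO(1)}$ as one option, but this takes time doubly exponential in $k$ and would not give the claimed $\cO(k^{\cO(1)} n \log n)$ construction time. Your alternative --- recursing on the now-polynomial-in-$k$ universe --- is the right fix, but you should commit to it (or to an explicit small-universe construction) rather than leave brute force on the table. The other vague spot is the derandomization in Stage~1: you name $\varepsilon$-biased spaces and NSS splitters as possible tools, which is the right neighborhood, but you do not actually carry out the argument that either yields a family that is perfect on \emph{every} $k$-set with only $k^{\cO(1)}\log n$ functions. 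Since the proposition is quoted rather than proved in the paper, and the actual proof lives in~\cite{AlonYZ} (and in the later Naor--Schulman--Srinivasan refinement), citing those sources directly would be the cleanest resolution.
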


\begin{lemma}\label{lem:twinreduceUniverse} If there is a construction of generalized $n$-$p$-$q$-separating collections $(\hat{\cal F},\hat{\chi},\hat{\chi}')$ with initialization time $\tau_I(n,p,q)$, size $\zeta(n,p,q)$, 
$(\hat{\chi},p')$-query time ${Q_{(\hat{\chi},p')}}(n,p,q)$, $(\hat{\chi}',q')$-query time ${Q_{(\hat{\chi}',q')}}(n,p,q)$, 
$(\hat{\chi},p')$-degree $\Delta_{(\hat{\chi},p')}(n,p,q)$, and $(\hat{\chi}',q')$-degree $\Delta_{(\hat{\chi}',q')}(n,p,q)$  
then there is a construction of generalized $n$-$p$-$q$-separating collections 
with following parameters.
\begin{itemize} 
\item 
$\zeta'(n,p,q) \leq \zeta\left((p+q)^2,p,q\right) \cdot  (p+q)^{\cO(1)} \cdot \log n$,
\item 
$\tau_I'(n,p,q) = \cO\left(\tau_I\left((p+q)^2,p,q\right) + \zeta\left((p+q)^2,p,q\right) \cdot (p+q)^{\cO(1)} \cdot n \log n\right)$,
\item 
$\Delta'_{(\chi,p')}(n,p,q) \leq \Delta_{(\hat{\chi},p')}\left((p+q)^2,p,q\right) \cdot  (p+q)^{\cO(1)} \cdot \log n$, 
\item 
${Q'_{(\chi,p')}}(n,p,q) = \cO\left(\left({Q_{(\hat{\chi},p')}}\left((p+q)^2,p,q\right) + \Delta_{(\hat{\chi},p')}\left((p+q)^2,p,q\right) \right) \cdot (p+q)^{\cO(1)} \cdot \log n\right)$,
\item 
$\Delta'_{(\chi',q')}(n,p,q)\leq \Delta_{(\hat{\chi}',q')}\left((p+q)^2,p,q\right) \cdot  (p+q)^{\cO(1)} \cdot \log n$,
\item 
${Q'_{(\chi',q')}}(n,p,q)=\cO\left(\left({Q_{(\hat{\chi}',q')}}\left((p+q)^2,p,q\right) + \Delta_{(\hat{\chi}',q')}\left((p+q)^2,p,q\right) \right) \cdot (p+q)^{\cO(1)} \cdot \log n\right)$
\end{itemize}
\end{lemma}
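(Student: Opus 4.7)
My plan is to reduce the universe from $U$ (size $n$) down to $[(p+q)^2]$ via a $(p+q)$-perfect hash family, apply the given construction once inside the small universe, and then lift every set by preimage. Concretely, by Proposition~\ref{prop:hashFun} I first build a $(p+q)$-perfect family $f_1,\dots,f_t$ with $t=(p+q)^{\cO(1)}\log n$ in time $(p+q)^{\cO(1)}n\log n$, and then invoke the assumed construction once on the universe $[(p+q)^2]$ to obtain an inner generalized collection $(\hat{\cal F},\hat\chi,\hat\chi')$ in time $\tau_I((p+q)^2,p,q)$.

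The output collection ${\cal F}$ consists of the pullbacks $F_{j,\hat F}:=f_j^{-1}(\hat F)$ for $j\in[t]$ and $\hat F\in\hat{\cal F}$; each pullback is materialized in $\cO(n)$ time by scanning $U$. For $A\in\binom{U}{p'}$ I place $F_{j,\hat F}$ into $\chi(A)$ exactly when $f_j$ is injective on $A$ and $\hat F\in\hat\chi(f_j(A))$; I define $\chi'(B)$ symmetrically using $\hat\chi'$. The first two axioms of Definition~\ref{def:twincollection} are immediate, since $A\subseteq f_j^{-1}(\hat F)$ iff $f_j(A)\subseteq\hat F$, and $B\cap f_j^{-1}(\hat F)=\emptyset$ iff $f_j(B)\cap\hat F=\emptyset$. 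The third axiom is where the perfect hash family earns its keep: for pairwise disjoint $A_1,\dots,A_r,B$ with $\sum_i|A_i|=p$ and $|B|=q$, the union has size $p+q$, so some $f_j$ is injective on it, making $f_j(A_1),\dots,f_j(A_r),f_j(B)$ pairwise disjoint sets of the correct sizes inside $[(p+q)^2]$; the third axiom of the inner collection then yields $\hat F\in\bigcap_i\hat\chi(f_j(A_i))\cap\hat\chi'(f_j(B))$, and $F_{j,\hat F}$ witnesses the intersection for the outer collection.

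The parameter bounds amount to summing over the $t$ hash functions. The size satisfies $|{\cal F}|\le t\cdot\zeta((p+q)^2,p,q)$; initialization is the hash family build, one inner initialization, and $t|\hat{\cal F}|$ pullbacks at $\cO(n)$ each, matching the claim. A query on $A$ iterates over $j$, spends $(p+q)^{\cO(1)}$ time testing injectivity and computing $f_j(A)$, and otherwise defers to the inner query and reports its output by pointer, for a total of $t\cdot(Q_{(\hat\chi,p')}+\Delta_{(\hat\chi,p')})$; the $\chi'$ side is identical. The only subtle point, and precisely what the injectivity condition in the definitions of $\chi$ and $\chi'$ is inserted to guarantee, is that $|f_j(A)|=p'$ for every contributing $j$, so that the inner-degree bound $\Delta_{(\hat\chi,p')}$ applies to $|\hat\chi(f_j(A))|$ rather than a potentially larger $\Delta_{(\hat\chi,p'')}$ with $p''<p'$; perfect hashing ensures that injective images still exist when we need them for the third axiom, so this restriction is harmless for correctness.
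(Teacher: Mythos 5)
Your proposal is correct and follows essentially the same route as the paper's proof: build a $(p+q)$-perfect hash family into $[(p+q)^2]$, run the assumed construction once on the small universe, take preimages under each hash function to form ${\cal F}$, define $\chi(A)$ and $\chi'(B)$ only over the hash functions that are injective on the queried set, and use injectivity on $A_1\cup\cdots\cup A_r\cup B$ to transfer the third axiom. Your accounting of size, degree, initialization and query time matches the paper's bounds (your choice to materialize the pullbacks at initialization and answer queries by pointer is an inessential variation), and your remark that the injectivity restriction is what keeps the degree bound tied to $\Delta_{(\hat{\chi},p')}$ while remaining harmless for the covering property is exactly the point the paper's proof relies on.
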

\begin{proof}
We give a construction of generalized $n$-$p$-$q$-separating collections with initialization time, query time, size and degree $\tau_I'$, ${Q}'$, $\zeta'$ and $\Delta'$ respectively using the 
construction with initialization time, query time, size and degree $\tau_I$, ${Q}$, $\zeta$ and $\Delta$ as a black box. 
 
We first describe the initialization of the data structure. Given $n$, $p$, and $q$, we construct using Proposition~\ref{prop:hashFun} a $(p+q)$-perfect family $f_1, \ldots f_t$ 
of hash functions from the universe $U$ to $[(p+q)^2]$. The construction takes time $\cO((p+q)^{\cO(1)}n \log n)$ and $t \leq  (p+q)^{\cO(1)} \cdot \log n$. 
We will store these hash functions in memory. We use the following notations.
\begin{itemize}
 \item For a set $S \subseteq U$ and $T\subseteq [(p+q)^2]$,  \\ $f_i(S)=\{f_i(s) ~:~ s \in S\}$ and $f_i^{-1}(T)=\{s \in U ~:~ f(s) \in T\}$. 
 \item For a family ${\cal Z}$ of sets over $U$ and family ${\cal W}$ of sets over $[(p+q)^2]$,\\ $f_i({\cal Z})=\{f_i(S) ~:~ S \in {\cal Z}\}$ and $f_i^{-1}({\cal W})=\{f_i^{-1}(T) ~:~ T \in {\cal W}\}$.
\end{itemize}
 
 We first use the given black box construction for $(p+q)^2$-$p$-$q$-separating collections $(\hat{\cal F}, \hat{\chi},\hat{\chi}')$ over the universe $[(p+q)^2]$. 
We run the initialization algorithm of this construction and store the family $\hat{\cal F}$ in memory. We then set
\begin{align*} {\cal F} = \bigcup_{i\leq t} f_i^{-1}(\hat{\cal F}). \end{align*}

 We spent $\cO((p+q)^{\cO(1)}n \log n)$ time to construct a $(p+q)$-perfect family of hash functions,  $\cO(\tau_I((p+q)^2,p,q))$ to construct $\hat{\cal F}$ of size $\zeta((p+q)^2,p,q)$, 
and $\cO(\zeta((p+q)^2,p,q) \cdot (p+q)^{\cO(1)} \cdot n \log n)$ time to construct ${\cal F}$ from $\hat{\cal F}$ and the family of perfect hash functions. 
Thus the upper bound on $\tau_I'(n,p,q)$ follows. Furthermore,  $|{\cal F}| \leq |\hat{\cal F}| \cdot  (p+q)^{\cO(1)} \cdot \log n$, yielding the claimed bound for $\zeta'$.
 
 We now define $\chi(A)$ for every $A \in \bigcup_{p'\leq p}{U\choose p'}$ and describe the query algorithm. For every $A \in \bigcup_{p'\leq p}{U\choose p'}$ we let
\begin{align*} \chi(A) = \bigcup_{\substack{i\leq t \\ |f_i(A)|=|A|}} f_i^{-1}(\hat{\chi}(f_i(A))). \end{align*}
Since $\forall\;\hat{F} \in \hat{\chi}(f_i(A))$, $f_i(A) \subseteq \hat{F}$, it follows that $A \subseteq F$ for every $F \in \chi(A)$. Furthermore we can bound $|\chi(A)|$ for any $A\in \bigcup_{p'\leq p}{U\choose p'}$, 
as follows 
\begin{align*} |\chi(A)| \leq \sum_{\substack{i\leq t \\ |f_i(A)|=|A|}} |\hat{\chi}(f_i(A))| \leq \Delta_{(\hat{\chi},p')}((p+q)^2,p,q) \cdot  (p+q)^{\cO(1)} \cdot \log n.\end{align*}
Thus the claimed bound for $\Delta'_{(\chi,p')}$ follows. 
Similar way we define $\chi'(B)$ for every $B\in\bigcup_{q'\leq q}{U\choose q'}$ as 
\begin{align*} \chi'(B) = \bigcup_{\substack{i\leq t \\ |f_i(A)|=|A|}} f_i^{-1}(\hat{\chi}'(f_i(A))). \end{align*}
\begin{align*} |\chi'(B)| \leq \sum_{\substack{i\leq t \\ |f_i(A)|=|A|}} |\hat{\chi}'(f_i(A))| \leq \Delta_{(\hat{\chi}',q')}((p+q)^2,p,q) \cdot  (p+q)^{\cO(1)} \cdot \log n.\end{align*}
To compute $\chi(A)$ for any $A\in\bigcup_{p'\leq p} {U \choose p'}$, we go over every $i \leq t$ and check whether $f_i$ is injective on $A$. This takes time $\cO((p+q)^{\cO(1)} \cdot \log n)$. 
For each $i$ such that $f_i$ is injective on $A$, we compute $f_i(A)$ and then $\hat{\chi}(f_i(A))$ in time $\cO({Q_{(\chi,p')}}((p+q)^2,p,q))$. Then we compute $f_i^{-1}(\hat{\chi}(f_i(A)))$  in time 
$\cO(|\hat{\chi}(f_i(A))|\cdot (p+q)^{\cO(1)}) =\cO(\Delta_{(\chi,p')}((p+q)^2,p,q)\cdot (p+q)^{\cO(1)})$ and add this set to $\chi(A)$. As we need to do this $\cO((p+q)^{\cO(1)} \cdot \log n)$ times, the total time 
to compute $\chi(A)$ is upper bounded by $\cO(({Q_{(\chi,p')}}((p+q)^2,p,q) + \Delta_{(\chi,p')}((p+q)^2,p,q)) \cdot (p+q)^{\cO(1)} \cdot \log n)$, 
yielding the claimed upper bound on ${Q'_{(\chi,p')}}$. Similar way we can bound ${Q'_{(\chi',q')}}$.

It remains to argue that $({\cal F},\chi,\chi')$ is in fact a generalized $n$-$p$-$q$-separating collection. For any $r$, consider pairwise disjoint sets $A_1 \in {U \choose p_1},\ldots,A_r \in {U \choose p_r}$, and 
$B \in {U \choose q}$ such that $p_1+\ldots+p_r=p$. We need to show that $\exists F\in \chi(A_1)\cap\cdots\cap\chi(A_r)\cap\chi'(B)$. 
Since $f_1, \ldots, f_t$ is a $(p+q)$-perfect family of hash functions, there is an $i$ such that $f_i$ is injective on $A_1\cup\cdots \cup A_r \cup B$. 
Since $(\hat{\cal F}, \hat{\chi},\hat{\chi}')$ is a $(p+q)^2$-$p$-$q$-separating collection,  
$\exists \hat{F}\in \hat{\chi}(f_i(A_1))\cap\cdots\hat{\chi}(f_i(A_r))\cap \hat{\chi}'(f_i(B))$. Since $f_i$ is injective on $A_1,\ldots,A_r$ and $B$, $f_i^{-1}(\hat{F})\in \chi(A_1)\cap\cdots \chi(A_r) \cap\chi'(B)$. 
This concludes the proof.
\end{proof}
We now give a {\em splitting lemma}, which allows us to reduce the problem of finding generalized $n$-$p$-$q$-separating collections to the same problem, but with much smaller values for $p$ and $q$. To that end we need some definitions.
%
%

\begin{definition}
A {\em partition} of $U$ is a family ${\cal U}_P = \{U_1, U_2, \ldots U_t\}$ of sets over $U$ such that $\forall i\neq j,\;U_i \cap U_j = \emptyset$ and $U = \bigcup_{i \leq t} U_i$. 
Each of the sets $U_i$ are called the {\em parts} of the partition. A {\em consecutive partition} of $\{1,\ldots,n\}$ is a partition ${\cal U}_P = \{U_1, U_2, \ldots U_t\}$ of $\{1,\ldots,n\}$ 
such that for every integer $i \leq t$ and integers $1 \leq x \leq y \leq z$, if $x \in U_i$ and $z \in U_i$ then $y \in U_i$ as well. 
\end{definition}
\begin{proposition}
Let $\mathscr{P}_t^{n}$ denote the collection of all consecutive partitions of $\{1,\ldots,n\}$ with exactly $t$ parts. Let 
${\cal Z}_{s,t}^p$ be the set of all $t$-tuples $(p_1,p_2, \ldots, p_t)$ of integers such that $\sum_{i \leq t} p_i = p$ and $0 \leq p_i \leq s$ for all $i$. 
Then for every $t$, $|\mathscr{P}_t^{n}| = {n+t-1 \choose t-1}$ and $|{\cal Z}_{s,t}^p| \leq {p+t-1 \choose t-1}$.
\end{proposition}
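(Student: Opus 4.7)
The plan is to prove both statements by reducing them to the classical stars-and-bars identity, which counts the number of compositions of an integer $n$ into $t$ nonnegative parts as $\binom{n+t-1}{t-1}$.

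For the first equality, I would set up a bijection between $\mathscr{P}_t^n$ and the set of tuples $(n_1, \ldots, n_t)$ of nonnegative integers with $\sum_{i \le t} n_i = n$. Given a consecutive partition ${\cal U}_P = \{U_1,\ldots,U_t\}$, associate the tuple of sizes $n_i = |U_i|$. Conversely, given a tuple $(n_1,\ldots,n_t)$ with nonnegative entries summing to $n$, recover the partition by letting $U_i$ consist of the $n_i$ integers $\{n_1 + \cdots + n_{i-1} + 1, \ldots, n_1 + \cdots + n_i\}$ (an empty interval when $n_i = 0$). The consecutivity condition ensures that no other consecutive partition can produce the same tuple, and the construction shows every tuple arises, so the correspondence is a bijection. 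The target count $\binom{n+t-1}{t-1}$ then follows from stars and bars. (Note that the stated formula $\binom{n+t-1}{t-1}$ implicitly allows empty parts; the formula for strictly nonempty parts would be $\binom{n-1}{t-1}$, so I would read the definition of \emph{partition} here as allowing empty parts, consistent with the intended usage later.)

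For the second inequality, observe that ${\cal Z}_{s,t}^p$ is precisely the set of $t$-tuples of nonnegative integers summing to $p$, with the additional constraint $p_i \le s$. Dropping this upper bound can only enlarge the set, and the resulting unconstrained count is again $\binom{p+t-1}{t-1}$ by stars and bars. Hence $|{\cal Z}_{s,t}^p| \le \binom{p+t-1}{t-1}$.

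Neither step is subtle; the only minor care is in verifying that the bijection in the first part is well-defined (which hinges on the consecutivity condition from the preceding definition) and in confirming the convention about empty parts. Both observations are immediate once the correspondence with compositions is written down, so no genuine obstacle arises.
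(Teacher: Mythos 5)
Your proof is correct and is exactly the standard stars-and-bars argument the paper has in mind; indeed the paper states this proposition without any proof, treating it as a routine counting fact. Your observation about empty parts (which the formula $\binom{n+t-1}{t-1}$ implicitly allows, matching how the partitions are indexed and used later) is the right reading of the definition.
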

\begin{lemma}\label{lem:splitSolution} For any $p$, $q$ let $s = \lfloor (\log (p+q))^2 \rfloor$ and 
$t = \lceil \frac{p+q}{s} \rceil$. If there is a construction of generalized $n$-$p$-$q$-separating collections  
with initialization time $\tau_I(n,p,q)$, query times ${Q_{({\chi},p')}}(n,p,q)$ and ${Q_{({\chi}',q')}}(n,p,q)$, 
producing a generalized $n$-$p$-$q$-separating collection with size $\zeta(n,p,q)$, $({\chi},p')$-degree 
$\Delta_{({\chi},p')}(n,p,q)$ and $({\chi}',q')$-degree $\Delta_{({\chi}',q')}(n,p,q)$ then there is a construction 
of generalized $n$-$p$-$q$-separating collection with following parameters 
\begin{itemize}
\item 
$\zeta'(n,p,q) \leq |\mathscr{P}_t^n| \cdot 
\sum_{\substack{(p_1,\ldots,p_t) \in {\cal Z}_{s,t}^{p}}} \prod_{i \leq t} \zeta(n,p_i,s-p_i)$,
\item 
$\tau_I'(n,p,q) = \cO\Big(\big(\sum_{\substack{\hat{p} \leq s}} \tau_I(n,\hat{p},s-\hat{p})\big) + \zeta'(n,p,q) \cdot n^{\cO(1)}\Big)$,
\item 
$\Delta_{(\chi,p')}'(n,p,q) \leq |\mathscr{P}_t^n|\cdot |{\cal Z}_{s,t}^p|\cdot
\max_{\substack{(p_1,\ldots,p_t) \in {\cal Z}_{s,t}^{p} \\ p_1'\leq p_1,\ldots, p_t'\leq p_t \\ p_1'+\ldots+p_t'=p' }} \prod_{i \leq t}  \Delta_{({\chi},p_i')}(n,p_i,s-p_i)$,
\item 
$\Delta'_{(\chi',q')}(n,p,q) \leq |\mathscr{P}_t^n|\cdot |{\cal Z}_{s,t}^p|\cdot
\max_{\substack{(p_1,\ldots,p_t) \in {\cal Z}_{s,t}^{p} \\ q_1'\leq s-p_1,\ldots, q_t'\leq s-p_t \\ q_1'+\ldots+q_t'=q'}} \prod_{i \leq t}  \Delta_{({\chi}',q_i')}(n,p_i,s-p_i)$,
\item 
${Q'_{(\chi,p')}}(n,p,q) =\cO \Big(\Delta'_{(\chi,p')}(n,p,q) \cdot n^{\cO(1)} + |\mathscr{P}_{t}^n|\cdot |{\cal Z}_{s,t}^p| \cdot t\cdot \big( \sum_{\substack{ \hat{p}'\leq \hat{p}\leq s \\ \hat{p}-\hat{p}'\leq p-p' \\ s-\hat{p}\leq q }} Q_{(\chi_{\hat{p}},\hat{p}')}(n,\hat{p},s-\hat{p})\big)  \Big)$
\item 
${Q'_{(\chi',q')}}(n,p,q) =\cO \Big(\Delta'_{(\chi,p')}(n,p,q) \cdot n^{\cO(1)} + |\mathscr{P}_{t}^n|\cdot |{\cal Z}_{s,t}^p| \cdot t\cdot \big( \sum_{\substack{ \hat{q}'\leq \hat{q}\leq s \\ \hat{q}-\hat{q}'\leq q-q' \\ s-\hat{q} \leq p }} Q_{(\chi_{\hat{q}},\hat{q}')}(n,s-\hat{q},\hat{q})\big)  \Big)$
\end{itemize}
\end{lemma}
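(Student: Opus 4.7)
The plan is to build the separating collection by combining sub-collections on the parts of a consecutive partition of $U$. First I invoke the black-box construction $s+1$ times, once for each $\hat{p} \in \{0,1,\ldots,s\}$, to obtain generalized $n$-$\hat{p}$-$(s-\hat{p})$-separating collections $\hat{\cal C}_{\hat{p}} = (\hat{\cal F}_{\hat{p}}, \hat{\chi}_{\hat{p}}, \hat{\chi}'_{\hat{p}})$. I then define
\[ {\cal F} \;=\; \bigcup_{{\cal P} \in \mathscr{P}_t^n}\;\; \bigcup_{(p_1,\ldots,p_t) \in {\cal Z}_{s,t}^p}\; \Bigl\{\, \bigcup_{j=1}^t (F_j \cap U_j) \;:\; F_j \in \hat{\cal F}_{p_j} \text{ for every } j\,\Bigr\}, \]
where ${\cal P} = \{U_1,\ldots,U_t\}$. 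For a query set $A$, the function $\chi(A)$ collects, over every slice $({\cal P},(p_1,\ldots,p_t))$ with $|A \cap U_j| \leq p_j$ for every $j$, the composites $\bigcup_j(F_j \cap U_j)$ obtained by letting $F_j$ range over $\hat{\chi}_{p_j}(A \cap U_j)$; the function $\chi'(B)$ is defined analogously using $\hat{\chi}'_{p_j}(B \cap U_j)$ whenever $|B \cap U_j| \leq s - p_j$. The size bound $\zeta'(n,p,q)$ and the degree bounds $\Delta'_{(\chi,p')}(n,p,q)$, $\Delta'_{(\chi',q')}(n,p,q)$ then follow by a direct count over slices, and the initialization time is the sum of the $s+1$ black-box invocations plus the $n^{\cO(1)}$ cost of materialising each element of ${\cal F}$.

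The heart of the argument is verifying condition $3$ of Definition~\ref{def:twincollection}. Fix pairwise disjoint sets $A_1,\ldots,A_r$ with $\sum_i |A_i| = p$ and $B$ with $|B| = q$, and set $S = A_1 \cup \cdots \cup A_r \cup B$, so $|S| = p+q \leq ts$. I order the elements of $S$ along the natural order of $U = \{1,\ldots,n\}$ and form a consecutive partition ${\cal P}^\star = \{U_1^\star,\ldots,U_t^\star\}$ whose cuts fall between the $s\ell$-th and the $(s\ell+1)$-th elements of $S$ for $\ell = 1,\ldots,t-1$; every part then contains at most $s$ elements of $S$. Setting $p_j = \sum_i |A_i \cap U_j^\star|$ yields a tuple in ${\cal Z}_{s,t}^p$ with $|B \cap U_j^\star| \leq s - p_j$. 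Applying the separating property of $\hat{\cal C}_{p_j}$ part-by-part to the disjoint sets $A_1 \cap U_j^\star,\ldots,A_r \cap U_j^\star$ and $B \cap U_j^\star$ supplies $\hat{F}_j \in \hat{\chi}_{p_j}(A_1 \cap U_j^\star) \cap \cdots \cap \hat{\chi}_{p_j}(A_r \cap U_j^\star) \cap \hat{\chi}'_{p_j}(B \cap U_j^\star)$, and the composite $F = \bigcup_j (\hat{F}_j \cap U_j^\star)$ lies in ${\cal F}$ and simultaneously in each $\chi(A_i)$ and in $\chi'(B)$.

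For the query times, each slice contributes at most $t$ sub-queries of the form $\hat{\chi}_{p_j}(A \cap U_j)$ with $(p_j, |A \cap U_j|) = (\hat{p}, \hat{p}')$ ranging over pairs with $\hat{p}' \leq \hat{p} \leq s$, $\hat{p} - \hat{p}' \leq p - p'$ (since $\sum_j (p_j - p_j') = p - p'$), and $s - \hat{p} \leq q$ (since $|B \cap U_j| \leq |B|$); bounding each part's cost by the full sum over all admissible $(\hat{p},\hat{p}')$ gives the first term of the claimed $Q'_{(\chi,p')}$, while the term $\Delta'_{(\chi,p')} \cdot n^{\cO(1)}$ absorbs the cost of emitting the composites. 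The bound on $Q'_{(\chi',q')}$ is entirely symmetric.

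I expect the main subtlety to be the argument that enumerating every consecutive partition in $\mathscr{P}_t^n$ (together with every tuple in ${\cal Z}_{s,t}^p$) is sufficient: since ${\cal P}^\star$ and $(p_1,\ldots,p_t)$ depend on the choice of $A_1,\ldots,A_r,B$, no single fixed partition would suffice, and it is precisely the full enumeration that guarantees the witness slice is available. Once this is in place, the remaining work is careful but routine bookkeeping against the parameters of the black-box construction.
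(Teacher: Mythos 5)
Your construction coincides with the paper's: you run the black box once for every $\hat{p}\le s$, compose the resulting collections part-wise over all consecutive partitions in $\mathscr{P}_t^n$ and all tuples in ${\cal Z}_{s,t}^p$, and define $\chi$ and $\chi'$ by querying the sub-collections on $A\cap U_j$ and $B\cap U_j$; the size, degree, initialization and query-time bookkeeping is the same routine count. The gap is in your verification of condition~3 of Definition~\ref{def:twincollection}. Your consecutive partition only guarantees that each part contains \emph{at most} $s$ elements of $S=A_1\cup\cdots\cup A_r\cup B$, hence only $|B\cap U_j^\star|\le s-p_j$, and you then invoke ``the separating property'' of the $n$-$p_j$-$(s-p_j)$-separating collection $\hat{\cal C}_{p_j}$ on the sets $A_1\cap U_j^\star,\ldots,A_r\cap U_j^\star$ and $B\cap U_j^\star$. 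But condition~3 is stated only for a blue set of size \emph{exactly} the second parameter, i.e.\ for a set in $\binom{U}{s-p_j}$. Whenever $p+q<st$ (the typical case, since $st$ need not equal $p+q$), at least one part has $|B\cap U_j^\star|<s-p_j$, and nothing in the definition then yields an element of $\hat{\chi}'_{p_j}(B\cap U_j^\star)$: the functions $\chi'$ carry no guaranteed monotonicity under shrinking their argument, so you also cannot pad $B\cap U_j^\star$ up to size $s-p_j$ and pass from $\hat{\chi}'_{p_j}$ of the padded set back to $\hat{\chi}'_{p_j}(B\cap U_j^\star)$.

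The paper closes exactly this hole by proving the statement for the padded parameter $\tilde{q}=st-p\ge q$: for disjoint $A_1,\ldots,A_r$ and $B$ with $|B|=\tilde{q}$ one has $|A\cup B|=st$, so there is a consecutive partition in which \emph{every} part contains exactly $s$ elements of $A\cup B$; then $|B\cap U_i|=s-p_i$ holds with equality, condition~3 of each $\hat{\cal C}_{p_i}$ applies verbatim, and the paper observes that since $\tilde{q}\ge q$ the resulting construction serves as a generalized $n$-$p$-$q$-separating collection. Your proof needs this padding step (or, alternatively, a monotonicity property of the black-box $\chi'$ that Definition~\ref{def:twincollection} does not provide and that would have to be threaded through the whole induction of Lemma~\ref{lem:twin_sep_coll_construction}) before the part-wise application of condition~3 is legitimate. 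The subtlety you single out — that the full enumeration over $\mathscr{P}_t^n$ and ${\cal Z}_{s,t}^p$ is what makes the witness slice available — is correct but is not where the difficulty lies; the exact-size requirement is.
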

\begin{proof}
Set $s = \lfloor (\log (p+q))^2 \rfloor$, $t = \lceil \frac{p+q}{s} \rceil$ and $\tilde{q} = st - p$. We will give 
a construction of generalized $n$-$p$-$\tilde{q}$-separating collections with initialization time, query time, size and degree 
within the claimed bounds above. In the construction we will be using the construction with initialization time 
$\tau_I$, query times ${Q_{({\chi},p')}}$ and ${Q_{({\chi}',q')}}$, size $\zeta$, and degrees $\Delta_{({\chi},p')}$ 
and $\Delta_{({\chi}',q')}$ as a black box. Since  $\tilde{q} \geq q$, a $n$-$p$-$\tilde{q}$-separating collection 
is also a $n$-$p$-$q$-separating collection. We may assume without loss of generality that $U = \{1, \ldots, n\}$.
 
Our algorithm runs  for every $0\leq\hat{p}\leq s$, the initialization of  the given construction of  generalized 
$n$-$\hat{p}$-$(s-\hat{p})$-separating collections. We will refer by 
$({\cal F}_{\hat{p}}, {\chi}_{\hat{p}}, {\chi}'_{\hat{p}})$ to the generalized separating collection constructed for 
$\hat{p}$. For each $\hat{p}$ the initialization of the construction outputs the family ${\cal F}_{\hat{p}}$. 
 
We need to define a few operations on families of sets. For families of sets  ${\cal A}$, ${\cal B}$ over $U$ 
and subset $U' \subseteq U$ we define
\begin{eqnarray*}
{\cal A} \sqcap U' &=& \{A \cap U'~:~A \in {\cal A}\} \\ 
{\cal A} \circ {\cal B} &=& \{A \cup B ~:~A \in {\cal A} \wedge B \in {\cal B}\} 
\end{eqnarray*}
We now define ${\cal F}$ as follows.
\begin{align}\label{eqn:defineFSplit} 
{\cal F} = \bigcup_{\substack{\{U_1,\ldots,U_{t}\} \in \mathscr{P}_{t}^n \\ (p_1,\ldots,p_t) \in {\cal Z}_{s,t}^{p}}} 
(\hat{\cal F}_{p_1} \sqcap U_1) \circ (\hat{\cal F}_{p_2} \sqcap U_2) \circ \ldots \circ (\hat{\cal F}_{p_t} \sqcap U_{t}) 
\end{align}
It follows directly from the definition of  ${\cal F}$ that $|{\cal F}|$ is within the claimed bound for 
$\zeta'(n,p,q)$. For the initialization time, the algorithm spends 
$\cO\left(\sum_{\substack{\hat{p} \leq s}} \tau_I(n,\hat{p},s-\hat{p})\right)$ time to initialize the constructions 
of the generalized $n$-$\hat{p}$-$(s-\hat{p})$-separating collections for all $\hat{p}\leq s$ together. Now the 
algorithm can output the entries of ${\cal F}$ one set at a time by using~\eqref{eqn:defineFSplit}, spending 
$n^{\cO(1)}$ time per output set. Hence the time bound for $\tau'_I(n,p,q)$ follows. 

For every set $A \in \bigcup_{p'\leq p}{U \choose p'}$ we define $\chi(A)$ as follows.
\begin{align}\label{eqn:defineChiSplit} 
\chi(A) = \bigcup_{\substack{\{U_1,\ldots,U_{t}\} \in \mathscr{P}_{t}^n\\(p_1,\ldots,p_t) \in {\cal Z}_{s,t}^{p}~\mbox{\scriptsize such that}\\ \forall U_i~:~|U_i \cap A| \leq p_i}} 
\Big[({\chi}_{p_1}(A \cap U_1) \sqcap U_1) \circ ({\chi}_{p_2}(A \cap U_2) \sqcap U_2) \circ \ldots \\
\nonumber ... \circ ({\chi}_{p_t}(A \cap U_t) \sqcap U_t)\Big] 
\end{align}
Now we show that $\chi(A)\subseteq {\cal F}$. From the definition of generalized $n$-${p_i}$-$(s-{p_i})$-separating 
collections $(\hat{\cal F}_{{p_i}},\chi_{{p_i}},\chi'_{{p_i}})$, each family $\chi_{p_i}(A\cap U_i)$ in~
\eqref{eqn:defineChiSplit} is a subset of $\hat{\cal F}_{p_i}$. This implies that 
$\chi_{p_i}(A\cap U_i)\sqcap U_i\subseteq \hat{\cal F}_{p_i} \sqcap U_i$. Hence $\chi(A)\subseteq {\cal F}$.   
Similar way we can define $\chi'(B)$ for any $B\in\bigcup_{q'\leq q}{U\choose q'}$ as
\begin{align}\label{eqn:defineChiprimeSplit} 
\chi'(B) = \bigcup_{\substack{\{U_1,\ldots,U_{t}\} \in \mathscr{P}_{t}^n\\(p_1,\ldots,p_t) \in {\cal Z}_{s,t}^{p}~\mbox{\scriptsize such that} \\ \forall U_i~:~|U_i \cap B| \leq s-p_i }} 
\Big[({\chi}'_{p_1}(B \cap U_1) \sqcap U_1) \circ ({\chi}'_{p_2}(B \cap U_2) \sqcap U_2) \circ \ldots \\
\nonumber ... \circ ({\chi}'_{p_t}(B \cap U_t) \sqcap U_t)\Big] 
\end{align}
Similar to the proof of $\chi(A)\subseteq {\cal F}$, we can show that $\chi'(B)\subseteq {\cal F}$. 
It follows directly from the definition of  $\chi(A)$ and $\chi'(B)$ that $|\chi(A)|$ and $|\chi'(B)|$ is within the 
claimed bound for $\Delta_{(\chi,p')}'(n,p,q)$ and $\Delta_{(\chi',q')}'(n,p,q)$ respectively. We now describe how 
queries $\chi(A)$ can be answered, and analyze how much time it takes. Given $A$ we will compute $\chi(A)$ using  ~\eqref{eqn:defineChiSplit}. Let $|A|=p'$. 
For each $\{U_1,\ldots,U_{t}\} \in \mathscr{P}_{t}^n$ and $(p_1,\ldots,p_t) \in {\cal Z}_{s,t}^{p}$ such that $p_i' = |U_i \cap A| \leq p_i$ for all $i \leq t$, 
we proceed as follows. First we compute ${\chi}_{p_i}(A \cap U_i)$ for each $i \leq t$, spending in total 
$\cO(\sum_{i \leq t} Q_{(\chi_{p_i},p_i')}(n,p_i,s-p_i))$ time. Now we add each set in 
$({\chi}_{p_1}(A \cap U_1) \sqcap U_1) \circ ({\chi}_{p_2}(A \cap U_2) \sqcap U_2) \circ \ldots \circ ({\chi}_{p_t}(A \cap U_t) \sqcap U_t)$ 
to $\chi(A)$, spending $n^{\cO(1)}$ time per set that is added to $\chi(A)$, yielding the bound below, 
\begin{align*}
Q'_{(\chi,p')}(n,p,q) \leq \cO\Big(\Delta'_{(\chi,p')}(n,p,q) \cdot n^{\cO(1)} + \sum_{\substack{\{U_1,\ldots,U_{t}\} \in \mathscr{P}_{t}\\(p_1,\ldots,p_t) \in {\cal Z}_{s,t}^{p}~\mbox{\scriptsize such that}\\ \forall U_i~:~p_i' = |U_i \cap A| \leq p_i}} \big[\sum_{i \leq t} Q_{\chi_{p_i},p_i'}(n,p_i,s-p_i)\big]\Big) \\
\leq \cO\Big(\Delta'_{(\chi,p')}(n,p,q) \cdot n^{\cO(1)} + |\mathscr{P}_{t}^n|\cdot |{\cal Z}_{s,t}^p| \cdot 
\max_{\substack{(p_1,\ldots,p_t) \in {\cal Z}_{s,t}^p\\ p_1'\leq p_1,\cdots,p_t'\leq p_t~\mbox{\scriptsize such that}\\p_1'+\cdots +p_t'=p'}} \big( \sum_{i \leq t} Q_{(\chi_{p_i},p_i')}(n,p_i,s-p_i)\big)  \Big) \\
\leq \cO\Big(\Delta'_{(\chi,p')}(n,p,q) \cdot n^{\cO(1)} + |\mathscr{P}_{t}^n|\cdot |{\cal Z}_{s,t}^p| \cdot t\cdot 
\big( \sum_{\substack{ \hat{p}'\leq \hat{p}\leq s \\ \hat{p}-\hat{p}'\leq p-p' \\ s-\hat{p}\leq q }} Q_{(\chi_{\hat{p}},\hat{p}')}(n,\hat{p},s-\hat{p})\big)  \Big) \\
\end{align*}
By doing similar analysis, we get required bound for $Q'_{(\chi',q')}$.  
We now need to argue that $({\cal F}, \chi,\chi')$ is in fact a generalized  $n$-$p$-$\tilde{q}$-separating collection. For any 
$r$, consider pairwise disjoint sets $A_1\in {U \choose p_1},\ldots,A_r\in {U \choose p_r}$ and 
$B \in {U \choose \tilde{q}}$ such that $p_1+\cdots+p_r=p$. Let $A=A_1\cup\cdots\cup A_r$. There exists a 
consecutive partition $\{U_1, \ldots, U_t\} \in \mathscr{P}_t^n$ of $U$ such that for every $i \leq t$ we have that 
$|(A \cup B) \cap U_i| = \frac{p+\tilde{q}}{t} = s$. For each $i \leq t$ set $p_i =  |A \cap U_i|$ and 
$q_i = |B \cap U_i| = s - p_i$. For every $i \leq t$ the tuple $({\cal F}_{p_i}, {\chi}_{p_i},\chi'_{p_i})$ form 
a $n$-$p_i$-$q_i$-separating collection. Hence 
$\exists F_i \in \chi_{p_i}(A_1\cap U_i)\cap\ldots\cap\chi_{p_i}(A_r\cap U_i) \cap \chi'_{p_i}(B\cap U_i)$ 
because $|A_1\cap U_i|+\ldots+|A_r\cap U_i|=p_i$, $|B\cap U_i|=q_i$ and 
$({\cal F}_{p_i}, {\chi}_{p_i},\chi'_{p_i})$ is a $n$-$p_i$-$q_i$-separating collection. That is 
$F_i \in \chi_{p_i}(A_j\cap U_i)$ for all $j\leq r$ and $F_i \in \chi'_{p_i}(B\cap U_i)$. Let 
$F=\bigcup_{i\leq t}F_i\cap U_i$. By construction of $\chi$ and $\chi'$, $F \in \chi(A_j)$ for all $j\leq r$  and 
$F \in \chi'(B)$. Hence $F \in \chi(A_1)\cap\ldots\cap \chi(A_r)\cap\chi'(B)$. This completes the proof
\end{proof}

Now we are ready to prove the Lemma~\ref{lem:twin_sep_coll_construction}. We restate the lemma for easiness of presentation. 

\medskip 


{{\bf Lemma~\ref{lem:twin_sep_coll_construction}} \em 
Given a constant $x$ such that $0<x<1$, there is a construction of generalized $n$-$p$-$q$- separating collection with the following parameters
\begin{itemize} 
\item size, $\zeta(n,p,q) \leq 2^{\cO(\frac{p+q}{\log\log\log(p+q)})}\cdot \frac{1}{x^p(1-x)^q}\cdot (p+q)^{O(1)} \cdot \log n$
\item initialization time, $\tau_I(n,p,q) \leq  2^{O(\frac{p+q}{\log\log\log(p+q)})}\cdot \frac{1}{x^p(1-x)^q}\cdot (p+q)^{O(1)} \cdot n\log n$
\item $(\chi,p')$-degree, $\Delta_{(\chi,p')}(n,p,q) \leq  2^{O(\frac{p+q}{\log\log\log(p+q)})}\cdot \frac{1}{x^{p-p'}(1-x)^q}\cdot (p+q)^{O(1)} \cdot \log n$
\item $(\chi,p')$-query time, $Q_{(\chi,p')}(n,p,q) \leq  2^{O(\frac{p+q}{\log\log\log(p+q)})}\cdot \frac{1}{x^{p-p'}(1-x)^q}\cdot (p+q)^{O(1)} \cdot \log n$
\item $(\chi',q')$-degree, $\Delta_{(\chi',q')}(n,p,q) \leq  2^{O(\frac{p+q}{\log\log\log(p+q)})}\cdot \frac{1}{x^{p}(1-x)^{q-q'}}\cdot (p+q)^{O(1)} \cdot \log n$
\item $(\chi',q')$-query time, $Q_{(\chi',q')}(n,p,q) \leq  2^{O(\frac{p+q}{\log\log\log(p+q)})}\cdot \frac{1}{x^{p}(1-x)^{q-q'}}\cdot (p+q)^{O(1)} \cdot \log n$
\end{itemize}}


\begin{proof}
The structure of the proof is as follows. We first create a collection using Lemma~\ref{lem:twin_sep_coll_construction}. 
Then we apply Lemma~\ref{lem:twinreduceUniverse} and obtain another construction. From here onwards we keep applying Lemma~\ref{lem:splitSolution} and Lemma~\ref{lem:twinreduceUniverse} in phases until we achieve the 
required bounds on size, degree, query and intializitaion time. 

We first apply Lemma~\ref{lem:twin_sep_coll_brute_force} and get a construction of $n$-$p$-$q$-twin separating collections with the following parameters.
\begin{itemize}\setlength\itemsep{-.7mm}
\item size, $\zeta^1(n,p,q) =\cO \left(\frac{1}{x^{p}(1-x)^q} \cdot (p^2+q^2+1)\log n \right)$, 
\item initialization time, $\tau_I^1(n,p,q) = \cO({2^n \choose \zeta(n,p,q)} \cdot \frac{1}{x^p(1-x)^q} \cdot n^{\cO(p+q)})$,
\item $(\chi,p')$-degree for $p'\leq p$, $\Delta^1_{(\chi,p')}(n,p,q) = \cO\left(\frac{1}{x^{p-p'}}\cdot\frac{(p^2+q^2+1)}{(1-x)^q} \cdot \log n\right)$
\item $(\chi,p')$-query time ${Q^1_{(\chi,p')}}(n,p,q) = \cO(\frac{1}{x^{p}(1-x)^q} \cdot n^{\cO(1)})=\cO(2^n n^{\cO(1)})$
\item $(\chi',q')$-degree for $q'\leq q$, $\Delta^1_{(\chi',q')}(n,p,q)=\cO\left(\frac{1}{x^{p}(1-x)^{q-q'}}\cdot(p^2+q^2+1) \cdot \log n\right)$
\item $(\chi',q')$-query time, ${Q^1_{(\chi',q')}}(n,p,q)=\cO(\frac{1}{x^{p}(1-x)^q} \cdot n^{\cO(1)})=\cO(2^n n^{\cO(1)})$
\end{itemize}
We apply Lemma~\ref{lem:twinreduceUniverse} to this construction to get a new construction with the following parameter.
\begin{itemize} 
\item size, $\zeta^2(n,p,q)=\cO\left(\frac{1}{x^{p}(1-x)^q} \cdot  (p+q)^{\cO(1)} \cdot \log n\right)$ 
\item initialization time, 
\begin{eqnarray*}
\tau_I^2(n,p,q) &=& \cO\left(\tau_I^1\left((p+q)^2,p,q\right) + \zeta^1\left((p+q)^2,p,q\right) \cdot (p+q)^{\cO(1)} \cdot n \log n\right)\\
&=& \cO\left(
\frac{2^{2^{(p+q)^2}}}{x^p(1-x)^q} \cdot (p+q)^{\cO(p+q)} + \left(\frac{1}{x^{p}(1-x)^q} \cdot (p+q)^{\cO(1)} \cdot n \log n \right)\right)\\
&=&\cO\left( \frac{(p+q)^{\cO(p+q)}}{x^{p}(1-x)^q}\left({2^{2^{(p+q)^2}}}+n \log n\right)\right)
\end{eqnarray*}
\item $(\chi,p')$-degree, $\Delta_{(\chi,p')}^2(n,p,q) =\cO\left( \frac{1}{x^{p-p'}{(1-x)^q}} \cdot  (p+q)^{\cO(1)} \cdot \log n\right)$
\item $(\chi,p')$-query time, $Q_{(\chi,p')}^2(n,p,q)=\cO\left(\left(2^{(p+q)^2} + \frac{1}{x^{p-p'}{(1-x)^q}}\right)   (p+q)^{\cO(1)} \cdot \log n\right)$
\item $(\chi',q')$-degree, $\Delta_{(\chi',q')}^2(n,p,q) =\cO\left( \frac{1}{x^{p}(1-x)^{q-q'}} \cdot  (p+q)^{\cO(1)} \cdot \log n\right)$
\item $(\chi,q')$-query time, $Q_{(\chi',q')}^2(n,p,q)=\cO\left(\left(2^{(p+q)^2} + \frac{1}{x^{p}(1-x)^{q-q'}}\right)   (p+q)^{\cO(1)} \cdot \log n\right)$
\end{itemize}
We apply Lemma~\ref{lem:splitSolution} to this construction. Recall that in Lemma~\ref{lem:splitSolution} 
we set $s=\lfloor(\log (p+q))^2 \rfloor$ and $t = \lceil \frac{p+q}{s} \rceil$.
\begin{eqnarray*}
 \zeta^3(n,p,q) &\leq& |\mathscr{P}_t^{n}| \cdot 
\sum_{(p_1,\ldots,p_t) \in {\cal Z}_{s,t}^{p}} \prod_{i \leq t} \zeta^2(n,p_i,s-p_i)\\
&\leq& n^{\cO(t)}\cdot |{\cal Z}_{s,t}^{p}| \cdot \max_{(p_1,\ldots,p_t) \in {\cal Z}_{s,t}^{p}} \prod_{i \leq t} \zeta^2(n,p_i,s-p_i)\\
&\leq& n^{\cO(t)}\cdot (p+q)^{\cO(t)}\cdot \frac{1}{x^{p}(1-x)^{q+s}}\cdot s^{\cO(t)}\cdot (\log n)^{\cO(t)}\\
&\leq& n^{\cO(\frac{p+q}{\log^2(p+q)})}\cdot \frac{1}{x^{p}(1-x)^{q}} \qquad\qquad\quad \left(\mbox{Because} \left(\frac{1}{1-x}\right)^s\in n^{\cO(t)}.\right)\\
\end{eqnarray*}
\begin{eqnarray*}
 \tau_I^3(n,p,q) &=&\cO\left(\left(\sum_{\hat{p} \leq s} \tau_I^2(n,\hat{p},s-\hat{p})\right) + \zeta^3(n,p,q) \cdot n^{\cO(1)}\right)\\
&=&\cO\left(\left(\sum_{\hat{p} \leq s} \frac{s^{\cO(s)}}{x^{\hat{p}}(1-x)^{s-\hat{p}}}\left({2^{2^{s^2}}}+n \log n\right)\right) + \zeta^3(n,p,q) \cdot n^{\cO(1)}\right)\\
&=& \cO\left(\frac{(\log(p+q))^{\cO(\log^2(p+q))}}{x^{p}(1-x)^{q}}\left({2^{2^{\log^4(p+q)}}}+n \log n\right) + n^{\cO(\frac{p+q}{\log^2(p+q)})}\cdot \frac{1}{x^{p}(1-x)^{q}}\right)\\
\end{eqnarray*}
\begin{eqnarray*}
\Delta_{(\chi,p')}^3(n,p,q) &\leq& |\mathscr{P}_t^n|\cdot |{\cal Z}_{s,t}^{p}| \cdot 
\max_{\substack{(p_1,\ldots,p_t) \in {\cal Z}_{s,t}^{p} \\ p_1'\leq p_1,\ldots, p_t'\leq p_t \\ p_1'+\ldots+p_t'=p'}} \prod_{i \leq t}  \Delta_{(\chi,p')}^2(n,p_i,s-p_i)\\
&\leq& n^{\cO(t)}\cdot (p+q)^{\cO(t)} \cdot\frac{1}{x^{p-p'}(1-x)^{q+s}}\cdot s^{\cO(t)}\cdot (\log n)^{\cO(t)}\\
&\leq& n^{\cO(\frac{p+q}{\log^2(p+q)})}\cdot \frac{1}{x^{p-p'}(1-x)^{q}}\\
\end{eqnarray*}
\begin{eqnarray*}
\Delta_{(\chi',q')}^3(n,p,q) &\leq& |\mathscr{P}_t^n|\cdot |{\cal Z}_{s,t}^{p}| \cdot 
\max_{\substack{(p_1,\ldots,p_t) \in {\cal Z}_{s,t}^{p} \\ q_1'\leq s-p_1,\ldots, q_t'\leq s-q_t \\ q_1'+\ldots+q_t'=q'}} \prod_{i \leq t}  \Delta_{(\chi',q_i')}^2(n,p_i,s-p_i)\\
&\leq& n^{\cO(t)}\cdot (p+q)^{\cO(t)}\cdot \frac{1}{x^{p}(1-x)^{q+s-q'}}\cdot s^{\cO(t)}\cdot (\log n)^{\cO(t)}\\
&\leq& n^{\cO(\frac{p+q}{\log^2(p+q)})}\cdot \frac{1}{x^{p}(1-x)^{q-q'}}\\
\end{eqnarray*}
\begin{eqnarray*}
Q_{(\chi,p')}^3(n,p,q) &\leq& \cO\left(\Delta_{(\chi,p')}^3(n,p,q) \cdot n^{\cO(1)} + 
|\mathscr{P}_{t}^{n}|\cdot |{\cal Z}_{s,t}^{p}| \cdot t \cdot 
\sum_{\substack{ \hat{p}'\leq \hat{p}\leq s \\ \hat{p}-\hat{p}'\leq p-p' \\ s-\hat{p}\leq q }} Q_{(\chi,\hat{p}')}^2(n,\hat{p},s-\hat{p}) \right)\\
 &\leq& \cO\left(\Delta_{(\chi,p')}^3(n,p,q) \cdot n^{\cO(1)} + n^{\cO(t)} \cdot 
\sum_{\substack{ \hat{p}'\leq \hat{p}\leq s \\ \hat{p}-\hat{p}'\leq p-p' \\ s-\hat{p}\leq q }} \left(2^{s^2}+\frac{1}{x^{\hat{p}-\hat{p}'}(1-x)^{s-\hat{p}}}\right)s^{\cO(1)} \log n \right)\\
 &\leq& \cO\left(\frac{n^{\cO(\frac{p+q}{\log^2(p+q)})}}{x^{p-p'}(1-x)^{q}} + n^{\cO(t)} \cdot s^{\cO(1)}\cdot \log n
\left(2^{s^2}+\frac{1}{x^{p-p'}(1-x)^{q}}\right)  \right)\\
 &\leq& \cO\left(\frac{n^{\cO(\frac{p+q}{\log^2(p+q)})}}{x^{p-p'}(1-x)^{q}}  \right)\\
\end{eqnarray*}
Similar way we can bound $Q_{(\chi',q')}^3$ as,
\begin{eqnarray*}
 Q_{(\chi',q')}^3(n,p,q) &\leq& \cO\left(\frac{n^{\cO(\frac{p+q}{\log^2(p+q)})}}{x^{p}(1-x)^{q-q'}}  \right)
\end{eqnarray*}

%
%
We apply Lemma~\ref{lem:twinreduceUniverse} to this construction to get a new construction with the following parameters.
\begin{itemize} 
\item  size, 
$\zeta^4(n,p,q) \leq 2^{\cO(\frac{p+q}{\log(p+q)})}\cdot \frac{1}{x^{p}(1-x)^{q}} \cdot  (p+q)^{\cO(1)} \cdot \log n$,
 \item  initialization time, 
\begin{eqnarray*}
\tau_I^4(n,p,q) &\leq& \cO\left(\tau_I^3\left((p+q)^2,p,q\right) + \zeta^3\left((p+q)^2,p,q\right) \cdot (p+q)^{\cO(1)} \cdot n \log n\right)\\
&\leq& 2^{2^{\log^4(p+q)}}\cdot\frac{(\log(p+q))^{\cO(\log^2(p+q))}}{x^p(1-x)^q} + \frac{2^{\cO(\frac{p+q}{\log (p+q)})}}{x^p(1-x)^q} \cdot (p+q)^{\cO(1)}n\log n\\
\end{eqnarray*}
\item $(\chi,p')$-degree, 
\begin{eqnarray*}
\Delta_{(\chi,p')}^4(n,p,q) &\leq& \Delta_{(\chi,p')}^3\left((p+q)^2,p,q\right) \cdot  (p+q)^{\cO(1)} \cdot \log n \\
&\leq& \frac{2^{\cO(\frac{p+q}{\log (p+q)})}}{x^{p-p'}(1-x)^q} \cdot  (p+q)^{\cO(1)} \cdot \log n
\end{eqnarray*}
\item $(\chi',q')$-degree, 
\begin{eqnarray*}
\Delta_{(\chi',q')}^4(n,p,q) &\leq& \Delta_{(\chi',q')}^3\left((p+q)^2,p,q\right) \cdot  (p+q)^{\cO(1)} \cdot \log n \\
&\leq& \frac{2^{\cO(\frac{p+q}{\log (p+q)})}}{x^{p}(1-x)^{q-q'}} \cdot  (p+q)^{\cO(1)} \cdot \log n
\end{eqnarray*}
\item $(\chi,p')$-query time, 
\begin{eqnarray*}
Q_{(\chi,p')}^4(n,p,q) &\leq& \cO\left(\left( Q_{(\chi,p')}^3\left((p+q)^2,p,q\right) + \Delta_{(\chi,p')}^3\left((p+q)^2,p,q\right) \right) \cdot (p+q)^{\cO(1)} \cdot \log n\right)\\
&\leq& \frac{2^{\cO(\frac{p+q}{\log (p+q)})}}{x^{p-p'}(1-x)^q}\cdot (p+q)^{\cO(1)}\log n
\end{eqnarray*}
\item $(\chi',q')$-query time, 
\begin{eqnarray*}
Q_{(\chi',q')}^4(n,p,q) 
&\leq& \frac{2^{\cO(\frac{p+q}{\log (p+q)})}}{x^{p}(1-x)^{q-q'}}\cdot (p+q)^{\cO(1)}\log n 
\end{eqnarray*}
\end{itemize}
We apply Lemma~\ref{lem:splitSolution} to this construction by setting $s=\lfloor(\log (p+q))^2 \rfloor$ and 
$t = \lceil \frac{p+q}{s} \rceil$.
\begin{itemize}
\item size,
\begin{eqnarray*}
\zeta^5(n,p,q) &\leq& |\mathscr{P}_t^n| \cdot 
\sum_{(p_1,\ldots,p_t) \in {\cal Z}_{s,t}^{p} } \prod_{i \leq t} \zeta^4(n,p_i,s-p_i)\\
&\leq& n^{\cO(t)}\cdot(p+q)^{\cO(t)}\cdot s^{\cO(t)} \cdot 2^{\cO(\frac{st}{\log s})}\cdot (\log n)^{\cO(t)}\cdot \frac{1}{x^p(1-x)^{q+s}}\\ 
&\leq& n^{\cO(\frac{p+q}{\log^2(p+q)})} \cdot 2^{\cO(\frac{p+q}{\log\log(p+q)})} \frac{1}{x^p(1-x)^q}
\end{eqnarray*}
\item initialization time,
\begin{eqnarray*}
\tau_I^5(n,p,q) &\leq& \cO\left(\left(\sum_{\hat{p} \leq s} \tau_I^4(n,\hat{p},s-\hat{p})\right) + \zeta^5(n,p,q) \cdot n^{\cO(1)}\right)\\
&\leq&\cO\left(s\frac{2^{2^{\log^4s}}\cdot (\log s)^{\cO(\log^2s)}}{x^p(1-x)^q}+ \frac{2^{\cO(\frac{s}{\log s})}}{x^p(1-x)^q}\cdot n\log n + n^{\cO(\frac{p+q}{\log^2(p+q)})} \cdot  \frac{2^{\cO(\frac{p+q}{\log\log(p+q)})}}{x^p(1-x)^q}
\right)\\
&\leq&\cO\left(s\frac{2^{2^{\log^4s}}\cdot (\log s)^{\cO(\log^2s)}}{x^p(1-x)^q} + n^{\cO(\frac{p+q}{\log^2(p+q)})} \cdot  \frac{2^{\cO(\frac{p+q}{\log\log(p+q)})}}{x^p(1-x)^q}
\right)\\
&\leq&\cO\left(\frac{2^{2^{\log^4s}}\cdot (s)^{\cO(s)}}{x^p(1-x)^q} + n^{\cO(\frac{p+q}{\log^2(p+q)})} \cdot  \frac{2^{\cO(\frac{p+q}{\log\log(p+q)})}}{x^p(1-x)^q}
\right)\\
&\leq&\cO\left(\frac{2^{2^{(2\log\log(p+q))^4}}\cdot (\log(p+q))^{\cO((\log(p+q))^2)}} {x^p(1-x)^q} 
+ n^{\cO(\frac{p+q}{\log^2(p+q)})} \cdot  \frac{2^{\cO(\frac{p+q}{\log\log(p+q)})}}{x^p(1-x)^q}
\right)
\end{eqnarray*}
 \item $(\chi,p')$-degree, 
\begin{eqnarray*}
\Delta_{(\chi,p')}^5(n,p,q) &\leq& |\mathscr{P}_t^n|\cdot |{\cal Z}_{s,t}^{p}| \cdot  
\max_{\substack{(p_1,\ldots,p_t) \in {\cal Z}_{s,t}^{p} \\ p_1'\leq p_1,\ldots, p_t'\leq p_t \\ p_1'+\ldots+p_t'=p' }} \prod_{i \leq t}  \Delta_{(\chi,p_i')}^4(n,p_i,s-p_i)\\
&\leq& n^{\cO(t)}\cdot (p+q)^{\cO(t)} \cdot \frac{2^{\cO(\frac{st}{\log s})}}{x^{p-p'}(1-x)^{q+s}}\cdot s^{\cO(t)}\cdot (\log n)^{\cO(t)}\\
&\leq& n^{\cO(\frac{p+q}{\log^2(p+q)})} \cdot 2^{\cO(\frac{p+q}{\log\log(p+q)})} \cdot \frac{1}{x^{p-p'}(1-x)^q}
\end{eqnarray*}
\item $(\chi',q')$-degree, 
\begin{eqnarray*}
\Delta_{(\chi',q')}^5(n,p,q) 
&\leq& n^{\cO(\frac{p+q}{\log^2(p+q)})} \cdot 2^{\cO(\frac{p+q}{\log\log(p+q)})} \cdot \frac{1}{x^{p}(1-x)^{q-q'}}
\end{eqnarray*}
\item $(\chi,p')$-query time, 
\begin{eqnarray*}
Q_{(\chi,p')}^5(n,p,q) &\leq& \cO\left(\Delta_{(\chi,p')}^5(n,p,q) \cdot n^{\cO(1)} + |\mathscr{P}_{t}^{n}|\cdot |{\cal Z}_{s,t}^{p}| \cdot   
\max_{\substack{\hat{p}'\leq \hat{p} \leq s}} Q_{(\chi,\hat{p}')}^4(n,\hat{p},s-\hat{p}) \right)\\
&\leq& n^{\cO(\frac{p+q}{\log^2(p+q)})} \cdot 2^{\cO(\frac{p+q}{\log\log(p+q)})} \cdot \frac{1}{x^{p-p'}(1-x)^q}
\end{eqnarray*}
\item $(\chi',q')$-query time,
\begin{eqnarray*}
 Q_{(\chi',q')}^5(n,p,q) 
&\leq& n^{\cO(\frac{p+q}{\log^2(p+q)})} \cdot 2^{\cO(\frac{p+q}{\log\log(p+q)})} \cdot \frac{1}{x^{p}(1-x)^{q-q'}}
\end{eqnarray*}
\end{itemize}
We apply Lemma~\ref{lem:twinreduceUniverse} to this construction to get a new construction with the following parameters.
\begin{itemize} 
\item size, 
\begin{eqnarray*}
\zeta^6(n,p,q) &\leq& \zeta^5\left((p+q)^2,p,q\right) \cdot  (p+q)^{\cO(1)} \cdot \log n\\
&\leq & 2^{\cO(\frac{p+q}{\log\log(p+q)})}\cdot \frac{(p+q)^{\cO(1)}}{x^p(1-x)^q}\cdot \log n
\end{eqnarray*}
\item initialization time, 
\begin{eqnarray*}
\tau_I^6(n,p,q) &\leq& \cO\left(\tau_I^5\left((p+q)^2,p,q\right) + \zeta^5\left((p+q)^2,p,q\right) \cdot (p+q)^{\cO(1)} \cdot n \log n\right)\\
&=&\cO\left(\frac{2^{2^{(2\log\log(p+q))^4}}\cdot (\log(p+q))^{\cO((\log(p+q))^2)}} {x^p(1-x)^q} + 2^{\cO(\frac{p+q}{\log\log(p+q)})}\cdot \frac{(p+q)^{\cO(1)}}{x^p(1-x)^q}\cdot n\log n \right)
\end{eqnarray*}
\item $(\chi,p')$-degree, 
\begin{eqnarray*}
\Delta_{(\chi,p')}^6(n,p,q) &\leq& \Delta_{(\chi,p')}^5\left((p+q)^2,p,q\right) \cdot  (p+q)^{\cO(1)} \cdot \log n \\
&\leq& \cO\left( 2^{\cO(\frac{p+q}{\log\log(p+q)})} \cdot \frac{1}{x^{p-p'}(1-x)^q} \cdot (p+q)^{\cO(1)} \cdot \log n\right)
\end{eqnarray*}
\item $(\chi,p')$-query time,
\begin{eqnarray*}
Q_{(\chi,p')}^6(n,p,q) &\leq& \cO\left(\left(Q_{(\chi,p')}^5\left((p+q)^2,p,q\right) + \Delta_{(\chi,p')}^5\left((p+q)^2,p,q\right) \right) \cdot (p+q)^{\cO(1)} \cdot \log n\right)\\
&\leq& \cO\left( 2^{\cO(\frac{p+q}{\log\log(p+q)})} \cdot \frac{1}{x^{p-p'}(1-x)^q} \cdot (p+q)^{\cO(1)} \cdot \log n\right)
\end{eqnarray*}
\item $(\chi',q')$-degree, 
\begin{eqnarray*}
\Delta_{(\chi',q')}^6(n,p,q)&=&\Delta_{(\chi',q')}^5\left((p+q)^2,p,q\right) \cdot  (p+q)^{\cO(1)} \cdot \log n\\
&\leq& \cO\left( 2^{\cO(\frac{p+q}{\log\log(p+q)})} \cdot \frac{1}{x^{p}(1-x)^{q-q'}} \cdot (p+q)^{\cO(1)} \cdot \log n\right)
\end{eqnarray*}
\item $(\chi',q')$-query time, 
\begin{eqnarray*}
Q_{(\chi',q')}^6(n,p,q)&=&\cO\left(\left(Q_{(\chi',q')}^5\left((p+q)^2,p,q\right) + \Delta_{(\chi',q')}^5\left((p+q)^2,p,q\right) \right) \cdot (p+q)^{\cO(1)} \cdot \log n\right)\\
&\leq& \cO\left( 2^{\cO(\frac{p+q}{\log\log(p+q)})} \cdot \frac{1}{x^{p}(1-x)^{q-q'}} \cdot (p+q)^{\cO(1)} \cdot \log n\right)
\end{eqnarray*}
\end{itemize}
We apply Lemma~\ref{lem:splitSolution} to this construction by setting $s=\lfloor(\log (p+q))^2 \rfloor$ and 
$t = \lceil \frac{p+q}{s} \rceil$.
\begin{itemize}
\item size,
\begin{eqnarray*}
\zeta^7(n,p,q) &\leq& |\mathscr{P}_t^n| \cdot 
\sum_{(p_1,\ldots,p_t) \in {\cal Z}_{s,t}^{p} } \prod_{i \leq t} \zeta^6(n,p_i,s-p_i)\\
&\leq& n^{O(t)}\cdot(p+q)^{\cO(t)}\cdot s^{\cO(t)} \cdot 2^{\cO(\frac{st}{\log\log s})}\cdot (\log n)^{\cO(t)}\cdot \frac{1}{x^p(1-x)^{q+s}}\\ 
&\leq& n^{\cO(\frac{p+q}{\log^2(p+q)})} \cdot 2^{\cO(\frac{p+q}{\log\log\log(p+q)})} \frac{1}{x^p(1-x)^q}
\end{eqnarray*}
\item initialization time,
\begin{eqnarray*}
 \tau_I^7(n,p,q) &\leq& \cO\left(\left(\sum_{\hat{p} \leq s} \tau_I^6(n,\hat{p},s-\hat{p})\right) + \zeta^7(n,p,q) \cdot n^{\cO(1)}\right)\\
&\leq& 2^{2^{(2\log\log (s))^4}}\cdot \frac{(\log s )^{\cO(\log^2(s))}}{x^p(1-x)^q} + n^{\cO(\frac{p+q}{\log^2(p+q)})} \cdot 2^{\cO(\frac{p+q}{\log\log\log(p+q)})} \frac{1}{x^p(1-x)^q}\\
&\leq& n^{\cO(\frac{p+q}{\log^2(p+q)})} \cdot 2^{\cO(\frac{p+q}{\log\log\log(p+q)})} \frac{1}{x^p(1-x)^q}\\
\end{eqnarray*}
($\because 2^{2^{(2\log \log s)^4}} , (\log s )^{\cO(\log^2(s))} \leq 2^{\cO(\frac{p+q}{\log \log\log(p+q)})}$.
This inequality holds because $\log\log 2^{2^{(2\log \log s)^4}}$ is upper bounded by a polynomial function in 
$\log\log\log(p+q)$ where as $\log\log 2^{\cO(\frac{p+q}{\log \log\log(p+q)})}$ is lower bounded by a polynomial 
function in $\log(p+q)$. Similarly $\log (\log s )^{\cO(\log^2(s))}$ is upper bounded by a polynomial function in 
$\log(p+q)$ where as $\log 2^{\cO(\frac{p+q}{\log \log\log(p+q)})}$ is lower bounded by a polynomial function in $(p+q)$)
 \item $(\chi,p')$-degree, 
\begin{eqnarray*}
\Delta_{(\chi,p')}^7(n,p,q) &\leq& |\mathscr{P}_t^n|\cdot |{\cal Z}_{s,t}^{p}|\cdot 
\max_{\substack{(p_1,\ldots,p_t) \in {\cal Z}_{s,t}^{p} \\ p_1'\leq p_1,\ldots, p_t'\leq p_t \\ p_1'+\ldots+p_t'=p'  }} \prod_{i \leq t}  \Delta_{(\chi,p_i')}^6(n,p_i,s-p_i)\\
&\leq& n^{\cO(t)}\cdot(p+q)^{\cO(t)}\cdot s^{\cO(t)} \cdot 2^{\cO(\frac{st}{\log\log s})}\cdot (\log n)^{\cO(t)}\cdot \frac{1}{x^{p-p'}(1-x)^{q+s}}\\
&\leq& n^{\cO(\frac{p+q}{\log^2(p+q)})} \cdot 2^{\cO(\frac{p+q}{\log\log\log(p+q)})} \cdot \frac{1}{x^{p-p'}(1-x)^{q}}
\end{eqnarray*}
\item $(\chi',q')$-degree, 
\begin{eqnarray*}
\Delta_{(\chi',q')}^7(n,p,q) 
&\leq& n^{\cO(\frac{p+q}{\log^2(p+q)})} \cdot 2^{\cO(\frac{p+q}{\log\log\log(p+q)})} \cdot \frac{1}{x^{p}(1-x)^{q-q'}}
\end{eqnarray*}
\item $(\chi,p')$-query time, 
\begin{eqnarray*}
Q_{(\chi,p')}^7(n,p,q) &\leq& \cO\left(\Delta_{(\chi,p')}^7(n,p,q) \cdot n^{\cO(1)} + |\mathscr{P}_{t}^{n}|\cdot |{\cal Z}_{s,t}^{p}|\cdot t \cdot 
\max_{\hat{p}'\leq \hat{p} \leq s } Q_{(\chi,\hat{p}')}^6(n,\hat{p},s-\hat{p}) \right)\\
&\leq& n^{\cO(\frac{p+q}{\log^2(p+q)})}\cdot  2^{\cO(\frac{p+q}{\log\log\log(p+q)})} \cdot \frac{1}{x^{p-p'}(1-x)^q}\log n 
\end{eqnarray*}
\item $(\chi',q')$-query time,
\begin{eqnarray*}
 Q_{(\chi',q')}^7(n,p,q) 
&\leq& n^{\cO(\frac{p+q}{\log^2(p+q)})}\cdot  2^{\cO(\frac{p+q}{\log\log\log(p+q)})} \cdot \frac{1}{x^{p}(1-x)^{q-q'}}\log n 
\end{eqnarray*}
\end{itemize}
We apply Lemma~\ref{lem:twinreduceUniverse} to this construction to get a new construction with the following parameters.
\begin{itemize} 
\item size, 
\begin{eqnarray*}
\zeta^8(n,p,q) &\leq& \zeta^7\left((p+q)^2,p,q\right) \cdot  (p+q)^{\cO(1)} \cdot \log n\\
&\leq & 2^{\cO(\frac{p+q}{\log\log\log(p+q)})}\cdot \frac{1}{x^p(1-x)^q}\cdot (p+q)^{\cO(1)} \cdot \log n
\end{eqnarray*}
\item initialization time, 
\begin{eqnarray*}
\tau_I^8(n,p,q) &\leq& \cO\left(\tau_I^7\left((p+q)^2,p,q\right) + \zeta^7\left((p+q)^2,p,q\right) \cdot (p+q)^{\cO(1)} \cdot n \log n\right)\\
&\leq & 2^{\cO(\frac{p+q}{\log\log\log(p+q)})}\cdot \frac{1}{x^p(1-x)^q}\cdot (p+q)^{\cO(1)} \cdot n\log n
\end{eqnarray*}
\item $(\chi,p')$-degree, 
\begin{eqnarray*}
\Delta_{(\chi,p')}^8(n,p,q) &\leq& \Delta_{(\chi,p')}^7\left((p+q)^2,p,q\right) \cdot  (p+q)^{\cO(1)} \cdot \log n \\
&\leq & 2^{\cO(\frac{p+q}{\log\log\log(p+q)})}\cdot \frac{1}{x^{p-p'}(1-x)^q}\cdot (p+q)^{\cO(1)} \cdot \log n
\end{eqnarray*}
\item $(\chi,p')$-query time,
\begin{eqnarray*}
Q_{(\chi,p')}^8(n,p,q) &\leq& \cO\left(\left(Q_{(\chi,p')}^7\left((p+q)^2,p,q\right) + \Delta_{(\chi,p')}^7\left((p+q)^2,p,q\right) \right) \cdot (p+q)^{\cO(1)} \cdot \log n\right)\\
&\leq & 2^{\cO(\frac{p+q}{\log\log\log(p+q)})}\cdot \frac{1}{x^{p-p'}(1-x)^q}\cdot (p+q)^{\cO(1)} \cdot \log n
\end{eqnarray*}
\item $(\chi',q')$-degree, 
\begin{eqnarray*}
\Delta_{(\chi',q')}^8(n,p,q)&=&\Delta_{(\chi',q')}^7\left((p+q)^2,p,q\right) \cdot  (p+q)^{\cO(1)} \cdot \log n\\
&\leq & 2^{\cO(\frac{p+q}{\log\log\log(p+q)})}\cdot \frac{1}{x^{p}(1-x)^{q-q'}}\cdot (p+q)^{\cO(1)} \cdot \log n
\end{eqnarray*}
\item $(\chi',q')$-query time, 
\begin{eqnarray*}
Q_{(\chi',q')}^8(n,p,q)&=&\cO\left(\left(Q_{(\chi',q')}^7\left((p+q)^2,p,q\right) + \Delta_{(\chi',q')}^7\left((p+q)^2,p,q\right) \right) \cdot (p+q)^{\cO(1)} \cdot \log n\right)\\
&\leq & 2^{\cO(\frac{p+q}{\log\log\log(p+q)})}\cdot \frac{1}{x^{p}(1-x)^{q-q'}}\cdot (p+q)^{\cO(1)} \cdot \log n
\end{eqnarray*}
\end{itemize}
The final construction satisfies all the claimed bounds. This concludes the proof. 
\end{proof}

\subsection{Representative Sets for Product Families}
We are ready to give  the main theorem about product families using the constructions of generalized 
$n$-$p$-$q$-separating collections.  

\begin{theorem}
\label{thm:product_uniform}
Let ${\cal L}_1$ be a $p_1$-family of sets  and ${\cal L}_2$ be a $p_2$-family of sets over a universe $U$ of 
size $n$. 
Let $w~:~2^{U}\rightarrow \mathbb{N}$ be an additive weight function. 
Let ${\cal L}={\cal L}_1\bullet{\cal L}_2$ and $p=p_1+p_2$. For any $0<x_1,x_2<1$, there exist 
\lminrep{\cal L}{k-p_1-p_2} of size $2^{\cO(\frac{k}{\log\log\log(k)})}\cdot\frac{1}{x_1^p(1-x_1)^{k-p}}\cdot k^{\cO(1)}\log n$ 
and it can be computed in time 
$\cO\left(z(n,k,W)\cdot \left(\frac{1}{x_1^p(1-x_1)^q}+\frac{1}{x_2^{p_1}(1-x_2)^{p_2}}+
\frac{|{\cal L}_1|}{x_1^{p_2}(1-x_1)^{q}(1-x_2)^{p_2}} + \frac{|{\cal L}_2|}{x_1^{p_1}(1-x_1)^{q}x_2^{p_1}}\right)\right)$, 
where $z(n,k,W)=2^{\cO(\frac{k}{\log\log\log(k)})}k^{\cO(1)}n\log n\log W$ and $W$ is the maximum weight defined by $w$.
\end{theorem}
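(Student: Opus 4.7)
The plan is to use two generalized separating collections from Lemma~\ref{lem:twin_sep_coll_construction}: $\mathcal{C}_1=(\mathcal{F}_1,\chi_1,\chi_1')$, an $n$-$p$-$q$-separating collection with parameter $x_1$ (where $q=k-p$), to witness that a candidate $X=A\cup B\in\mathcal{L}$ is disjoint from a query set $Y$; and $\mathcal{C}_2=(\mathcal{F}_2,\chi_2,\chi_2')$, an $n$-$p_1$-$p_2$-separating collection with parameter $x_2$, to witness that $A$ and $B$ are disjoint inside $F$. Intuitively, each $F\in\mathcal{F}_1$ is a guess for a superset of $X$ that avoids $Y$, and each $F'\in\mathcal{F}_2$ splits this superset so that $A\subseteq F\cap F'$ and $B\subseteq F\setminus F'$.

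Given the two collections, I would compute, for every pair $(F,F')\in\mathcal{F}_1\times\mathcal{F}_2$, the quantities
$a(F,F')=\min\{w(A) : A\in\mathcal{L}_1,\;A\subseteq F\cap F'\}$ and
$b(F,F')=\min\{w(B) : B\in\mathcal{L}_2,\;B\subseteq F,\;B\cap F'=\emptyset\}$,
together with witnesses $A^{\star}_{F,F'}$ and $B^{\star}_{F,F'}$. To keep the running time additive rather than multiplicative in $|\mathcal{F}_1|\cdot|\mathcal{F}_2|$, I would populate these two tables by iterating over the input rather than over pairs: for each $A\in\mathcal{L}_1$ query $\chi_1(A)$ and $\chi_2(A)$ and update $a(F,F')$ for every pair in the Cartesian product, and symmetrically for each $B\in\mathcal{L}_2$ iterate over $\chi_1(B)\times\chi_2'(B)$. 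Substituting $p'=p_1$ or $p'=p_2$ into the degree bounds of Lemma~\ref{lem:twin_sep_coll_construction} yields per-element work $\frac{1}{x_1^{p_2}(1-x_1)^q(1-x_2)^{p_2}}$ for $A$ and $\frac{1}{x_1^{p_1}(1-x_1)^q x_2^{p_1}}$ for $B$, which, summed over $\mathcal{L}_1$ and $\mathcal{L}_2$, produce the last two summands of the claimed bound; constructing the two collections themselves contributes the first two summands.

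Finally, for each $F\in\mathcal{F}_1$ I would select the $F'$ minimizing $a(F,F')+b(F,F')$ and insert $A^{\star}_{F,F'}\cup B^{\star}_{F,F'}$ into $\widehat{\mathcal{L}}$ (skipping $F$ when no valid pair exists). Because $A^{\star}\cap B^{\star}=\emptyset$ by construction, the inserted set lies in $\mathcal{L}$ and has weight $a(F,F')+b(F,F')$ by additivity of $w$, so $|\widehat{\mathcal{L}}|\le|\mathcal{F}_1|$, matching the claimed size. For correctness of the min-representative property: given $Y$ with $|Y|\le q$ and a minimum-weight $X=A\cup B\in\mathcal{L}$ disjoint from $Y$, condition~(3) of Definition~\ref{def:twincollection} applied to $\mathcal{C}_1$ with $A_1=A$, $A_2=B$ and query set $Y$ yields some $F\in\chi_1(A)\cap\chi_1(B)\cap\chi_1'(Y)$, and applied to $\mathcal{C}_2$ with $A_1=A$ and query set $B$ yields some $F'\in\chi_2(A)\cap\chi_2'(B)$. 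Hence $A\subseteq F\cap F'$ and $B\subseteq F\setminus F'$, so $a(F,F')+b(F,F')\le w(A)+w(B)=w(X)$, and the set added for $F$ is disjoint from $Y$, lies in $\mathcal{L}$, and has weight at most $w(X)$.

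The main obstacle I expect is the arithmetic of keeping all four exponents of $x_1,1-x_1,x_2,1-x_2$ straight: one must verify that each element of $\mathcal{L}_1$ is touched only through the degrees of $\chi_1$ at level $p_1$ and $\chi_2$ at level $p_1$ (never through $|\mathcal{F}_2|$ as a whole), and symmetrically for $\mathcal{L}_2$, so that the final bound is a \emph{sum} rather than a product of four terms. The $\log W$ factor in $z(n,k,W)$ absorbs the cost of arithmetic on weights, while the remaining $2^{\OH(k/\log\log\log k)}k^{\OH(1)} n\log n$ factors are exactly the lower-order terms in the size/initialization/degree bounds of Lemma~\ref{lem:twin_sep_coll_construction}.
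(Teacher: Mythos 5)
Your proposal is correct and is essentially the paper's own proof: the paper builds the same two separating collections (parameter $x_1$ for the $p$-vs-$q$ split, parameter $x_2$ for the $p_1$-vs-$p_2$ split), encodes your pair relation $(F,F')$ as length-four cycles $FAHB$ in a $4$-partite graph, performs the same per-$F$ decoupled minimization over witnesses, and charges the running time to the products of the two degrees of each $A\in{\cal L}_1$ and $B\in{\cal L}_2$, exactly as you do. The only difference is presentational (graph/marking procedure versus your sparse table over touched pairs), which yields the same size bound $|\widehat{\cal L}|\leq|{\cal F}|$ and the same four-term time bound.
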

\begin{proof}
 We set $p=p_1+p_2$ and $q=k-p$. To obtain the desired construction we first 
 define an auxiliary graph and then use it to obtain the $q$-representative for the product family $\cal L$. 
 We first obtain two families of separating collections.
 \begin{itemize}
 \item Apply Lemma~\ref{lem:twin_sep_coll_construction} for $0<x_1<1$ and construct a $n$-$p$-$q$-separating collection $({\cal F,\chi_{\cal F}, \chi_{\cal F}'})$ of size 
$2^{\cO(\frac{p+q}{\log\log\log(p+q)})}\cdot\frac{1}{x_1^p(1-x_1)^q}\cdot(p+q)^{\cO(1)}\log n$ in time linear in the size of ${\cal F}$. 
 \item Apply Lemma~\ref{lem:twin_sep_coll_construction} for $0<x_2<1$ and construct a $n$-$p_1$-$p_2$-separating collection $({\cal H,{\chi}_{\cal H}, \chi_{\cal H}'})$ of size 
$2^{\cO(\frac{p_1+p_2}{\log\log\log(p_1+p_2)})}\cdot\frac{1}{x_2^{p_1}(1-x_2)^{p_2}}\cdot(p_1+p_2)^{\cO(1)}\log n$ in time linear in the size of ${\cal H}$. 
\end{itemize}
Now we construct a graph $G=(V,E)$ where the vertex set $V$ contains a vertex each for sets in ${\cal F}\uplus{\cal H}\uplus {\cal L}_1\uplus{\cal L}_2$.  For clarity of presentation we name the vertices by the corresponding set. Thus, the vertex set  $V={\cal F}\uplus{\cal H}\uplus {\cal L}_1\uplus{\cal L}_2$.  The edge set  $E=E_1\uplus E_2\uplus E_3\uplus E_4$, 
where each $E_i$ for $i\in\{1,2,3,4\}$ is defined as follows (see Figure~\ref{fig_uniform_repset}).
\begin{eqnarray*}
 E_1&=&\Big\{(A,F)~\Big|~A\in {\cal L}_1,~F\in\chi_{\cal F}(A) \Big\}\\
 E_2&=&\Big\{(B,F)~\Big|~B\in {\cal L}_2,~F\in\chi_{\cal F}(B) \Big\}\\
 E_3&=&\Big\{(A,H)~\Big|~A\in {\cal L}_1,~H\in\chi_{\cal H}(A) \Big\}\\
E_4&=& \Big\{(B,F)~\Big|~B\in {\cal L}_2,~F\in\chi_{\cal H}'(B)\Big\}\\
\end{eqnarray*}
Thus $G$ is essentially a $4$-partite graph. 

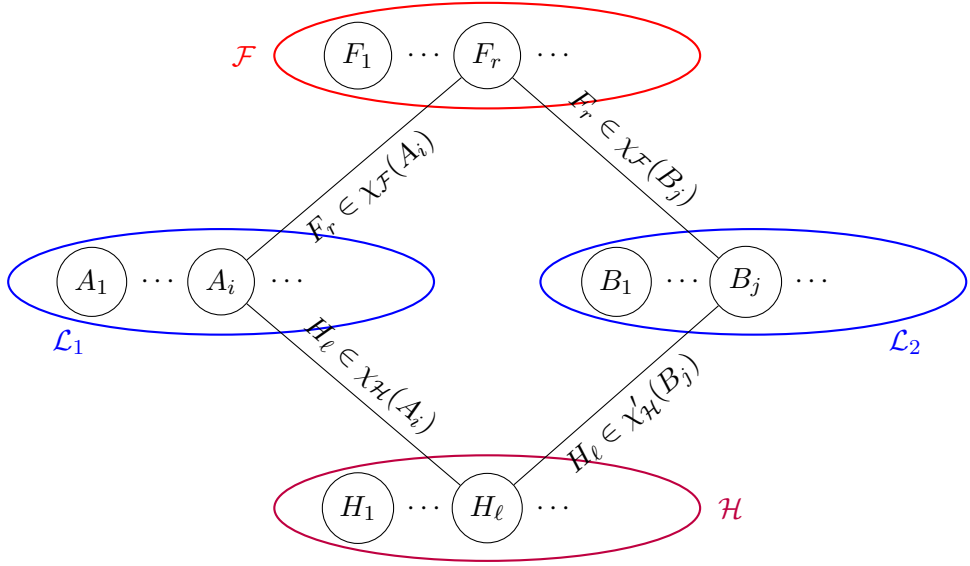
\begin{figure}[!]
   \centering
\begin{tikzpicture}
 \node [red] (F) at (-3.2,0) {${\cal F}$}; 
 \draw [red, thick](0,0) ellipse (2.8cm and 0.7cm);
 \node[draw,circle] (F1) at (-1.7,0) {$F_1$};
 \node (d1) at ( -0.8,0) {$\cdots$}; 
\node[draw,circle] (Fr) at (0,0) {$F_{r}$};
\node (d1) at ( 0.9,0) {$\cdots$};

\node [blue](L_1) at (-5.5,-3.8) {${\cal L}_1$};
\draw [blue,thick](-3.5,-3) ellipse (2.8cm and 0.7cm);
 \node[draw,circle] (A1) at (-5.2,-3) {$A_1$};
 \node (d1) at ( -4.3,-3) {$\cdots$}; 
\node[draw,circle] (Ai) at (-3.5,-3) {$A_{i}$};
\node (d1) at ( -2.6,-3) {$\cdots$};

\node [blue](L_2) at (5.5,-3.8) {${\cal L}_2$};
 \draw [blue,thick](3.5,-3) ellipse (2.8cm and 0.7cm);
 \node[draw,circle] (B1) at (1.7,-3) {$B_1$};
 \node (d1) at ( 2.6,-3) {$\cdots$}; 
\node[draw,circle] (Bj) at (3.4,-3) {$B_{j}$};
\node (d1) at ( 4.3,-3) {$\cdots$};

\node [purple](H) at (3.2,-6) {${\cal H}$};
\draw [purple,thick](0,-6) ellipse (2.8cm and 0.7cm);
\node[draw,circle] (H1) at (-1.7,-6) {$H_1$};
 \node (d1) at ( -0.8,-6) {$\cdots$}; 
\node[draw,circle] (Hl) at (0,-6) {$H_{\ell}$};
\node (d1) at ( 0.9,-6) {$\cdots$};

\draw (Ai) edge node[below,sloped] {$F_r\in {\chi}_{{\cal F}}(A_i)$} (Fr);
\draw (Fr) edge node[above,sloped] {$F_r\in {\chi}_{{\cal F}}(B_j)$} (Bj);
\draw (Ai) edge node[above,sloped] {$H_\ell\in {\chi}_{{\cal H}}(A_i)$} (Hl);
\draw (Hl) edge node[below,sloped] {$H_\ell\in {\chi}'_{{\cal H}}(B_j)$} (Bj);
\end{tikzpicture}
\caption{Graph constructed from ${\cal L}_1, {\cal L}_2, {\cal F}$ and ${\cal H}$}
\label{fig_uniform_repset}
\end{figure}

\paragraph{Algorithm.} The construction of $\widehat{\cal L}$ is as follows. For a set $F\in{\cal F}$, we call a pair 
of sets $(A,B)$ {\em cyclic}, if $A\in{\cal L}_1,~B\in{\cal L}_2$ and there exists $H\in{\cal H}$ such that $FAHB$ forms a cycle of length four in $G$. Let ${\cal J}(F)$ denote the family of cyclic pairs for a set $F\in {\cal F}$ and 
\[w_F=\min_{(A,B)\in {\cal J}(F)} w(A)+w(B).\] 
We obtain the family $\widehat{\cal L}$ by adding  $A\cup B$ for every set $F\in {\cal F}$ such that $(A,B)\in {\cal J}(F)$ and $w(A)+w(B)=w_F$. Indeed, if the family ${\cal J}(F)$ is empty then we do not add any set to $\widehat{\cal L}$ corresponding to $F$. 
%
The procedure to find the smallest weight $A\cup B$ for any $F$ is as follows. We first mark 
the vertices of $N_G(F)$ (the neighbors of $F$). Now we mark the neighbors of  
${\cal P}=(N_G(F)\cap {\cal L}_1)$ in ${\cal H}$. For every marked vertex $H\in {\cal H}$, we associate a set $A$  of minimum weight such that $A\in ({\cal P}\cap N_G(H))$.
This can be done sequentially  as follows. Let ${\cal P}=\{S_1,\ldots,S_\ell\}$. Now iteratively visit the neighbors of 
$S_i$ in ${\cal H}$, $i\in [\ell]$, and for each vertex of $\cal H$ store the smallest weight vertex $S \in {\cal P}$ it has seen so far. 
After this we have a marked set of vertices 
in ${\cal H}$ such that with each marked vertex $H$ in ${\cal H}$ we stored a smallest weight marked vertex in 
${\cal L}_1$ which is a neighbor of $H$.  Now for each marked vertex $B$ in ${\cal L}_2$, we go through the neighbors of  $B$ in the marked set of vertices in ${\cal H}$ and associate (if possible) a second vertex (which is a minimum weighted marked neighbor from ${\cal L}_2$) with each marked vertex in ${\cal H}$. We obtain a pair of sets $(A,B)\in{\cal J}(F)$ such that $w(A)+w(B)=w_F$. 
This can be easily done by keeping a variable that stores a minimum weighted $A\cup B$ seen after every step of 
marking procedure. 
Since for each $F\in{\cal F}$ we add at most one set to $\widehat{\cal L}$, the size of $\widehat{\cal L}$ follows.

\paragraph{Correctness.} We first show that $\widehat{\cal L}\subseteq \cal L$. Towards this we only need to show 
that for every  $A\cup B\in\widehat{\cal L}$ we have that $A\cap B=\emptyset$. Observe that if $A\cup B\in \widehat{\cal L}$  
then there exists a $F\in{\cal F},~H\in {\cal H}$ such that $FAHB$ forms a cycle of length four in the graph 
$G$. So $H\in\chi_{\cal H}(A)$ and $H\in \chi_{\cal H}'(B)$. This means $A\subseteq H$ and $B\cap H=\emptyset$. So we conclude $A$ and $B$ are disjoint and hence $\widehat{\cal L}\subseteq \cal L$. 
We also need to show that if there exist pairwise disjoint sets $A\in{\cal L}_1,B\in{\cal L}_2,C\in{U \choose q}$, 
then there exist $\hat{A}\in{\cal L}_1,\hat{B}\in{\cal L}_2$ such that $\hat{A}\cup\hat{B}\in\widehat{\cal L}$,
$\hat{A},\hat{B},C$ are pairwise disjoint and $w(\hat{A})+w(\hat{B})\leq w(A)+w(B)$. By the property of separating 
collections $({\cal F},\chi_{\cal F},\chi_{\cal F}')$ and $({\cal H},\chi_{\cal H},\chi_{\cal H}')$, we know that 
there exists $F\in\chi_{\cal F}(A)\cap\chi_{\cal F}(B)\cap\chi_{\cal F}'(C),~H\in\chi_{\cal H}(A)\cap\chi_{H}'(B)$. This 
implies that $FAHB$ forms a cycle of length four in the graph $G$. Hence in the construction of 
$\widehat{\cal L}$, we should have chosen  $\hat{A}\in{\cal L}_1$ and $\hat{B}\in{\cal L}_2$ corresponding to $F$ 
such that $w(\hat{A})+w(\hat{B})\leq w(A)+w(B)$ and added to $\widehat{\cal L}$. 
So we know that $F\in\chi_{\cal F}(\hat{A})\cap \chi_{\cal F}(\hat{B})$. Now we claim that $\hat{A},\hat{B}$ 
and $C$ are pairwise disjoint. Since $\hat{A}\cup\hat{B}\in \widehat{\cal L}$, $\hat{A}\cap \hat{B}=\emptyset$. Finally,   since $F\in\chi_{\cal F}(\hat{A})\cap \chi_{\cal F}(\hat{B})$ and $F\in\chi_{\cal F}'(C)$, we get 
$\hat{A},\hat{B}\subseteq F$ and $F\cap C=\emptyset$ which implies $C$ is disjoint from $\hat{A}$ and $\hat{B}$. 
This completes the correctness proof.

\paragraph{Running Time Analysis.}
We first consider the time $T_G$ to construct the graph $G$.
We can construct $\cal F$ in time $2^{\cO(\frac{p+q}{\log\log\log(p+q)})}\cdot\frac{1}{x_1^p(1-x_1)^q}\cdot(p+q)^{\cO(1)}\cdot n\log n$.
We can construct $\cal H$ in time $2^{\cO(\frac{p_1+q}{\log\log\log(p_1+p_2)})}\cdot\frac{1}{x_2^{p_1}(1-x_2)^{p_2}}\cdot(p_1+p_2)^{\cO(1)}\cdot n\log n$. 
Now to add edges in the graph we do as follows. For each vertex in ${\cal L}_1\cup {\cal L}_2$, we query the data structure created, spending the query time mentioned in Lemma~\ref{lem:twin_sep_coll_construction}, and add edges to the  vertices in ${\cal F}\cup {\cal H}$ from it. So the running time to construct $G$ is,
\begin{eqnarray*}
 T_G &\leq& 2^{\cO(\frac{k}{\log\log\log(k)})}k^{\cO(1)}n\log n\Big(
\frac{1}{x_1^p(1-x_1)^q}+\frac{1}{x_2^{p_1}(1-x_2)^{p_2}}+\frac{|{\cal L}_1|}{x_1^{p_2}(1-x_1)^{q}} \\
&&\;\;\;\;\;\;\;\;\;\;\;\;\;\;\;\;\;\;\;\;\;\;\;\;\;\;\;\;\;\;\;\; +\frac{|{\cal L}_2|}{x_1^{p_1}(1-x_1)^{q}} +\frac{|{\cal L}_1|}{(1-x_2)^{p_2}}+\frac{|{\cal L}_2|}{x_2^{p_1}}
\Big). 
\end{eqnarray*}
Now  we bound the time $T_C$ taken to construct $\widehat{\cal L}$ from $G$. To do the analysis we see how may times a 
vertex $A$ in ${\cal L}_1\cup{\cal L}_2$ is visited. It is exactly equal to the product of the degree of $A$ to 
${\cal F}$  (denoted by ${\rm degree}_{\cal F}(A)$) and the degree of $A$ to ${\cal H}$ 
(denoted by ${\rm degree}_{\cal H}(A)$). Also note that two weights can be compared in $\cO(\log W)$ time. Then 
\begin{eqnarray*}
 T_C& \leq&\log W\left( \sum_{A\in{\cal L}_1}{\rm degree}_{\cal F}(A)\cdot {\rm degree}_{\cal H}(A) + 
\sum_{A\in{\cal L}_2} {\rm degree}_{\cal F}(A)\cdot {\rm degree}_{\cal H}(A) \right) \\
&\leq&\log W\left( \sum_{A\in{\cal L}_1} \Delta_{(\chi_{\cal F},p_1)}(n,p,q)\cdot \Delta_{(\chi_{\cal H},p_1)}(n,p_1,p_2) +
\sum_{A\in{\cal L}_2} \Delta_{({\chi}_{\cal F},p_2)}(n,p,q)\cdot \Delta_{(\chi'_{\cal H},p_2)}(n,p_1,p_2)\right) \\
&\leq&2^{\cO(\frac{k}{\log\log\log(k)})}k^{\cO(1)}\log^2 n \log W\left(\frac{|{\cal L}_1|}{x_1^{p_2}(1-x_1)^{q}(1-x_2)^{p_2}} + \frac{|{\cal L}_2|}{x_1^{p_1}(1-x_1)^{q}x_2^{p_1}} \right).
\end{eqnarray*}
So the total running time $T$ is,
\begin{eqnarray*}
T&=&T_G+T_C\\
&\leq&2^{\cO(\frac{k}{\log\log\log(k)})}k^{\cO(1)}n\log n\cdot \log W\Big(\frac{1}{x_1^p(1-x_1)^q}+\frac{1}{x_2^{p_1}(1-x_2)^{p_2}}\\
&& \;\;\;\;\;\;\;\;\;\;\;\;\;\;\;\;\;\;\;\;\;\;\;\;\;\;\;\;\;\;\;\;+\frac{|{\cal L}_1|}{x_1^{p_2}(1-x_1)^{q}(1-x_2)^{p_2}} + \frac{|{\cal L}_2|}{x_1^{p_1}(1-x_1)^{q}x_2^{p_1}} \Big).
\end{eqnarray*}
This completes the proof of the theorem. 
\end{proof}

Now we give a ready to use corollary for Theorem~\ref{thm:product_uniform}. 
\begin{corollary}\label{cor:product_uniform}
Let ${\cal L}_1$ be a $p_1$-family of sets  and ${\cal L}_2$ be a $p_2$-family of sets over a universe 
$U$ of size $n$. Furthermore, let $w~:~2^{U}\rightarrow \mathbb{N}$ be an additive weight function, 
 $|{\cal L}_1|={k\choose p_1}$, $|{\cal L}_2|={k \choose p_2}$, ${\cal L}={\cal L}_1\bullet{\cal L}_2$, $p=p_1+p_2$ and $q=k-p$. 
There exists \lminrep{\cal L}{k-p_1-p_2} of size ${k\choose p}\cdot 2^{o(k)}\cdot \log n$
and 
it can be computed in time 
\[\min_{0<x_1,x_2<1}\cO\left(
\frac{z(n,k,W)}{x_2^{p_1}(1-x_2)^{p_2}}+\frac{{k \choose p_1}\cdot z(n,k,W)}{x_1^{p_2}(1-x_1)^{q}(1-x_2)^{p_2}} + 
\frac{{k\choose p_2}\cdot z(n,k,W)}{x_1^{p_1}(1-x_1)^{q}x_2^{p_1}}+ \frac{(\frac{k}{q})^{q}\cdot z(n,k,W)}{x_1^p(1-x_1)^{q}}\right).\]
Here $z(n,k,W)=2^{\cO(\frac{k}{\log\log\log(k)})}k^{\cO(1)}n\log n\cdot \log W$ and $W$ is the maximum weight defined by $w$. 
\end{corollary}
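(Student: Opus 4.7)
The plan is to derive Corollary~\ref{cor:product_uniform} as a two-stage composition of Theorem~\ref{thm:product_uniform} with the generic representative-set algorithm of Theorem~\ref{thm:repsetweighted}. The intuition is that the output of Theorem~\ref{thm:product_uniform} is already a $q$-representative family of ${\cal L}$, but its size depends on the separating-collection parameter $x_1$ and can exceed the target $\binom{k}{p}\cdot 2^{o(k)}\log n$; a second pass with Theorem~\ref{thm:repsetweighted} trims it down to the desired size, and correctness of the composition is guaranteed by the transitivity Lemma~\ref{lem:reptransitive}.

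For Step~1 I would substitute $|{\cal L}_1|=\binom{k}{p_1}$ and $|{\cal L}_2|=\binom{k}{p_2}$ into the running-time expression of Theorem~\ref{thm:product_uniform}, producing the first three terms of the corollary directly from the second, third, and fourth summands there. The remaining summand $\frac{z(n,k,W)}{x_1^p(1-x_1)^q}$ of the theorem is absorbed by the fourth term of the corollary, which carries an extra factor $(k/q)^q\geq 1$. The output $\widehat{{\cal L}'}$ is a $p$-family that is $q$-representative for ${\cal L}$ and has size $2^{\cO(k/\log\log\log k)}\cdot \frac{1}{x_1^p(1-x_1)^q}\cdot k^{\cO(1)}\log n$.

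For Step~2 I would apply Theorem~\ref{thm:repsetweighted} to $\widehat{{\cal L}'}$ with parameter $q$. This produces a subfamily $\widehat{\cal L}\subseteq\widehat{{\cal L}'}$ of size $\binom{p+q}{p}\cdot 2^{o(p+q)}\log n=\binom{k}{p}\cdot 2^{o(k)}\log n$ that is min $q$-representative for $\widehat{{\cal L}'}$; transitivity (Lemma~\ref{lem:reptransitive}) then upgrades this to $\widehat{\cal L}$ being min $q$-representative for ${\cal L}$ itself, yielding the claimed size. The running time of this call is $\cO\bigl(|\widehat{{\cal L}'}|\cdot(k/q)^q\log n + |\widehat{{\cal L}'}|\cdot\log|\widehat{{\cal L}'}|\cdot\log W\bigr)$; substituting the bound on $|\widehat{{\cal L}'}|$ absorbs this into $\frac{(k/q)^q}{x_1^p(1-x_1)^q}\cdot z(n,k,W)$, matching the fourth term. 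Taking $\min_{0<x_1,x_2<1}$ of the sum of the four contributions then gives the stated bound.

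There is no real obstacle; the proof is essentially careful bookkeeping. The one thing to verify is that the $2^{o(k)}$ size blow-up and the $k^{\cO(1)}\log^2 n\cdot\log W$ multiplicative factors introduced by Theorem~\ref{thm:repsetweighted} are indeed absorbed into the prefactor $2^{\cO(k/\log\log\log k)}k^{\cO(1)}n\log n\log W$ of $z(n,k,W)$, and that the output size $\binom{k}{p}\cdot 2^{o(k)}\log n$ of Step~2 is independent of the choice of $(x_1,x_2)$, so that taking the minimum over $x_1,x_2$ on the outside of the running-time expression is legitimate while the size statement remains valid for every choice.
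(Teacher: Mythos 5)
Your proposal is correct and follows essentially the same route as the paper: apply Theorem~\ref{thm:product_uniform} to get an intermediate representative family, trim it with Theorem~\ref{thm:repsetweighted}, invoke Lemma~\ref{lem:reptransitive} for transitivity, and minimize the combined time $T_1+T_2$ over $x_1,x_2$, with the $\frac{z(n,k,W)}{x_1^p(1-x_1)^q}$ term absorbed into the $(k/q)^q$ term exactly as the paper does. Your closing bookkeeping checks (absorption of the $2^{o(k)}$ and $\log^2 n\log W$ factors, and the $x_1,x_2$-independence of the output size) match the paper's implicit treatment.
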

\begin{proof}
We apply Theorem~\ref{thm:product_uniform} for $0<x_1,x_2<1$ 
and find ${\cal L}'\subseteq_{minrep}^{k-p_1-p_2}{\cal L}$ of size 
$2^{O(\frac{k}{\log\log\log(k)})}\cdot\frac{1}{x_1^p(1-x_1)^{k-p}}\cdot k^{O(1)}\log n$ 
in time $T_1=\cO(\frac{z(n,k,W)}{x_1^p(1-x_1)^q}+\frac{z(n,k,W)}{x_2^{p_1}(1-x_2)^{p_2}}+\frac{z(n,k,W)\cdot|{\cal L}_1|}{x_1^{p_2}(1-x_1)^{q}(1-x_2)^{p_2}} + \frac{z(n,k,W)\cdot |{\cal L}_2|}{x_1^{p_1}(1-x_1)^{q}x_2^{p_1}})$. 
Now we apply Theorem~\ref{thm:repsetweighted} and get ${\widehat{\cal L}}\subseteq_{minrep}^{k-p_1-p_2}{\cal L}'$ of 
size ${k\choose p}2^{o(k)}\log n$ in time $T_2=\\\cO\left(\left(\frac{k}{q}\right)^{q} 2^{o(k)}\cdot\frac{1}{x_1^p(1-x_1)^{k-p}}\cdot k^{\cO(1)}\log^2 n\cdot\log W \right)$. Due to Lemma~\ref{lem:reptransitive}, \lminrep{{\cal L}}{k-p_1-p_2}. 
Now we choose $x_1,x_2$ such that $T_1+T_2$ is minimized. So the total running 
time $T$ to construct ${\widehat{\cal L}}$ is,
\begin{eqnarray*}
 T&=&\min_{x_1,x_2}\left(T_1+T_2\right)\\
&=&\min_{x_1,x_2}\cO\left(
\frac{z(n,k,W)}{x_2^{p_1}(1-x_2)^{p_2}}+\frac{z(n,k,W)\cdot|{k \choose p_1}|}{x_1^{p_2}(1-x_1)^{q}(1-x_2)^{p_2}} + 
\frac{z(n,k,W)\cdot |{k\choose p_2}|}{x_1^{p_1}(1-x_1)^{q}x_2^{p_1}}+ \frac{z(n,k,W)\cdot(\frac{k}{q})^{q}}{x_1^p(1-x_1)^{q}}\right).
\end{eqnarray*}
 This completes the proof.
\end{proof}

  \section{Representative set computation for product families of a  linear matroid}
In this section we give an algorithm to compute $q$-representative for product families of a linear matroid. That is, given a matroid \mat{}, families of independent sets ${\cal A}$ and ${\cal B}$ of sets of sizes $p_1$ and $p_2$ respectively, and a positive integer $q$, we compute \rep{{\cal F}}{q}, where, ${\cal F}={\cal A} \bullet {\cal B}$, of size ${p_1+p_2+q \choose p_1+p_2}$ efficiently.   We compute $q$-representative for $\cal F$ in two steps. In the first step we compute an intermediate family of $q$-representative and then apply Theorem~\ref{thm:repsetlovaszweighted} to compute 
$q$-representative of the desired size. The intermediate family of $q$-representative is obtained by computing  
$q$-representative of {\em slices}, ${\cal A} \bullet \{B\}$ for all $B\in {\cal B}$, and then take its union. 
We start with the following lemma that will be central to our faster algorithm for computing the desired $q$-representative for product families of a linear matroid.

\begin{lemma}[Slice Computation Lemma]
\label{lemma:repset_product_1}
Let \mat{}   be a linear matroid of rank $k$,  ${\cal L}$ be a $p_1$-family of independent sets of $M$ 
and $S\in\I$ of size $p_2$. Furthermore, let 
$w~:~{\cal L}\bullet \{S\}~\rightarrow~{\mathbb N}$ be a non-negative weight function.   
Then given a representation \repmat{M}  of $M$ over a field $ \mathbb{F}$, we can find  
\lminrep{{\cal L} \bullet \{S\}}{k-p_1-p_2} 
of size at most ${k-p_2 \choose p_1}$ 
in $\cO\left({k-p_2 \choose p_1} |{\cal L}| p_1^\omega + |{\cal L}| {k-p_2 \choose p_1} ^{\omega-1} \right)$ operations over ${\mathbb F}$.  
\end{lemma}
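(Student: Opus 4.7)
The key idea is to reduce computation of a representative family for $\mathcal{L}\bullet\{S\}$ in $M$ to computation of a representative family in the \emph{contraction} $M/S$, which is a linear matroid of rank $k-p_2$ on ground set $E\setminus S$. Once this reduction is in place, the target size $\binom{k-p_2}{p_1}$ and the target running time are exactly what Theorem~\ref{thm:repsetlovaszweighted} delivers when it is applied to a $p_1$-family in a rank-$(k-p_2)$ matroid with parameter $(k-p_2)-p_1=k-p_1-p_2$, so no new algorithmic machinery is needed beyond what the paper already cites.

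The plan is to proceed in three steps. First, I would compute a representation $A_{M/S}$ of $M/S$ from $A_M$ by Gaussian elimination using the columns of $S$ as pivots (possible since $S$ is independent of size $p_2$): after clearing, the last $k-p_2$ rows restricted to the columns of $E\setminus S$ form such a representation, at a cost polynomial in $k$ and $|E|$ that is absorbed by the main term. Second, I would scan $\mathcal{L}$ to form $\mathcal{L}'=\{A\in\mathcal{L}:A\cap S=\emptyset\text{ and }A\cup S\in\mathcal{I}\}$; by the contraction characterisation, this is precisely a $p_1$-family of independent sets of $M/S$, and the map $A\mapsto A\cup S$ is a bijection between $\mathcal{L}'$ and $\mathcal{L}\bullet\{S\}$, so the weight pushes through via $w'(A)=w(A\cup S)$. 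Third, I would invoke Theorem~\ref{thm:repsetlovaszweighted} on $(M/S,\mathcal{L}',w')$ with parameter $k-p_1-p_2$ to obtain $\widehat{\mathcal{L}}'\subseteq_{minrep}^{k-p_1-p_2}\mathcal{L}'$ of size at most $\binom{k-p_2}{p_1}$ in the claimed number of operations, and output $\{A\cup S:A\in\widehat{\mathcal{L}}'\}$.

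For correctness, the one thing to check is that lifting through the contraction preserves $(k-p_1-p_2)$-representativeness. Given any $Y\subseteq E$ with $|Y|\leq k-p_1-p_2$ and any $A\cup S\in\mathcal{L}\bullet\{S\}$ disjoint from $Y$ with $(A\cup S)\cup Y\in\mathcal{I}$, disjointness forces $Y\subseteq E\setminus S$, and by the defining property of contraction $A\cup Y\in\mathcal{I}_{M/S}$. Representativeness of $\widehat{\mathcal{L}}'$ in $M/S$ then yields some $\widehat A\in\widehat{\mathcal{L}}'$ with $\widehat A\cap Y=\emptyset$, $\widehat A\cup Y\in\mathcal{I}_{M/S}$, and $w'(\widehat A)\leq w'(A)$; lifting back, $\widehat A\cup S$ is disjoint from $Y$, $(\widehat A\cup S)\cup Y\in\mathcal{I}$, and $w(\widehat A\cup S)\leq w(A\cup S)$, as required.

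There is no real obstacle in this argument: the whole point of the lemma is the observation that pinning $S$ inside every member of the product family lets one effectively drop the rank by $p_2$, replacing the naive bound $\binom{k}{p_1+p_2}$ (obtained by treating $\mathcal{L}\bullet\{S\}$ as an opaque $(p_1+p_2)$-family in $M$) by the tighter $\binom{k-p_2}{p_1}$. The only technical care is in writing down the equivalence $A\in\mathcal{I}_{M/S}\iff A\cup S\in\mathcal{I}_M$ for $A\subseteq E\setminus S$, which is standard; the complexity bookkeeping then matches Theorem~\ref{thm:repsetlovaszweighted} line for line.
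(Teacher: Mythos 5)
Your proposal is correct and follows essentially the same route as the paper's proof: row-reduce $A_M$ so that the columns of $S$ form an identity block, pass to the contracted matroid $M/S$ (represented by the remaining $(k-p_2)\times|E\setminus S|$ submatrix), view ${\cal L}\bullet\{S\}$ as the $p_1$-family $\{X : X\cup S\in{\cal L}\bullet\{S\}\}$ of independent sets of $M/S$ with weights $w'(X)=w(X\cup S)$, apply Theorem~\ref{thm:repsetlovaszweighted} with parameter $k-p_1-p_2$, and lift back by re-attaching $S$. The only difference is presentational: where you invoke the standard contraction equivalence $A\in{\cal I}_{M/S}\iff A\cup S\in{\cal I}_M$ (for $A\subseteq E\setminus S$), the paper verifies both directions of this equivalence, and its use in the representativeness argument, by explicit linear-algebra computations on the block matrix.
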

\begin{proof}

Observe that  ${{\cal L}\bullet \{S\}}$ is a $p_1+p_2$-family of independent sets of $M$ and all sets in 
${{\cal L}\bullet \{S\}}$ contain $S$ as a subset.  Let \repmat{M}  the matrix representing the matroid $M$ over a field 
${\mathbb F}$.
Without loss of generality we can assume that the first $p_2$ columns of \repmat{M} correspond to the elements in $S$. 
Furthermore, we can also 
assume that the first $p_2$ columns and $p_2$ rows form an identity matrix $I_{p_2\times p_2}$. That is, if $S$ denotes the first $p_2$ columns and $Z$ denotes the first $p_2$ rows then the submatrix $A_M[Z,S]$ is $I_{p_2\times p_2}$. The reason for the last assertion is that if the matrix is not in the required form then we can apply elementary row operations and obtain the matrix in the desired form. This also allows us to assume that the number of rows in \repmat{M} is $k$. 
So \repmat{M} have the following form. 
\[ \left( \begin{array}{c|c} I_{p_2\times p_2} &  \;\;A\;\; \\ \hline
0 & \;\;B\;\; \end{array} \right) \]

\medskip 

Let \repmat{M/S} be the matrix obtained after deleting first $p_2$ rows and first $p_2$ columns from \repmat{M}. 
That is, \repmat{M/S}$=B$. 
Let $M/S=(E_s,{\cal I}_s)$ be the matriod represented by the \repmat{M/S} on the underlying ground set 
$E_s=E\setminus S$.  Observe that the \rank{M/S}=\rank{B}$=k-p_2$, else the \rank{A_M} would become strictly smaller than $k$.  Let $e_1,e_2,\ldots, e_{p_2}$ be the first ${p_2}$ column vectors of \repmat{M}, i.e., they are columns corresponding to the elements of $S$. 
For a column vector $v$ in \repmat{M}, $\bar{v}$ is used to denote the column vector restricted to the matrix \repmat{M/S} (i.e., $\bar{v}$ contains the last $k-{p_2}$ entries of $v$).

Now consider the set ${\cal L}(S)=\{X\;|\;X\cup S\in {\cal L}\bullet\{S\}\}$. We also define a new weight function 
 $w'~:~{\cal L}(S)\rightarrow{\mathbb N}$  as follows: $w'(X)=w(X\cup S)$.  We would like to compute $k-p_2$ representative for ${\cal L}(S)$. Towards that goal we first show that  ${\cal L}(S)$ is a ${p_1}$-family of independent sets of $M/S$. Let $X\in {\cal L}(S)$. We know that $X\cup S\in {\cal I}$. 
Let $v_1,v_2,\ldots,v_{p_1}$ be the column vectors in \repmat{M} corresponding to the elements in $X$. 
Suppose $X\notin {\cal I}_s$. Then there exist coefficients $\lambda_1,\dots,\lambda_{p_1}$ such that $\lambda_1 \bar{v}_1+\lambda_2\bar{v}_2+\dots +\lambda_{p_1}\bar{v}_{p_1}=\vec{0}$ and at least one of them is non-zero. Then 
 \begin{displaymath}
     \lambda_1 v_1+\lambda_2 v_2+\dots +\lambda_{p_1} v_{p_1}=\left(
      \begin{array}{c} 
         a_1\\
         \vdots\\
         a_{p_2}\\
         0\\
         \vdots\\ 
         0 
      \end{array} \right)
      \end{displaymath}
This implies that $-a_1 e_1- a_2 e_2-\cdots-a_{p_2} e_{p_2}+\lambda_1 v_1+\lambda_2 v_2+\dots +\lambda_{p_1} v_{p_1}=\vec{0}$, which contradicts the fact that $S\cup X\in {\cal I}$.
Hence $X\in {\cal I}_s$ and ${\cal L}(S)$ is a $p_1$-family of independent sets of $M/S$.

Now we apply Theorem~\ref{thm:repsetlovaszweighted} and find \lminrep{{\cal L}(S)}{k-p_1-p_2} 
of size ${k-p_2 \choose p_1}$, by considering ${\cal L}(S)$ as a  
$p_1$-family of independent sets of the matroid $M/S$. We claim that $\whnd{{\cal L}(S)}\bullet {\{S\}}\subseteq_{minrep}^{k-p_1-p_2}{\cal L}\bullet {\{S\}}$. 
Let $X\cup S\in {\cal L}\bullet {\{S\}}$ and $Y\subseteq E\setminus (X\cup S)$ such that $|Y|=k-p_1-p_2$ and 
$X\cup S\cup Y\in{\cal I}$. We need to show that there exists a $ \whnd{X}\in \whnd{{\cal L}(S)}$ such that 
$\whnd{X}\cup S\cup Y\in {\cal I}$ and $w(\whnd{X}\cup S)\leq w(X\cup S)$. 
We start by showing that that $X\cup Y\in {\cal I}_s$. Let $v_1,v_2,\ldots, v_{k-p_2}$ be the column vectors in \repmat{M} corresponding to the elements of  $X\cup Y$. Suppose $X\cup Y\notin {\cal I}_s$. 
 Then there exist coefficients $\lambda_1,\dots,\lambda_{k-p_2}$ such that $\lambda_1 \bar{v}_1+\lambda_2\bar{v}_2+\dots +\lambda_{k-p_2}\bar{v}_{k-p_2}=\vec{0}$ and at least one of them is non-zero. Then we have the following. 
 \begin{displaymath}
     \lambda_1 v_1+\lambda_2 v_2+\dots +\lambda_{k-p_2} v_{k-p_2}=\left(
      \begin{array}{c} 
         b_1\\
         \vdots\\
         b_{p_2}\\
         0\\
         \vdots\\ 
         0 
      \end{array} \right)
      \end{displaymath}
However this implies  that $-b_1 e_1- b_2 e_2-\cdots-b_{p_2} e_{p_2}+\lambda_1 v_1+\lambda_2 v_2+\dots +\lambda_{k-{p_2}} v_{k-{p_2}}=\vec{0}$, which contradicts the fact that $S\cup X\cup Y \in {\cal I}$.
Hence $X\cup Y\in {\cal I}_s$. Since \lminrep{{\cal L}(S)}{k-p_1-p_2}, there exists a set 
$ \whnd{X}\in {\cal L}(S)$, with $w'(\whnd{X})\leq w'(X)$ (i.e $w(\whnd{X}\cup S)\leq w(X\cup S)$) 
and 
$\whnd{X}\cup Y\in {\cal I}_s$. We claim that $\whnd{X}\cup S\cup Y \in {\cal I}$. Let $u_1,u_2,\ldots , u_{k-{p_2}}$ be the column vectors in \repmat{M} corresponding to the elements of $\whnd{X}\cup Y$. 
Suppose $\whnd{X}\cup S \cup Y \notin {\cal I}$. Then there exist coefficients $\alpha_1,\dots, \alpha_{k}$ such that 
$\alpha_1 e_1+\alpha_2 e_2 +\cdots+\alpha_{p_2} e_{p_2}+\alpha_{{p_2}+1} u_1+\cdots + \alpha_{k} u_{k-{p_2}}=\vec{0}$ and at least one of the coefficients is non-zero. This implies that $\alpha_{{p_2}+1} \bar{u}_1+\cdots + \alpha_{k} \bar{u}_{k-{p_2}}=\vec{0}$,  where $\bar{u_j}$ are restrictions of $u_j$ to the last $k-p_2$ entries.
This contradicts our assumption that $\whnd{X}\cup Y \in{\cal I}_s$. 
Thus we have shown that  $\whnd{X}\cup Y\cup S\in {\cal I}$. 
The size of $\whnd{{\cal L}(S)}\bullet {\{S\}}$ is ${k-{p_2} \choose p_1}$ and 
it can be found in $\cO\left({k-{p_2} \choose p_1} |{\cal L}| p_1^\omega + |{\cal L}| {k-{p_2} \choose p_1} ^{\omega-1} \right)$ operations over ${\mathbb F}$.  
\end{proof}

Now we are ready to prove the main theorem of this section by using Lemma~\ref{lemma:repset_product_1}.

\begin{theorem}
\label{thm:repset_product}
Let \mat{}   be a linear matroid of rank $k$,  ${\cal L}_1$ be a $p_1$-family of independent sets of $M$ and ${\cal L}_2$ be a $p_2$-family of independent sets of $M$. 
Given a representation \repmat{M}  of $M$ over a field $ \mathbb{F}$, we can find  \lminrep{{\cal L}_1 \bullet {\cal L}_2}{k-p_1-p_2} 
of size at most 
${k \choose p_1+p_2}$ in 
$\cO\left(|{\cal L}_2||{\cal L}_1| {k-p_2 \choose p_1} ^{\omega-1} p_1^{\omega} + |{\cal L}_2|{k-p_2 \choose p_1} {k \choose p_1+p_2} ^{\omega-1} (p_1+p_2)^\omega \right)$
operations over ${\mathbb F}$.
\end{theorem}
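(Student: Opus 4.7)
The strategy is a two-phase reduction that exploits Lemma~\ref{lemma:repset_product_1} to avoid ever materializing the full family $\mathcal{L}_1\bullet\mathcal{L}_2$. In the first phase I ``slice'' the product by the sets of $\mathcal{L}_2$, computing a small min representative of each slice; in the second phase I collapse the union of the slice representatives down to the final size bound $\binom{k}{p_1+p_2}$ using the general representative-set routine of Theorem~\ref{thm:repsetlovaszweighted}.

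Concretely, for every $B\in\mathcal{L}_2$ I set $\mathcal{F}_B=\mathcal{L}_1\bullet\{B\}$ and apply Lemma~\ref{lemma:repset_product_1} (with the role of $\mathcal{L}$ played by $\mathcal{L}_1$, the role of $S$ played by $B$, and the weight function restricted to $\mathcal{F}_B$) to obtain $\widehat{\mathcal{F}}_B\subseteq_{minrep}^{k-p_1-p_2}\mathcal{F}_B$ of size at most $\binom{k-p_2}{p_1}$. I then set $\mathcal{S}'=\bigcup_{B\in\mathcal{L}_2}\widehat{\mathcal{F}}_B$. Since $\mathcal{L}_1\bullet\mathcal{L}_2=\bigcup_{B\in\mathcal{L}_2}\mathcal{F}_B$ directly from the definition of $\bullet$, the union lemma (Lemma~\ref{lem:repunion}) gives $\mathcal{S}'\subseteq_{minrep}^{k-p_1-p_2}\mathcal{L}_1\bullet\mathcal{L}_2$. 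The family $\mathcal{S}'$ is a $(p_1+p_2)$-family of independent sets of $M$ of size at most $|\mathcal{L}_2|\binom{k-p_2}{p_1}$, so a single invocation of Theorem~\ref{thm:repsetlovaszweighted} on $\mathcal{S}'$ with $p=p_1+p_2$ and $q=k-p_1-p_2$ yields $\widehat{\mathcal{S}}\subseteq_{minrep}^{k-p_1-p_2}\mathcal{S}'$ of size at most $\binom{k}{p_1+p_2}$. Transitivity (Lemma~\ref{lem:reptransitive}) then delivers $\widehat{\mathcal{S}}\subseteq_{minrep}^{k-p_1-p_2}\mathcal{L}_1\bullet\mathcal{L}_2$, as required.

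For the running time, the slicing phase contributes $\cO\!\left(|\mathcal{L}_2|\bigl(\binom{k-p_2}{p_1}|\mathcal{L}_1|p_1^\omega+|\mathcal{L}_1|\binom{k-p_2}{p_1}^{\omega-1}\bigr)\right)$ field operations by summing the bound of Lemma~\ref{lemma:repset_product_1} over the $|\mathcal{L}_2|$ choices of $B$; since $\omega\geq 2$, both summands are absorbed in $|\mathcal{L}_2||\mathcal{L}_1|\binom{k-p_2}{p_1}^{\omega-1}p_1^\omega$. The collapsing phase, again by Theorem~\ref{thm:repsetlovaszweighted} with $t\leq|\mathcal{L}_2|\binom{k-p_2}{p_1}$, contributes $\cO\!\left(|\mathcal{L}_2|\binom{k-p_2}{p_1}\binom{k}{p_1+p_2}(p_1+p_2)^\omega+|\mathcal{L}_2|\binom{k-p_2}{p_1}\binom{k}{p_1+p_2}^{\omega-1}\right)$, which simplifies to $\cO\!\left(|\mathcal{L}_2|\binom{k-p_2}{p_1}\binom{k}{p_1+p_2}^{\omega-1}(p_1+p_2)^\omega\right)$. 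Adding these two bounds matches the running time claimed in the theorem.

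The only non-routine step is noticing that Lemma~\ref{lem:repunion} commutes min representation with the union over slices, so that representing each $\mathcal{F}_B$ separately and then unifying really does yield a representative of $\mathcal{L}_1\bullet\mathcal{L}_2$; once this is observed, the whole argument is a short assembly of earlier results, and the real matroid content has already been packaged into Lemma~\ref{lemma:repset_product_1} (contraction of $S$ in $M$ and a call to Theorem~\ref{thm:repsetlovaszweighted} in the smaller matroid $M/S$). For this reason I do not anticipate a genuine technical obstacle.
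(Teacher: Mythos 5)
Your proposal is correct and follows essentially the same route as the paper: slice the product as $\bigcup_{B\in{\cal L}_2}{\cal L}_1\bullet\{B\}$, apply the slice computation lemma (Lemma~\ref{lemma:repset_product_1}) to each slice, combine via Lemma~\ref{lem:repunion}, shrink the union with Theorem~\ref{thm:repsetlovaszweighted}, and conclude by transitivity (Lemma~\ref{lem:reptransitive}), with the same running-time accounting. No gaps.
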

\begin{proof}
Let ${\cal L}_2=\{S_1,S_2,\dots,S_{\ell}\}$. Then we have
$${\cal L}_1\bullet {\cal L}_2= \bigcup_{i=1}^\ell {\cal L}_1\bullet \{S_i\}.$$
By Lemma~\ref{lem:repunion}, $${\cal L}=\bigcup_{i=1}^{\ell}\whnd{{\cal L}_1\bullet \{S_i\}}\subseteq_{minrep}^{k-p_1-p_2} {\cal L}_1\bullet {\cal L}_2.$$ 
Using Lemma~\ref{lemma:repset_product_1}, for all $1\leq i\leq \ell$, we find \lminrep{{\cal L}_1 \bullet {\{S_i\}}}{k-p_1-p_2} 
of size ${k-p_2 \choose p_1}$ in 
$\cO\left({k-p_2 \choose p_1} |{\cal L}_1| p_1^\omega + |{\cal L}_1| {k-p_2 \choose p_1} ^{\omega-1} \right)=\cO\left(|{\cal L}_1| {k-p_2 \choose p_1} ^{\omega-1} p_1^{\omega} \right)$ operations over ${\mathbb F}$. 
Now $|{\cal L}|=|\bigcup_{i=1}^{\ell}\whnd{{\cal L}_1\bullet \{S_i\}}|\leq |{\cal L}_2|{k-p_2 \choose p_1}$. Now we apply Theorem~\ref{thm:repsetlovaszweighted} and find 
\lminrep{\cal L}{k-p_1-p_2} 
of size ${k \choose p_1+p_2}$. The number of operations, denoted by  $T_1$, over ${\mathbb F}$ to find $\whnd{{\cal L}}$ from ${\cal L}$ is
\begin{eqnarray*}
T_1&=&\cO\left({k \choose p_1+p_1} |{\cal L}_2|{k-p_2 \choose p_1} (p_1+p_2)^\omega + |{\cal L}_2|{k-p_2 \choose p_1} {k \choose p_1+p_2} ^{\omega-1} \right)\\
  &=&\cO\left(|{\cal L}_2|{k-p_2 \choose p_1} {k \choose p_1+p_2} ^{\omega-1} (p_1+p_2)^\omega \right).
\end{eqnarray*}
By Lemma~\ref{lem:reptransitive}, $\whnd{{\cal L}}\subseteq_{minrep}^{k-p_1-p_2} {\cal L}_1\bullet {\cal L}_2$. 
The number of operations, denoted by $T$, over ${\mathbb F}$ to find $\whnd{{\cal L}}$ from ${\cal L}_1$ and ${\cal L}_2$ is
\begin{eqnarray*}
T&=&|{\cal L}_2|\cdot \cO\left(|{\cal L}_1| {k-p_2 \choose p_1} ^{\omega-1} p_1^{\omega} \right) + T_1 \\
    &=&\cO\left(|{\cal L}_2||{\cal L}_1| {k-p_2 \choose p_1} ^{\omega-1} p_1^{\omega} + |{\cal L}_2|{k-p_2 \choose p_1} {k \choose p_1+p_2} ^{\omega-1} (p_1+p_2)^\omega \right).
 \end{eqnarray*}
 This completes the proof of the theorem.
\end{proof}
The following form of Theorem~\ref{thm:repset_product} will be directly useful in some applications.
 \begin{corollary}\label{cor:product_general_matroid}
  Let \mat{}   be a linear matroid of rank $k$, ${\cal L}_1$ and ${\cal L}_2$ be two families of independent sets 
of $M$ and the number of sets of size $p$ in ${\cal L}_1$ and ${\cal L}_2$ be at most ${k+c \choose p}$. Here, $c$ is 
a fixed constant. 
Let ${\cal L}_{r,i}$ be the set of independent sets of size exactly $i$ in ${\cal L}_r$ for $r\in\{1,2\}$. Then 
for all the pairs $i,j\in[k]$, we can find  \lminrep{{\cal L}_{1,i}\bullet {\cal L}_{2,j}}{k-i-j} 
of size ${k \choose i+j}$, 
 in  total of 
$\cO\left( k^{\omega}\left(2^{\omega}+2\right)^k + k^\omega 2^{k(\omega-1)} 3^k \right)$ operations over 
${\mathbb F}$.
 \end{corollary}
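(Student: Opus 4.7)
The plan is to apply Theorem~\ref{thm:repset_product} to each pair $(i,j)$ and then sum the resulting running times, using the hypothesis $|\mathcal{L}_{r,p}|\leq\binom{k+c}{p}$ and some straightforward binomial identities. For a fixed pair $(i,j)$, Theorem~\ref{thm:repset_product} yields $\widehat{\mathcal{L}_{1,i}\bullet\mathcal{L}_{2,j}}$ of size $\binom{k}{i+j}$ in
\[
\cO\!\left(|\mathcal{L}_{2,j}|\,|\mathcal{L}_{1,i}|\,\binom{k-j}{i}^{\omega-1} i^{\omega} \;+\; |\mathcal{L}_{2,j}|\binom{k-j}{i}\binom{k}{i+j}^{\omega-1}(i+j)^{\omega}\right)
\]
operations over $\mathbb{F}$, so substituting $|\mathcal{L}_{r,p}|\leq\binom{k+c}{p}$ and summing gives two contributions that I will bound separately.

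For the first contribution, the key step will be to pull out $i^\omega\leq k^\omega$, bound $\binom{k-j}{i}^{\omega-1}\leq 2^{(k-j)(\omega-1)}$, and use $\sum_{i}\binom{k+c}{i}=2^{k+c}$. The remaining sum over $j$ becomes $\sum_j\binom{k+c}{j}\,2^{(k-j)(\omega-1)}$, which by the binomial theorem equals $2^{k(\omega-1)}(1+2^{-(\omega-1)})^{k+c}$. Multiplying the factors together then telescopes to $\cO(k^\omega (2^\omega+2)^{k})$, which is the first term in the stated bound.

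For the second contribution, the key step will be to bound $\binom{k}{i+j}^{\omega-1}\leq 2^{k(\omega-1)}$ and pull out $(i+j)^{\omega}\leq k^{\omega}$. The remaining double sum is $\sum_{j}\binom{k+c}{j}\sum_{i}\binom{k-j}{i}=\sum_{j}\binom{k+c}{j}\,2^{k-j}=2^{k}(1+1/2)^{k+c}=\cO(3^k)$, again by the binomial theorem. Combining, the second contribution is $\cO(k^\omega 2^{k(\omega-1)}\,3^k)$, matching the stated bound. Adding the two contributions, and noting there are only $\cO(k^2)$ pairs $(i,j)$ (which is absorbed into the $k^\omega$ factor after a minor inflation), gives the claimed total of $\cO\bigl(k^\omega(2^\omega+2)^k + k^\omega 2^{k(\omega-1)} 3^k\bigr)$.

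I do not expect any serious obstacle here: the whole proof is essentially an accounting exercise on top of Theorem~\ref{thm:repset_product}. The only slightly delicate step is choosing the right relaxation for the product $\binom{k+c}{i}\binom{k-j}{i}^{\omega-1}$ so that the $j$-sum gives exactly the $(2^\omega+2)^k$ base; bounding $\binom{k-j}{i}^{\omega-1}$ by $2^{(k-j)(\omega-1)}$ and leaving $\binom{k+c}{j}$ intact is what produces the generating function $(1+2^{-(\omega-1)})^{k+c}$, whose combination with the $2^{k(\omega-1)}$ factor gives the advertised base $2^\omega+2$. Everything else is routine binomial manipulation.
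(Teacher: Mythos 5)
Your proposal is correct and follows essentially the same route as the paper: apply Theorem~\ref{thm:repset_product} to each pair $(i,j)$ with $|{\cal L}_{r,p}|\leq\binom{k+c}{p}$, bound $\binom{k-j}{i}^{\omega-1}\leq 2^{(k-j)(\omega-1)}$ and $\binom{k}{i+j}^{\omega-1}\leq 2^{k(\omega-1)}$, and evaluate the resulting double sums via the binomial theorem to obtain the bases $2^{\omega}+2$ and $3$. The only superfluous remark is the extra $\cO(k^2)$ factor for the number of pairs, which is unnecessary since your sums already range over all $(i,j)$.
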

\begin{proof}
 By using Theorem~\ref{thm:repset_product} we can find \lminrep{{\cal L}_{1,i}\bullet {\cal L}_{2,j}}{k-i-j} 
of size ${k \choose i+j}$ for any 
$i,j\in[k]$ in $\cO\left({k+c \choose j} {k+c \choose i} {k-j \choose i} ^{\omega-1} i^{\omega} + {k+c \choose j}{k-j \choose i} {k \choose i+j} ^{\omega-1} (i+j)^\omega \right)$
operations over ${\mathbb F}$. Let $k'=k+c$. So the total number of operations, denoted by $T$, over ${\mathbb F}$ to find 
$\whnd{{\cal L}_{1,i}\bullet {\cal L}_{2,j}}$ for all $i,j\in[k]$ is,
\begin{eqnarray*}
 T&=&\cO\left(\left(\sum_{i=0}^k\sum_{j=0}^k {k' \choose j} {k' \choose i} {k-j \choose i} ^{\omega-1} i^{\omega}\right) + 
     \left(\sum_{i=0}^k\sum_{j=0}^k {k' \choose j}{k-j \choose i} {k \choose i+j} ^{\omega-1} (i+j)^\omega \right)\right)\\
&=&\cO\left(\left( k^{\omega} \sum_{i=0}^k {k' \choose i} \sum_{j=0}^k {k' \choose j} 2^{(k-j)(w-1)}\right)+ 
     \left(k^\omega \sum_{j=0}^k {k' \choose j} \sum_{i=0}^{k-j} {k-j \choose i} {k \choose i+j} ^{\omega-1} \right) \right)\\
&=&\cO\left(\left( k^{\omega}2^{k(\omega-1)} \sum_{i=0}^k {k' \choose i} \left(1+\frac{1}{2^{(\omega-1)}}\right)^{k'} \right)+ 
\left(k^\omega 2^{k(w-1)}\sum_{j=0}^k {k' \choose j} \sum_{i=0}^{k-j} {k-j \choose i} \right) \right)\\
&=&\cO\left(\left( k^{\omega}2^{k'}\left(2^{(\omega-1)}+1\right)^k \right)+ 
\left(k^\omega 2^{k(w-1)} \sum_{j=0}^k {k' \choose j} 2^{k-j} \right) \right)\\
&=&\cO\left( k^{\omega}2^k\left(2^{(\omega-1)}+1\right)^k + k^\omega 2^{k(\omega-1)} 3^k \right)\\
&=&\cO\left( k^{\omega}\left(2^{\omega}+2\right)^k + k^\omega 2^{k(\omega-1)} 3^k \right).
\end{eqnarray*}
The above simplification completes the proof. 
\end{proof}

\section{Application I: Multilinear Monomial Testing}
 In this section we first design a faster algorithm for a weighted version of {\sc $k$-MlD} and then give an 
 algorithm for an extension of this to a matroidal version. In the weighted version of {\sc $k$-MlD} in addition to 
 an arithmetic circuit $C$ over variables $X = \{x_1,x_2,\ldots,x_n\}$ representing a polynomial $P(X)$ over 
 $\Bbb{Z}^+$, we are also given an additive weight function  $w~:~2^{X}\rightarrow \mathbb{N}$. The task is that 
if there exists a $k$-multilinear term  then find one with minimum weight. We call the weighted variant by 
{\sc $k$-wMlD}. We start with the definition of an arithmetic circuit.  

\begin{definition}
 An {\em arithmetic circuit} $C$ over a commutative ring $R$ is a simple labelled directed acyclic graph with 
its  internal nodes are labeled by $+$ or $\times$ and leaves (in-degree zero nodes) are labeled from 
$X\cup R$, where $X = \{x_1,x_2,\ldots,x_n\}$, a set of variables. There is a node of out-degree zero, 
called the root node or the output gate. The size of $C$, $s(C)$ is the number of vertices in the graph.
\end{definition}

It is well known that we can replace any arithmetic circuit $C$ with an equivalent circuit with fan-in two 
for all the internal nodes with quadratic blow up in the size. For an example, by replacing each node of in-degree 
greater than $2$, with at most $s(C)$ many nodes of the same label and in-degree $2$, we can convert a circuit $C$ 
to a circuit $C'$ of size $s(C')=s(C)^2$. So from now onwards we always assume that we are given a circuit of this 
form. We assume $W$ be the maximum weight defined by $w$.  

\begin{theorem}
\label{thm:montest}
{\sc $k$-wMlD} can be solved in time $\cO(3.8408^k2^{o(k)}s(C) n \log^2 n\cdot \log W)$. 
\end{theorem}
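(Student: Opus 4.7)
The plan is to perform bottom-up dynamic programming on the circuit $C$, after first converting it to a fan-in two circuit whose size is $\cO(s(C)^2)$ and which we continue to denote $C$ (absorbing the squaring into $s(C)$). For every gate $v$ and every degree $i \in \{0,\ldots,k\}$, I will maintain a family $\widehat{\mathcal{F}}_v^i$ which is a min $(k-i)$-representative family, in the uniform matroid $U_{n,k}$ of rank $k$ on the variable set $X$, of the true family $\mathcal{L}_v^i$ of supports of the $i$-multilinear monomials appearing in the polynomial computed at $v$, each support $S$ weighted by $w(S)$. Because the ring is $\mathbb{Z}^+$ no positive-coefficient cancellations occur, so the support of a multilinear monomial uniquely determines it, and additivity of $w$ makes the weight depend only on the support. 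By Theorem~\ref{thm:repsetlovaszweighted} we may always keep $|\widehat{\mathcal{F}}_v^i| \leq \binom{k}{i}$.

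The update rules are the natural ones. A leaf $x_j$ contributes $\widehat{\mathcal{F}}^1 = \{\{x_j\}\}$, a constant leaf contributes $\widehat{\mathcal{F}}^0 = \{\emptyset\}$, and the other degrees are empty. At an addition gate $v = v_1 + v_2$ we have $\mathcal{L}_v^i = \mathcal{L}_{v_1}^i \cup \mathcal{L}_{v_2}^i$, so by Lemma~\ref{lem:repunion} the union $\widehat{\mathcal{F}}_{v_1}^i \cup \widehat{\mathcal{F}}_{v_2}^i$ is already a min $(k-i)$-representative, which I shrink via Theorem~\ref{thm:repsetweighted}. At a multiplication gate $v = v_1 \cdot v_2$, multilinearity forces
\[ \mathcal{L}_v^i \;=\; \bigcup_{j=0}^{i} \mathcal{L}_{v_1}^j \bullet \mathcal{L}_{v_2}^{i-j}, \]
and Lemmas~\ref{lem:repconvolution} and~\ref{lem:repunion} guarantee that $\bigcup_{j=0}^{i} \widehat{\mathcal{F}}_{v_1}^j \bullet \widehat{\mathcal{F}}_{v_2}^{i-j}$ is a min $(k-i)$-representative of $\mathcal{L}_v^i$. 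For each split $(j, i-j)$ I invoke Corollary~\ref{cor:product_uniform} to produce a min $(k-i)$-representative of $\widehat{\mathcal{F}}_{v_1}^j \bullet \widehat{\mathcal{F}}_{v_2}^{i-j}$ directly, union the $i+1$ outputs, apply Theorem~\ref{thm:repsetweighted} one more time, and use Lemma~\ref{lem:reptransitive} to conclude the result is a min $(k-i)$-representative of $\mathcal{L}_v^i$.

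Correctness follows by straightforward induction on the depth of $v$. At the root, $\widehat{\mathcal{F}}_{root}^k$ is a min $0$-representative of $\mathcal{L}_{root}^k$, hence nonempty iff a $k$-multilinear term exists, and its lightest member has the same weight as a lightest $k$-multilinear term; the algorithm returns this weight together with the witnessing support.

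The main obstacle is the running-time accounting at multiplication gates, which dominate. Since $|\widehat{\mathcal{F}}_{v_r}^{p_r}| \leq \binom{k}{p_r}$, the hypotheses of Corollary~\ref{cor:product_uniform} are met with $p = p_1 + p_2$ and $q = k - p$; optimizing $x_1, x_2$ and taking the worst case over admissible $(p_1, p_2)$ yields, as announced in the introduction, a per-split cost of $3.8408^k \, 2^{o(k)} \, n \log^2 n \, \log W$. The $\cO(k^2)$ pairs $(i, j)$ at each gate contribute only a polynomial-in-$k$ factor, which is absorbed into $2^{o(k)}$, and the final shrinking via Theorem~\ref{thm:repsetweighted} is cheaper than computing the splits. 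Multiplying by the $s(C)$ gates yields the bound $\cO(3.8408^k \, 2^{o(k)} \, s(C) \, n \log^2 n \, \log W)$ claimed in the theorem.
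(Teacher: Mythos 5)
Your proposal is correct and follows essentially the same route as the paper: a topological/bottom-up sweep of the circuit keeping, for each gate and each degree $i$, a min $(k-i)$-representative family in the uniform matroid, handling addition gates by union plus Theorem~\ref{thm:repsetweighted} and multiplication gates by applying Corollary~\ref{cor:product_uniform} to each degree split, then re-shrinking and invoking Lemmas~\ref{lem:repunion}, \ref{lem:repconvolution} and~\ref{lem:reptransitive}. The only slip is cosmetic: you cannot afford to keep the families at size exactly $\binom{k}{i}$ via Theorem~\ref{thm:repsetlovaszweighted} (that computation is too slow here); as in the paper, the stored families have size $\binom{k}{i}2^{o(k)}\log n$ coming from Theorem~\ref{thm:repsetweighted}, so one feeds a correspondingly adjusted (``variant'') form of Corollary~\ref{cor:product_uniform} with these slightly larger inputs, which changes nothing in the claimed $\cO(3.8408^k2^{o(k)}s(C)\,n\log^2 n\log W)$ bound.
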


\begin{proof}
An arithmetic circuit $C$ over ${\mathbb Z}^+$ with all leaves labelled from $X\cup {\mathbb Z}^+$ will represent 
sum of monomials with positive integer coefficients. With each multilinear term $\Pi_{j=1}^\ell x_{i_j}$ 
we associate a set $\{x_{i_1},\ldots, x_{i_l}\}\subseteq X$. With any polynomial we can associate a family of 
subsets of $X$ which corresponds to the set of multilinear terms in it.  Since $C$ is a directed acyclic graph, there exists a topological ordering $\pi=v_1,\ldots,v_n$, such that all the nodes corresponding to variables appear before any other gate and for every directed arc $uv$  we have that $u<_{\pi} v$.   For a node $v_i$ of the circuit let $P_i(X)$ be the multivariate polynomial represented by the subcircuit containing all the nodes $w$ such that 
$w\leq_{\pi} v_i$.
At every node we keep a family ${\cal F}_{v_i}^j$ of {\em $j$-multilinear term}, where 
$j\in\{1,\ldots,k\}$.  Let ${\cal F}_{v_i}=\cup_{x=1}^k {\cal F}_{v_i}^x$. 
Given a circuit $C$, if we compute associated family of subsets of $X$ for each node  
we can answer the question of having a $k$-multilinear term of minimum weight in the polynomial computed by 
$C$.  But the size of the family of subsets could be exponential in $n$, the number of variables. That is, the size of 
 ${\cal F}_{v_i}^j$ could be ${n \choose j}$.  So instead of storing all subsets, we store a representative family for 
the associated family of subsets of each node. That is, we store 
$\widehat{{\cal F}_{v_i}^j} \subseteq^{k-j}_{minrep}{\cal F}_{v_i}^j$.  The correctness of this step follows from the definition of $k-j$-representative family. 

We make a dynamic programming algorithm to detect a multilinear monomial of order $k$ as follows. Our algorithm goes from left to right following the ordering given by $\pi$ and computes ${\cal F}_{v_i}$ from the families previously computed. The algorithm computes an appropriate representative family corresponding to each node of $C$. 
We show that we can compute a representative family 
${\cal F}_v$ associated with any node $v$, where the number of subsets with $p$ elements in ${\cal F}_v$ is at most 
${k \choose p}2^{o(k)}\log n$. When $v$ is an input node then the associated family contains only one set. That is,  if $v$ is labelled with $x_i$  then  ${\cal F}_v=\{\{x_i\}\}$ and if  $v$ is 
labelled from ${\mathbb Z}^+$ then ${\cal F}_v=\{\emptyset\}$. When $v$ is 
not an input node, then we have two cases.

\begin{description}
 \item[Addition Gate.] $v=v_1+v_2$ \\
Due to the left to right computation in the topological order, we have a representative families ${\cal F}_{v_1}$ 
and ${\cal F}_{v_2}$ for $v_1$ and $v_2$ respectively, where the number of subsets with $p$ elements in 
${\cal F}_{v_1}$ as well as in ${\cal F}_{v_2}$ will be at most ${k \choose p}2^{o(k)}\log n$. So the representative 
family corresponding to $v$ will be the representative family of ${\cal F}_{v_1}\cup {\cal F}_{v_2}$. We 
partition ${\cal F}_{v_1}\cup {\cal F}_{v_2}$ based on the size of subsets in it.  Let 
${\cal F}_{v_1}\cup {\cal F}_{v_2}=\biguplus_{p\leq k} {\cal H}_p$, where ${\cal H}_p$ contains all subsets of size $p$ in 
${\cal F}_{v_1}\cup {\cal F}_{v_2}$. Note that $|{\cal H}_p|\leq 2 {k \choose p}2^{o(k)}\log n$. Now using 
Theorem~\ref{thm:repsetweighted}, we can compute all $\widehat{{\cal H}_p}\subseteq_{minrep}^{k-p} {\cal H}_p$ 
in time 
$$\cO\left( 2^{o(k)}\log^2 n \cdot \log W \cdot \sum_{p<k} \left\{ 2 {k \choose p} \cdot \left(\frac{k}{k-p}\right)^{k-p} \right\} \right)$$
where $W$ is the maximum weight defined by weight function $w$. 
The above running time is upper bounded by 
$\cO(2.851^k 2^{o(k)}\log^2 n\log W)$,  
 by the similar analysis done for the 
{\sc $k$-Path} problem in~\cite{FominLS13}. We output $\bigcup_{p\leq k} \widehat{{\cal H}_p}$ as the representative family corresponding to the node $v$.

\item[Multiplication Gate.] $v=v_1\times v_2$ \\
Similar to the previous case we have a representative families ${\cal F}_{v_1}$ and ${\cal F}_{v_2}$ for $v_1$ and 
$v_2$ respectively, where the number of subsets with $p$ elements in ${\cal F}_{v_1}$ as well as in ${\cal F}_{v_2}$, is at most ${k \choose p}2^{o(k)}\log n$. Here, the representative family corresponding to $v$ will be the 
representative family of ${\cal F}_{v_1}\bullet {\cal F}_{v_2}$.  The idea is that we first get an intermediate  representative family using Corollary~\ref{cor:product_uniform} and then find its representative of this using 
Theorem~\ref{thm:repsetweighted} to get our final family.  We have that 
\[{\cal F}_{v_1}\bullet {\cal F}_{v_2}= \bigcup_{p_1,p_2} {\cal F}_{v_1}^{p_1}\bullet {\cal F}_{v_2}^{p_2},\] 
where ${\cal F}_{v_i}^{p_i}$ contains all the subsets of size $p_i$ in ${\cal F}_{v_i}$. 
We know that $|{\cal F}_{v_i}^{p_i}|\leq {k \choose p_i}2^{o(k)}\log n$. Now by using 
a variant of 
Corollary~\ref{cor:product_uniform}, we compute 
\lminrep{{\cal F}_{v_1}^{p_1}\bullet {\cal F}_{v_2}^{p_2} }{k-p_1-p_2} of size 
${k\choose p_1+p_2}\cdot 2^{o(k)}\cdot \log n$ for all $p_1,p_2$ such that $p_1+p_2\leq k$. 
Let $q=k-p_1-p_2$, then all these computation  can be done in time 
$$\sum_{p_1,p_2}\min_{x_1,x_2}\cO\Big(
\frac{z'(n,k,W)}{x_2^{p_1}(1-x_2)^{p_2}}+\frac{z'(n,k,W) \cdot |{k \choose p_1}|}{x_1^{p_2}(1-x_1)^{q}(1-x_2)^{p_2}} +\\ 
\frac{z'(n,k,W) \cdot |{k\choose p_2}|}{x_1^{p_1}(1-x_1)^{q}x_2^{p_1}}+ \frac{z'(n,k,W) \cdot (\frac{k}{q})^{q}}{x_1^p(1-x_1)^{q}}\Big).$$ 

Here, $z'(n,k,W)=2^{\cO(\frac{k}{\log\log\log(k)})}k^{\cO(1)}n\log^2 n\cdot \log W$. The above running time is upper bounded by 
$\cO(3.8408^k2^{o(k)}k^{\cO(1)}n \log^2 n\cdot \log W)$


Now let 
${\cal F}= \bigcup_{p_1,p_2} \widehat{{\cal F}_{v_1}^{p_1}\bullet {\cal F}_{v_2}^{p_2}}= \uplus_p {\cal H}_p$, 
where $\uplus_p {\cal H}_p$ is the partition of ${\cal F}$ based on size of subsets. It is easy to see that 
$|{\cal H}_p|\leq k {k \choose p}2^{o(k)}\log n$.   
Now using Theorem~\ref{thm:repsetweighted} we can compute $\widehat{{\cal H}_p}\subseteq_{minrep}^{k-p} {\cal H}_p$ 
for all $p\leq k$ together in time 
$$\cO\left( k^2\cdot 2^{o(k)}\log^2 n\cdot \log W \cdot \sum_{p\leq k} \left\{ {k \choose p} \cdot \left(\frac{k}{k-p}\right)^{k-p} \right\} \right).$$
The above running time is upper bounded by 
$\cO(2.851^k 2^{o(k)}k^2\log^2 n\cdot \log W)$.
We output $\bigcup_{p\leq k} \widehat{{\cal H}_p}$ 
as the representative family corresponding to the node $v$. 
\end{description} 
Now we output {\em Yes} and a minimum weight set of size $k$ (if exists) among the representative family corresponding to the root node. 
Since there are $s(C)$ nodes in $C$, the total running time is bounded by 
$\cO(3.8408^k2^{o(k)}s(C)n \log^2 n\cdot \log W)$.
This completes the proof.    
\end{proof}

\subsection{Matroidal Multilinear Monomial Detection}
In this section we extend the {\sc $k$-wMlD} problem to a matroidal version and design an algorithm for this. 
The problem is defined as follows. 

\defparproblem{{\sc Matroidal Multilinear Monomial Detection ($k$-wMMlD)}}{An arithmetic circuit $C$ over variables $X = \{x_1,x_2,\ldots,x_n\}$ representing a polynomial $P(X)$ over $\Bbb{Z}$, a linear matroid  
\mat{} where  the ground set $E=X$ with its representation matrix $A_M$ and an additive weight function $w~:~2^{X}\rightarrow \mathbb{N}$.}
{$k$}
{Does  $P (X)$ construed as a sum of monomials contains a multilinear monomial $Z$ of degree $k$ such that $Z\in {\cal I}$? If yes find a minimum weighted such $Z$.}

Our main theorem of this section is as follows.  The proof of this theorem is along the lines of 
Theorem~\ref{thm:montest}. The only difference is that we compute representative with respect to the given matroid.   


\begin{theorem}
{\sc $k$-wMMlD} can be solved in time $\cO(7.7703^k k^\omega s(C))$. 
\end{theorem}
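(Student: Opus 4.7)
The plan is to mirror the dynamic programming scheme of Theorem~\ref{thm:montest}, but replace every uniform matroid representative set computation with its linear matroid analogue. Process $C$ in topological order $\pi = v_1, \ldots, v_m$; at each gate $v$ we inductively maintain a family $\widehat{\cal F}_v = \biguplus_{p \leq k} \widehat{\cal F}_v^p$, where ${\cal F}_v^p$ is the family of supports of degree-$p$ multilinear terms of the polynomial computed at $v$ that additionally form independent sets of $M$, and each $\widehat{\cal F}_v^p \subseteq_{minrep}^{k-p} {\cal F}_v^p$ (with respect to $M$ and the weight function $w$) has size at most ${k \choose p}$. By Theorem~\ref{thm:repsetlovaszweighted}, keeping this size bound is feasible, and the final answer is read off from the size-$k$ slice at the root. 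Correctness of the whole pipeline follows from Lemma~\ref{lem:reptransitive} and Lemma~\ref{lem:repunion} for union/trimming steps and Lemma~\ref{lem:repconvolution} for product steps.

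For a leaf labelled by $x_i$, set $\widehat{\cal F}_v = \{\{x_i\}\}$ if $\{x_i\} \in {\cal I}$ and $\emptyset$ otherwise; for an integer leaf, set $\widehat{\cal F}_v = \{\emptyset\}$. For an addition gate $v = v_1 + v_2$, partition $\widehat{\cal F}_{v_1} \cup \widehat{\cal F}_{v_2}$ by set size and invoke Theorem~\ref{thm:repsetlovaszweighted} on each slice to restore the ${k \choose p}$ bound. For a multiplication gate $v = v_1 \times v_2$, apply Corollary~\ref{cor:product_general_matroid} directly to $\widehat{\cal F}_{v_1}$ and $\widehat{\cal F}_{v_2}$ (the hypothesis is satisfied with $c = 0$ by the inductive slice bound) to obtain min-$(k-i-j)$-representative families for every product slice $\widehat{\cal F}_{v_1, i} \bullet \widehat{\cal F}_{v_2, j}$. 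Group these by $p = i+j$, take unions, and apply Theorem~\ref{thm:repsetlovaszweighted} once more to trim each resulting family back to the invariant size ${k \choose p}$.

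The per-gate cost is dominated by the multiplication case. Corollary~\ref{cor:product_general_matroid} bounds its total cost by $\cO\bigl(k^{\omega}(2^{\omega}+2)^k + k^{\omega} 2^{k(\omega-1)} 3^k\bigr)$ field operations; since $2^{\omega-1} \cdot 3 \approx 7.7703$ dominates $2^{\omega} + 2 \approx 7.183$ for $\omega < 2.373$, this simplifies to $\cO(7.7703^k k^{\omega})$. The subsequent trimming call to Theorem~\ref{thm:repsetlovaszweighted}, as well as every addition-gate computation, runs in $\cO({k \choose \lceil k/2 \rceil}^{\omega - 1} k^{O(1)}) = \cO(2^{k(\omega-1)} k^{O(1)})$, which is absorbed. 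Multiplying by $s(C)$ gives the desired $\cO(7.7703^k k^{\omega} s(C))$ bound. The only real bookkeeping obstacle is maintaining the ${k \choose p}$ slice bound after every gate so that Corollary~\ref{cor:product_general_matroid} remains applicable at each subsequent multiplication gate; this is guaranteed by the trimming step.
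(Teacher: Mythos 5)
Your proposal follows essentially the same route as the paper's proof: a left-to-right dynamic program over a topological ordering of $C$, maintaining for each gate and each size $p$ a min $(k-p)$-representative family of size ${k \choose p}$ with respect to the linear matroid $M$, handling addition gates via Theorem~\ref{thm:repsetlovaszweighted} and multiplication gates via Corollary~\ref{cor:product_general_matroid} followed by a trimming call, with the same dominant term $2^{k(\omega-1)}3^k \approx 7.7703^k$. The only slip is your claimed $\cO(2^{k(\omega-1)}k^{\cO(1)})$ cost for the addition-gate and trimming steps — by Theorem~\ref{thm:repsetlovaszweighted} these cost roughly ${k \choose p}^2 p^{\omega} + {k \choose p}^{\omega}$ per slice, i.e.\ up to about $(2^{\omega})^k k^{\cO(1)}$ — but since $2^{\omega}+2 < 2^{\omega-1}\cdot 3$ this is still absorbed into the $7.7703^k$ bound, so the conclusion stands.
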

\begin{proof} We outline a proof here.  Let  $\pi=v_1,\ldots,v_n$ be a topological ordering of $C$ such that all the nodes corresponding to variables appear before any other gate and for every directed arc $uv$  we have that $u<_{\pi} v$. As in Theorem~\ref{thm:montest}, at every node we keep a family 
${\cal F}_{v_i}^j$ of {\em $j$-multilinear term} that are also members of $\cal I$, where $j\in\{1,\ldots,k\}$.  
Let ${\cal F}_{v_i}=\cup_{x=1}^k {\cal F}_{v_i}^x$.  So ${\cal F}_v \subseteq {\cal I}$. 
We process the nodes from left to right and keep $ \widehat{{\cal F}_{v_i}^j}\subseteq_{minrep}^{k-j} {\cal F}_{v_i}^j$ of size ${k \choose p}$. 

When $v$ is an input node then the associated family contains only one set. That is,  if $v$ is labelled with $x_i$  
and $\{x_i\}\in {\cal I}$ then  ${\cal F}_v=\{\{x_i\}\}$ and if  $v$ is 
labelled from ${\mathbb Z}^+$ then ${\cal F}_v=\{\emptyset\}$. When $v$ is 
not an input node, then we have two cases. 

\begin{description}
 \item[Addition Gate.] $v=v_1+v_2$ \\
Due to the left to right computation in the topological order, we have a representative families ${\cal F}_{v_1}$ 
and ${\cal F}_{v_2}$ for $v_1$ and $v_2$ respectively, where the number of subsets with $p$ elements in 
${\cal F}_{v_1}$ as well as in ${\cal F}_{v_2}$ will be at most ${k \choose p}$. So the representative 
family corresponding to $v$ will be the representative family of ${\cal F}_{v_1}\cup {\cal F}_{v_2}$. We 
partition ${\cal F}_{v_1}\cup {\cal F}_{v_2}$ based on the size of subsets in it.  Let 
${\cal F}_{v_1}\cup {\cal F}_{v_2}=\biguplus_{p\leq k} {\cal H}_p$, where ${\cal H}_p$ contains all subsets of size $p$ in ${\cal F}_{v_1}\cup {\cal F}_{v_2}$. Note that $|{\cal H}_p|\leq 2 {k \choose p}$. Now using Theorem~\ref{thm:repsetlovaszweighted} we can compute all $\widehat{{\cal H}_p}\subseteq_{minrep}^{k-p} {\cal H}_p$ 
in time 
$$\cO\left( 2\sum_{p\leq k} \left\{ {k\choose p}{k \choose p}p^{\omega} + {k\choose p} {k\choose p}^{\omega-1} \right\} \right).$$ 
The above running time is upper bounded by $\cO(4^kp^{\omega}k+2^{\omega k}k)$. We output $\bigcup_{p\leq k} \widehat{{\cal H}_p}$ 
as the representative family corresponding to the node $v$.

\item[Multiplication Gate.] $v=v_1\times v_2$ \\
Similar to the previous case we have a representative families ${\cal F}_{v_1}$ and ${\cal F}_{v_2}$ for $v_1$ and 
$v_2$ respectively, where the number of subsets with $p$ elements in ${\cal F}_{v_1}$ as well as in ${\cal F}_{v_2}$, is at most ${k \choose p}$. Here, the representative family corresponding to $v$ will be the 
representative family of ${\cal F}_{v_1}\bullet {\cal F}_{v_2}$.  
We have that 
\[{\cal F}_{v_1}\bullet {\cal F}_{v_2}= \bigcup_{p_1,p_2} {\cal F}_{v_1}^{p_1}\bullet {\cal F}_{v_2}^{p_2},\] 
where ${\cal F}_{v_i}^{p_i}$ contains all the subsets of size $p_i$ in ${\cal F}_{v_i}$. 
We know that $|{\cal F}_{v_i}^{p_i}|\leq {k \choose p_i}$.  Now by using Corollary~\ref{cor:product_general_matroid}, we 
can compute \lminrep{{\cal F}_{v_1}^{p_1}\bullet {\cal F}_{v_2}^{p_2} }{k-p_1-p_2} of size 
${k\choose p_1+p_2}$ for all $p_1,p_2$ together in time 
$\cO\left( k^{\omega}2^k\left(2^{(\omega-1)}+1\right)^k + k^\omega 2^{k(\omega-1)} 3^k \right).$

 Now let 
${\cal F}= \bigcup_{p_1,p_2} \widehat{{\cal F}_{v_1}^{p_1}\bullet {\cal F}_{v_2}^{p_2}}= \uplus_p {\cal H}_p$, 
where $\uplus_p {\cal H}_p$ is the partition of ${\cal F}$ based on the size of subsets. It is easy to see that 
$|{\cal H}_p|\leq k {k \choose p}$.   
Now using Theorem~\ref{thm:repsetlovaszweighted} we can compute  
$\widehat{{\cal H}_p}\subseteq_{minrep}^{k-p} {\cal H}_p$ for all $p\leq k$ together in time 
$$\cO\left( k\sum_{p\leq k} \left\{ {k\choose p}{k \choose p}p^{\omega} + {k\choose p} {k\choose p}^{\omega -1} \right\} \right)$$
The above running time is upper bounded by $\cO(4^kk^2p^{\omega}+2^{ \omega k}k^2)$. We output $\bigcup_{p\leq k} \widehat{{\cal H}_p}$ 
as the representative family corresponding to the node $v$.

\end{description} 
Now we output {\em Yes} and a minimum weight set of size $k$ (if exists) among the representative family corresponding to the root node. 
Since there are $s(C)$ nodes in $C$, the total running time is bounded by  $\cO\left( k^{\omega}2^k\left(2^{(\omega-1)}+1\right)^k s(C)+ k^\omega 2^{k(\omega-1)} 3^k s(C)\right)$. This completes the proof.    
\end{proof}


\section{Application II: Dynamic Programming over graphs of bounded treewidth}

 
In this section we discuss deterministic algorithms for ``connectivity problems'' such as 
{\sc Steiner Tree}, {\sc Feedback Vertex Set}  
parameterized by the treewidth of the input graph. The algorithms are based on Theorem~\ref{thm:repsetlovaszweighted} 
and Corollary~\ref{cor:product_general_matroid}. The idea of designing deterministic algorithms for connectivity problems parameterized by the treewidth of the input graph based on fast computation of representative families was outlined in~\cite{FominLS13}. Here, we show how we can speed the method described in~\cite{FominLS13} using the fast computation of representative families for product families coming from a graphic matroid. The method  described in this section gives the fastest known deterministic algorithms for most  the connectivity problems parameterized by the treewidth. 
%
We exemplify the methods on  {\sc Steiner Tree} and  {\sc Feedback Vertex Set}. 

\subsection{Treewidth}
\label{subsect:twprelim}
Let $G$ be a graph.  A {\em tree-decomposition} of a graph $G$ is a pair 
$(\mathbb{T},\mathcal{ X}=\{X_{t}\}_{t\in V({\mathbb T})})$ such that
\begin{itemize}
\setlength\itemsep{-1mm}
\item $\cup_{t\in V(\mathbb{T})}{X_t}=V(G)$,
\item for every edge $xy\in E(G)$ there is a $t\in V(\mathbb{T})$ such that  $\{x,y\}\subseteq X_{t}$, and 
\item for every  vertex $v\in V(G)$ the subgraph of $\mathbb{T}$ induced by the set  $\{t\mid v\in X_{t}\}$ is connected.
\end{itemize}

The {\em width} of a tree decomposition is $\max_{t\in V(\mathbb{T})} |X_t| -1$ and the {\em treewidth} of $G$ 
is the  minimum width over all tree decompositions of $G$ and is denoted by $\tw(G)$. 

A tree
decomposition  $(\mathbb{T},\mathcal{ X})$ is called a {\em nice tree
decomposition} if $\mathbb{T}$ is a tree rooted at some node $r$ where
$X_{r}=\emptyset$, each node of $\mathbb{T}$ has at most two children, and each
node is of one of the following kinds:
\begin{enumerate}
\item {\bf Introduce node}: a node $t$ that has only one child $t'$ where $X_{t}\supset X_{t'}$ and  $|X_{t}|=|X_{t'}|+1$.
\item {\bf  Forget node}: a node $t$ that has only one child $t'$  where $X_{t}\subset X_{t'}$ and  $|X_{t}|=|X_{t'}|-1$.
\item {\bf Join node}:  a node  $t$ with two children $t_{1}$ and $t_{2}$ such that $X_{t}=X_{t_{1}}=X_{t_{2}}$.
\item {\bf Base node}: a node $t$ that is a leaf of $\mathbb T$, is different than the root, and $X_{t}=\emptyset$. 
\end{enumerate}
Notice that, according to the above definition, the root $r$ of $\mathbb{T}$ is
either a forget node or a join node. It is well known that any tree
decomposition of $G$ can be transformed into a nice tree decomposition
maintaining the same
width in linear time~\cite{Kloks94}. We use $G_t$ to denote the graph induced  by the
vertex set  $\cup_{t'}X_{t'}$, where $t'$ ranges over all descendants of $t$,
including $t$. By $E(X_t)$ we denote the edges present in $G[X_t]$.
We use $H_t$ to denote the graph on vertex set $V(G_t)$ and the edge set 
$E(G_t)\setminus E(X_t)$.  For clarity of presentation we use the term nodes to refer to the vertices of the tree 
$\mathbb T$.

\subsection{{\sc Steiner Tree} parameterized by treewidth}
\label{subsect:steinertreetw}
The problem we study in this section is defined below.

\medskip
\begin{center} 
\fbox{\begin{minipage}{0.96\textwidth}
\noindent{\sc Steiner Tree} \\ 
\noindent {\bf Input}: An undirected graph $G$ 
with a set of terminals $T\subseteq V(G)$, and  a  weight\\
\noindent{\phantom{{\em Input}:}} 
 function $w:E(G)\rightarrow \mathbb{N}$.\\
\noindent{\bf Task}: Find a subtree  in $G$ of minimum weight spanning all vertices of $T$. 
\end{minipage}}
\end{center}
\medskip


Let $G$ be an input  graph of the {\sc Steiner Tree} problem. Throughout this section, we say that 
$E'\subseteq E(G)$ is a {\em solution} if the subgraph induced on this edge set is connected and 
it contains all the terminal vertices. We call $E'\subseteq E(G)$ an {\em optimal solution} if $E'$ is a solution 
of the  minimum weight. 
Let $\mathscr{S}$ be a family of edge subsets such that every edge subset corresponds to an optimal solution. 
That is, 
$$\mathscr{S}=\{E'\subseteq E(G)~|~E' \mbox{ is an optimal solution}\}.$$ 
Observe that any edge set in $\mathscr{S}$ induces a forest. We start with few definitions that will be useful in explaining the 
algorithm. Let $(\mathbb{T},\mathcal{ X})$  be a tree decomposition of $G$ of width $\tw$. Let $t$ be a node of 
$V(\mathbb{T})$. By $\mathcal{S}_t$ we denote the family of edge subsets  of $E(H_t)$, 
$\{E'\subseteq E(H_t)~|~G[E'] \mbox{ is a forest}\}$,  
that satisfies the following properties.  
\begin{itemize}
\item Either $E'$ is  a solution tree (that is, the subgraph induced on this edge set is connected and it contains all the 
terminal vertices); or 
\item every vertex of $(T\cap V(G_t))\setminus X_t$ is incident with  some edge from  $E'$,  and every connected 
component of the graph  induced by $E'$  contains a  vertex from $X_t$.
\end{itemize}

We call $\mathcal{S}_t$ a \emph{family of partial solutions} for $t$. We denote by  $K^t$   a complete graph on the 
vertex set $X_t$. For an edge subset $E^* \subseteq E(G)$ and   bag $X_t$ corresponding to a node $t$, we define the 
following. 
\begin{enumerate}
\item  Set  $\partial^t(E^*)= X_t\cap V(E^*)$,  the set of endpoints  of $E^*$ in $X_t$.
\item Let $G^*$ be the subgraph of $G$ on the vertex set $V(G)$ and the edge set $E^*$. Let $C_1',\ldots ,C_\ell'$ 
be the connected components of  $G^*$ such that for all $i\in [\ell]$,  $C_i'\cap X_t\neq \emptyset$. 
Let $C_i=C_i'\cap X_t$. Observe that $C_1,\ldots,C_\ell$ is a partition of $\partial^t(E^*)$. 
By $F(E^*)$ we denote a forest $\{Q_1,\ldots,Q_\ell\}$ where each $Q_i$ is an arbitrary spanning tree of $K^t[C_i]$. 
For an example, since 
$K^t[C_i]$ is a complete graph we could take $Q_i$ as a star. The purpose of  $F(E^*)$ is to keep track  
for the vertices in $C_i$ whether they were in the same connected component of $G^*$. 
\item We define $w(F(E^*))=w(E^*)$. 
\end{enumerate}

Let ${\cal A}$ and ${\cal B}$ be two family of edge subsets of $E(G)$, then we define 
$${\cal A}\diamond {\cal B}=\{E_1\cup E_2\;|\;E_1\in {\cal A}\wedge E_2\in {\cal B}\wedge E_1\cap E_2=\emptyset \wedge G[E_1\cup E_2] \mbox{ is a forest}\}.$$

With every node $t$ of  $\mathbb T$,   we associate a subgraph of $G$. In our case it will be $H_t$. For every node 
$t$, 
we keep  a family of partial solutions for the graph $H_t$. That is, for every 
optimal solution $L\in \mathscr{S}$ and its intersection $L_t=E(H_t)\cap L$ with the graph $H_t$, we have some  
partial solution in the family that is ``as good as $L_t$''. More precisely, we have some partial solution, say 
$\hat{L}_t$ in our family such that $\hat{L}_t\cup L_R$ is also an optimum solution for the whole graph, where  
$L_R=L\setminus L_t$. As we move from one node $t$ in the decomposition tree to the next node $t'$ the graph $H_t$ 
changes to $H_{t'}$, and so does the set of partial solutions. The algorithm updates its set of partial solutions 
accordingly. Here matroids come into play: in order to bound the size of the family of partial solutions that the 
algorithm stores at each node we employ Theorem~\ref{thm:repsetlovaszweighted} and 
Corollary~\ref{cor:product_general_matroid} for graphic matroids. More details 
are given in the proof of the following theorem, which is  the main result of this section.


\begin{theorem}\label{thm:steinertree_DP}
Let $G$ be an $n$-vertex graph given together with its tree decomposition of with $\tw$. Then 
{\sc Steiner Tree} on $G$ can be solved in time  $\cO\left( \left(1+ 2^{\omega-1}\cdot 3\right)^{\tw} 
 {\tw}^{\cO(1)}n \right)$.
\end{theorem}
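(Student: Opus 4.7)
The plan is to run a dynamic programming algorithm over a nice tree decomposition $(\mathbb{T},\mathcal{X})$ of $G$, following the general representative-set framework of Fomin et al.~\cite{FominLS13}, but using Corollary~\ref{cor:product_general_matroid} at join nodes for the decisive speedup. For every node $t\in V(\mathbb{T})$ and every subset $U\subseteq X_t$, I maintain a weighted family $\mathcal{A}_t[U]$ of partial solutions whose set of endpoints inside $X_t$ equals $U$. A partial solution $E^{*}\subseteq E(H_t)$ is stored not as the edge set itself but as its signature forest $F(E^{*})$ on the complete graph $K^{t}[U]$, weighted by $w(E^{*})$. By Proposition~\ref{prop:graphicrep} the graphic matroid of $K^{t}[U]$ is representable over any suitably large field, so $F(E^{*})$ is an independent set of a linear matroid of rank at most $|U|-1$, and Theorems~\ref{thm:repsetlovaszweighted} and Corollary~\ref{cor:product_general_matroid} apply.

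The transitions on introduce-vertex, introduce-edge, and forget-vertex nodes are standard: they update the signature forests in the obvious way (adding an isolated vertex; optionally including a new bag edge and merging components; dropping a forgotten vertex and discarding signatures whose component structure becomes inconsistent, e.g.\ a non-terminal isolated at a forget node). After each such update I invoke Theorem~\ref{thm:repsetlovaszweighted} to reduce every $\mathcal{A}_t[U]$, subdivided by the number $p$ of edges in the forest, to a min-$q$-representative subfamily of size at most $\binom{|U|}{p}$ in $\tw^{\cO(1)} 2^{\omega|U|}$ operations. At a join node $t$ with children $t_1,t_2$ and $X_{t_1}=X_{t_2}=X_t$, the family $\mathcal{A}_t[U]$ is (a representative of) the union over all splittings $U=U_1\cup U_2$ of $\mathcal{A}_{t_1}[U_1]\bullet\mathcal{A}_{t_2}[U_2]$ inside the graphic matroid of $K^{t}[U]$; by Lemma~\ref{lem:repconvolution} it suffices to take the product of the already-stored representative families and then reduce once more by Theorem~\ref{thm:repsetlovaszweighted}. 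Here Corollary~\ref{cor:product_general_matroid} bounds the cost of the product step for a fixed $U$ by $\tw^{\cO(1)} \bigl((2^{\omega}+2)^{|U|} + (2^{\omega-1}\cdot 3)^{|U|}\bigr)$. Correctness of the overall DP rests on Lemma~\ref{lem:reptransitive} (cascading reductions), Lemma~\ref{lem:repunion} (for splitting by $U$ and by edge count), and the classical fact that two forests on $K^{t}$ encode the same partition of $X_t$ iff their union is a forest of the appropriate rank — so that representativeness in the graphic matroid preserves every optimal extension to a Steiner tree.

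For the running time, base, introduce and forget nodes each cost $\tw^{\cO(1)}2^{\omega\tw}$, while the dominant term comes from join nodes: summing Corollary~\ref{cor:product_general_matroid}'s bound over all active subsets $U\subseteq X_t$ gives
\[
\sum_{U\subseteq X_t}\tw^{\cO(1)}(2^{\omega-1}\cdot 3)^{|U|} \;=\; \tw^{\cO(1)}\sum_{u=0}^{\tw+1}\binom{\tw+1}{u}(2^{\omega-1}\cdot 3)^{u} \;=\; \tw^{\cO(1)}(1+2^{\omega-1}\cdot 3)^{\tw},
\]
which absorbs the contribution $(2^{\omega}+2)^{|U|}$ since $2^{\omega}+2<2^{\omega-1}\cdot 3$. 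Multiplying by the $\cO(\tw\cdot n)$ nodes of a nice tree decomposition yields the claimed bound $\cO((1+2^{\omega-1}\cdot 3)^{\tw}\tw^{\cO(1)}n)$. At the root (an empty bag), we read off the minimum weight of a partial solution whose unique connected component contains every terminal, handling the global connectivity requirement by recording in each signature whether the terminals forgotten so far have all been merged into the currently tracked component structure.

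The main obstacle is the bookkeeping needed to guarantee that the representative family reductions never discard a partial solution that is the unique extendible one to a \emph{connected} Steiner tree spanning $T$. Concretely, I must ensure that (i) the signature $F(E^{*})$ captures exactly the partition information an optimal extension needs, (ii) at forget nodes the condition ``$v\notin X_{t'}$ implies $v$ has been absorbed into a still-active component or $v\notin T$ may be dropped'' is phrased as an invariant preserved by the matroid reduction, and (iii) the cost of iterating over all subsets $U\subseteq X_t$ and all edge-count splittings $p_1,p_2$ at a join node is correctly charged against Corollary~\ref{cor:product_general_matroid} so that the final $(1+2^{\omega-1}\cdot 3)^{\tw}$ base emerges and not a larger one such as $(1+2^{\omega+1})^{\tw}$ of Bodlaender et al.~\cite{BodlaenderCK12}.
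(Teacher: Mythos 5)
Your overall framework (signature forests in the graphic matroid of a complete graph on a bag subset, repeated shrinking via Theorem~\ref{thm:repsetlovaszweighted}, and Corollary~\ref{cor:product_general_matroid} at joins) is the same as the paper's, but there is a genuine gap exactly at the point you flag in item~(iii), and it is caused by your choice of state index. You index $\mathcal{A}_t[U]$ by the endpoint set $U$ of the \emph{partial} solution. Then at a join node the correct update is a union over all splittings $U=U_1\cup U_2$ of products $\mathcal{A}_{t_1}[U_1]\bullet\mathcal{A}_{t_2}[U_2]$, and there are $3^{|U|}$ ordered splittings for each $U$. Your running-time sum charges only one application of Corollary~\ref{cor:product_general_matroid} per subset $U$; charging one application per splitting gives
\[
\sum_{u}\binom{\tw+1}{u}\,3^{u}\,\tw^{\cO(1)}\bigl(2^{\omega-1}\cdot 3\bigr)^{u}
=\tw^{\cO(1)}\bigl(1+9\cdot 2^{\omega-1}\bigr)^{\tw+1},
\]
which is far above the claimed $(1+3\cdot 2^{\omega-1})^{\tw}$; and lumping the families $\bigcup_{U_i\subseteq U}\mathcal{A}_{t_i}[U_i]$ to avoid the $3^{|U|}$ factor violates the size hypothesis ($\binom{k+c}{p}$ sets per size class) of Corollary~\ref{cor:product_general_matroid}. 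A secondary problem with the same root: the child families were shrunk as representatives in the graphic matroids of $K^{t_1}[U_1]$ and $K^{t_2}[U_2]$, whereas Lemma~\ref{lem:repconvolution} and Corollary~\ref{cor:product_general_matroid} need both factors to be representative, with the right co-ranks, in the \emph{same} matroid $K^{t}[U]$; a $(|U_1|-1-p_1)$-representative family in $K^{t_1}[U_1]$ is not automatically $(|U|-1-p_1)$-representative in $K^{t}[U]$, so the cascading-correctness argument needs an extra trace-restriction argument under your indexing.

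The paper's proof resolves both issues by a different indexing: $\widehat{\mathcal{S}}_t[Z]$ is indexed by the trace $Z=\partial^{t}(L)$ of the (hypothetical) \emph{global} optimal solution on the bag, and the correctness invariant only promises a good replacement inside the family with that index; partial solutions stored under $Z$ may touch only a subset of $Z$. Since $X_t=X_{t_1}=X_{t_2}$ at a join, both children are accessed at the \emph{same} $Z$, so a single application of the Product Shrinking Lemma (Lemma~\ref{lem:sizeinvariant_2}, i.e.\ Corollary~\ref{cor:product_general_matroid} in the one matroid $K^{t}[Z]$) per $Z$ suffices, and the per-node cost is $\sum_{i}\binom{\tw+1}{i}\bigl((2^{\omega}+2)^{i}+(2^{\omega-1}\cdot 3)^{i}\bigr)\tw^{\cO(1)}$, which is what yields the base $1+2^{\omega-1}\cdot 3$. (It also handles bag edges at forget nodes via the family of edge subsets incident to the forgotten vertex, rather than introduce-edge nodes, but that difference is cosmetic.) To repair your write-up you would have to switch to this ``guess the global trace'' indexing, or supply a different argument showing the splitting overhead can be avoided; as written, the claimed bound does not follow.
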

\begin{proof}

For every node $t$ of  $\mathbb T$ and subset $Z\subseteq X_t$,  we store a family of edge subsets  
$\widehat{\mathcal{S}}_t[Z]$ of $H_t$  satisfying the following correctness invariant.
\begin{quote}
{\bf Correctness Invariant:} For every $L\in \mathscr{S}$ we have the following. 
Let $L_t=E(H_t)\cap L$, $L_R=L\setminus L_t$,  and $Z=\partial^t(L)$. Then there exists 
$\hat{L}_t\in \widehat{\mathcal{S}}_t[Z]$ such that $w(\hat{L}_t)\leq w(L_t)$, $\hat{L}=\hat{L}_t\cup L_R$ is a 
solution, and $\partial^t(\hat{L})=Z$. Observe that since $w(\hat{L}_t)\leq w(L_t)$ and $L\in \mathscr{S}$, we have 
that $\hat{L} \in \mathscr{S}$.
\end{quote}

We process the nodes of the tree $\mathbb T$  from base nodes to the root node while doing the dynamic programming. 
Throughout the process we maintain the correctness invariant, which will prove the correctness of the algorithm. 
However, our main idea is to use representative sets  to obtain   $\widehat{\mathcal{S}}_t[Z]$ of small size. 
That is, given the set $\widehat{\mathcal{S}}_t[Z]$ (as a product of two families ${\cal A}$ and ${\cal B}$, i.e
 $\widehat{\mathcal{S}}_t[Z]= {\cal A}\diamond {\cal B}$) that satisfies the correctness invariant,  
we use 
Corollary~\ref{cor:product_general_matroid} to obtain a subset 
$\widehat{\mathcal{S}}_t'[Z]$ of $\widehat{\mathcal{S}}_t[Z]$ that also satisfies the  correctness invariant 
and has size upper bounded by $2^{|Z|}$ in total.  More precisely, the number of partial solutions with $i$ 
connected components in $\widehat{\mathcal{S}}_t'[Z]$ is upper bounded by ${|Z| \choose |Z|-i}={|Z| \choose i}$. 
Thus, we maintain the following size invariant.

\begin{quote}
{\bf Size Invariant:} After  node $t$ of $\mathbb T$ is processed by the algorithm,  for every $Z\subseteq X_t$ 
we have that 
 $|\widehat{\mathcal{S}}_t[Z,i]|\leq {|Z| \choose i}$, where $\widehat{\mathcal{S}}_t[Z,i]$ is the partial 
solutions with $i$ connected components in $\widehat{\mathcal{S}}_t[Z]$. 
\end{quote}
 
 The main ingredient of the dynamic programming algorithm for {\sc Steiner Tree} is the use of 
Theorem~\ref{thm:repsetlovaszweighted} and Corollary~\ref{cor:product_general_matroid} to compute 
$\widehat{\mathcal{S}}_t[Z]$ maintaining the size invariant. The next  lemma shows how to implement it.

\begin{lemma}[Product Shrinking Lemma]
\label{lem:sizeinvariant_2}
Let $t$ be a node of  $\mathbb T$, and let  $Z\subseteq X_t$ be a set of size $k$. 
Let ${\cal P}$ and ${\cal Q}$ be two family of edge sets of $H_t$. Furthermore, let  
$\widehat{\mathcal{S}}_t[Z]={\cal P}\diamond {\cal Q}$ be the family of edge subsets of $H_t$ 
satisfying the correctness invariant. If the number of edge sets with $i$ connected components in ${\cal P}$ 
as well as in ${\cal Q}$ is bounded by ${k+c \choose i}$ where $c$ is some fixed constant, 
then in time $\cO\left( k^{\omega}\left(2^{\omega}+2\right)^k n + k^\omega 2^{k(\omega-1)} 3^k n \right)$
we can compute $\widehat{\mathcal{S}}_t'[Z] \subseteq \widehat{\mathcal{S}}_t[Z]$ 
satisfying correctness and size invariants. 
\end{lemma}
 \begin{proof}
 We start by associating a matroid with  node $t$ and the set $Z\subseteq X_t$  as follows. We consider 
a graphic matroid \mat{} on $K^t[Z]$. Here, the element set $E$ of the matroid is  the edge set $E(K^t[Z])$ and 
the family of independent sets $\cal I$ consists of  forests of  $K^t[Z]$. 
 
Let ${\cal P}=\{A_1^t,\ldots,A_\ell^t\}$ and ${\cal Q}=\{B_1^t,\ldots,B_{\ell'}^t\}$. 
Let ${\cal L}_1=\{ F(A_1^t),\ldots,F(A_\ell^t)\}$ and ${\cal L}_2=\{ F(B_1^t),\ldots,F(B_{\ell'}^t)\}$ be 
the set of forests in $K^t[Z]$ corresponding to the edge subsets in ${\cal P}$ and ${\cal Q}$ respectively. For 
$i\in \{1,\ldots,k-1\}$ and $r\in \{1,2\}$, let ${\cal L}_{r,i}$ be the family of forests of ${\cal L}_r$ with $i$ edges.  
Now we apply Corollary~\ref{cor:product_general_matroid} and find 
$\widehat{{\cal L}_{1,i}\bullet {\cal L}_{2,j}} \subseteq_{minrep}^{k-1-i-j} {\cal L}_{1,i}\bullet {\cal L}_{2,j}$ 
of size ${k-1 \choose i+j}$ for all $i,j\in [k]$. 
Let $\widehat{\mathcal{S}}_t'[Z,k-d] \subseteq \widehat{\mathcal{S}}_t[Z,k-d]$ be such that for every 
$E^t\in \widehat{\mathcal{S}}_t'[Z,k-d] $ we have that 
$F(E^t)\in \bigcup_{i+j=d}\widehat{{\cal L}_{1,i}\bullet {\cal L}_{2,j}}$.  
(Note that $F(E^t)$ has $d$ edges if and only if  $G[E^t]$ have $k-d$ connected components). 
Let $\widehat{\mathcal{S}}_t'[Z]= \cup_{j=1}^k \widehat{\mathcal{S}}_t'[Z,j]$. 
By Corollary~\ref{cor:product_general_matroid}, 
$|\widehat{\mathcal{S}}_t'[Z,k-d]| \leq k {k-1 \choose d}\leq {k \choose k-d}$, and hence 
$\widehat{\mathcal{S}}_t'[Z]$ maintains the size invariant.  
Now we show that the $\widehat{\mathcal{S}}_t'[Z]$ maintains the correctness invariant. 
 
Let $L\in \mathscr{S}$ and let $L_t=E(H_t)\cap L$, $L_R=L\setminus L_t$ and $Z=\partial^t(L)$. 
Then there exists $E_j^t \in \widehat{\mathcal{S}}_t[Z]$ such that $w(E_j^t)\leq w(L_t)$, 
$\hat{L}=E_j^t \cup L_R$ is an optimal solution and $\partial^t(\hat{L})=Z$. 
Since $\widehat{\mathcal{S}}_t[Z]={\cal P}\diamond {\cal Q}$, there exists $A^t_{j_1}\in {\cal P}$ and 
$B^t_{j_2}\in {\cal Q}$ such that $E^t_j=A^t_{j_1}\cup B^t_{j_2}$. 
Observe that $G[E^t_j], G[A^t_{j_1}]$ and $G[B^t_{j_2}]$ form forests.    
Consider the forests $F(A^t_{j_1})$ and $F(B^t_{j_2})$. 
Suppose $|F(A^t_{j_1})|=i_1$ and $|F(B^t_{j_2})|=i_2$, then $F(E_j^t)\in {\cal L}_{1,i_1}\bullet {\cal L}_{1,i_2}$. 
This is because, if $F(E_j^t)$ contain a cycle, then corresponding to that cycle we can get a cycle 
in $G[E^t_j]$, which is a contradiction. Now let $F(L_R)$ be the forest corresponding to $L_R$ 
with respect to the bag $X_t$. Since $\hat{L}$ is a solution, we have that $F(E_j^t)\cup F(L_R)$ is a spanning 
tree in  $K^t[Z]$. Since $\widehat{{\cal L}_{1,i_1}\bullet {\cal L}_{2,i_2} }\subseteq_{minrep}^{k-1-i_1-i_2}
{\cal L}_{1,i_1}\bullet {\cal L}_{2,i_2} $, we have that there exists 
a forest $F(E_h^t) \in \widehat{{\cal L}_{1,i_1}\bullet {\cal L}_{2,i_2} }$ 
such that $w(F(E_h^t)) \leq w(F(E_i^t)) $ and $F(E_h^t) \cup F(L_R)$ 
is a spanning tree in  $K^t[Z]$. 
 Thus, we know that $E_h^t \cup L_R$ is an optimum solution and $E_h^t \in \widehat{\mathcal{S}}_t'[Z]$. 
This proves that $\widehat{\mathcal{S}}_t'[Z]$ maintains the correctness invariant.   
 
 The running time to compute $\widehat{\mathcal{S}}'_t[Z]$ is, 
 \begin{eqnarray*}
 \cO\left( k^{\omega}\left(2^{\omega}+2\right)^k n + k^\omega 2^{k(\omega-1)} 3^k n \right). 
 \end{eqnarray*}
 For a given edge set we also need to compute the forest and that can take $\cO(n)$ time. 
 \end{proof}


We now return to the dynamic programming algorithm over the tree-decomposition $(\mathbb{T},\mathcal{ X})$ of $G$ 
and prove that it maintains the correctness invariant. We assume that $(\mathbb{T},\mathcal{ X})$ is a nice 
tree-decomposition of $G$. By $\widehat{\mathcal{S}}_t$ we denote 
$\cup_{Z\subseteq X_t} \widehat{\mathcal{S}}_t[Z]$ (also called a \emph{representative family of partial solutions}). 
 We show how $\widehat{\mathcal{S}}_t$  is obtained by doing dynamic programming from base node to the root node.

\paragraph{Base node $t$.}  Here the graph $H_t$ is empty and thus we take $\widehat{\mathcal{S}}_t=\emptyset$.  

\paragraph{Introduce node $t$ with child $t'$.} Here, we know that $X_{t}\supset X_{t'}$ and  $|X_{t}|=|X_{t'}|+1$. 
Let $v$ be the vertex in $X_{t}\setminus X_{t'}$. Furthermore observe that $E(H_t)=E(H_{t'})$ and $v$ is degree zero 
vertex in $H_t$. Thus the graph $H_t$ only differs from $H_{t'}$ at a isolated vertex $v$. Since we have not added 
any edge to the new graph, the family of solutions, which contains edge-subsets, does not change. Thus, we take 
$\widehat{\mathcal{S}}_t=\widehat{\mathcal{S}}_{t'}$. Formally, we take 
$\widehat{\mathcal{S}}_t[Z]=\widehat{\mathcal{S}}_{t'}[Z\setminus \{v\}]$.   
Since, $H_t$ and $H_{t'}$ have  same set of edges the invariant is vacuously maintained. 
 
 \paragraph{Forget node $t$ with child $t'$.} Here we know $X_{t}\subset X_{t'}$ and  $|X_{t}|=|X_{t'}|-1$. 
Let $v$ be the vertex in $X_{t'}\setminus X_{t}$. Let ${\cal E}_v[Z]$ denote the set of edges between $v$ and 
the vertices in $Z\subseteq X_t$. Observe that $E(H_t)=E(H_{t'})\cup {\cal E}_v[X_t]$. Before we define  things 
formally, observe that in this step the graphs $H_t$ and $H_{t'}$ differ by at most $\tw$ edges - the edges with 
one endpoint in $v$ and the other in $X_t$. We go through every possible way an optimal solution can intersect with 
these newly added edges. 
Let ${\cal P}_v[Z]=\{Y\;|\; Y\subseteq {\cal E}_v[Z]\}$. Then the new set of partial 
solutions is defined as follows.  
$$ \widehat{\mathcal{S}}_t[Z]= \widehat{\mathcal{S}}_{t'}[Z\cup \{v\}]\diamond  {\cal P}_v[Z].$$

 Now we show that $\widehat{\mathcal{S}}_t$ maintains the invariant of the algorithm. Let $L\in \mathscr{S}$.
 
  \begin{enumerate}
  \label{eqn:twdpone}
    \item  Let $L_t=E(H_t)\cap L$ and $L_R=L\setminus L_t$. Furthermore, edges of $L_t$ can be partitioned into 
    $L_{t'}=E(H_{t'})\cap L$ and $L_v=  L_t\setminus L_{t'}$.  That is, $L_t= L_{t'} \uplus L_v$.  
    \item Let $Z=\partial^t(L)$ and $Z'=\partial^{t'}(L)$. 
    \end{enumerate}
 
 By the property of $\widehat{\mathcal{S}}_{t'}$, there exists a $\hat{L}_{t'}\in \widehat{\mathcal{S}}_{t'}[Z']$ 
such that 
 \begin{eqnarray}
 \label{eqn:twdptwo}
 L\in \mathscr{S}  & \iff &  L_{t'} \uplus L_v \uplus L_R \in \mathscr{S}  \nonumber\\
                            & \iff &  \hat{L}_{t'} \uplus L_v \uplus L_R \in \mathscr{S} 
 \end{eqnarray}
 and $\partial^{t'}(L)=\partial^{t'}(\hat{L}_{t'} \uplus L_v \uplus L_R )=Z'$.  
 
 \medskip
 
 \noindent
 We put { $\hat{L}_t=\hat{L}_{t'} \cup L_v$ and $\hat{L}=\hat{L}_t \cup L_R $.}  We now show that 
$\hat{L}_t \in \widehat{\mathcal{S}}_t[Z]$. Towards this just note that since $Z'=Z$ or $Z'=Z\cup\{v\}$, 
we have that $\widehat{\mathcal{S}}_t[Z]$ contains $\widehat{\mathcal{S}}_{t'}[Z']\diamond \{L_v\}$. 
By \eqref{eqn:twdptwo},  $\hat{L} \in \mathscr{S} $. Finally, we need to show that   $\partial^{t}(\hat{L})=Z$. 
Towards this just note that $\partial^{t}(\hat{L})=Z'\setminus \{v\}=Z$.  This concludes the proof for the 
fact that $\widehat{\mathcal{S}}_t$ maintains the correctness invariant. 
 
\paragraph{Join node $t$ with two children $t_{1}$ and $t_{2}$.} Here, we know that  $X_{t}=X_{t_{1}}=X_{t_{2}}$. 
Also we know that the edges of $H_t$ is obtained by the union of edges of $H_{t_1}$ and $H_{t_2}$ which are disjoint. 
Of course they are separated by the vertices in $X_t$. A natural way to obtain a family of partial solutions for 
$H_t$ is that we take the union of edges subsets of the families stored at nodes $t_1$ and $t_2$. This is exactly 
what we do. Let 
   $$ \widehat{\mathcal{S}}_t[Z]= \widehat{\mathcal{S}}_{t_1}[Z]\diamond  \widehat{\mathcal{S}}_{t_2}[Z].$$ 
   
    Now we show that $\widehat{\mathcal{S}}_t$ maintains the invariant. Let $L\in \mathscr{S}$. 
    \begin{enumerate}
    \item Let $L_t=E(H_t)\cap L$ and $L_R=L\setminus L_t$. Furthermore edges of $L_t$ can be partitioned into those belonging to $H_{t_1}$ and those belonging to  $H_{t_2}$. Let  $L_{t_1}=E(H_{t_1})\cap L$ and $L_{t_2}=E(H_{t_2})\cap L$. Observe that since 
    $E(H_{t_1})\cap E(H_{t_2})=\emptyset$,  we have that $L_{t_1} \cap L_{t_2}=\emptyset$. 
Also observe that $L_t=L_{t_1}\uplus L_{t_2}$ and $G[L_{t_1}],G[L_{t_1}]$ form forests. 
    \item  Let $Z=\partial^t(L)$. Since $X_{t}=X_{t_{1}}=X_{t_{2}}$  this implies  that $Z=\partial^t(L)=\partial^{t_1}(L)=\partial^{t_2}(L)$. 
    \end{enumerate}
 
 Now observe that 
 \begin{eqnarray*}
 L\in \mathscr{S}  & \iff &  L_{t_1} \uplus  L_{t_2} \uplus L_R \in \mathscr{S}  \\
                            & \iff &  \hat{L}_{t_1}  \uplus L_{t_2} \uplus L_R \in \mathscr{S}~~~~\mbox{(by the property of $\widehat{\mathcal{S}}_{t_1}$ we have 
                                     that  $\hat{L}_{t_1}\in \widehat{\mathcal{S}}_{t_1}[Z]$)}\\
                            & \iff &  \hat{L}_{t_1} \uplus \hat{L}_{t_2} \uplus L_R \in \mathscr{S}~~~~\mbox{(by the property of $\widehat{\mathcal{S}}_{t_2}$ we have  that  $\hat{L}_{t_2}\in \widehat{\mathcal{S}}_{t_2}[Z]$)}
 \end{eqnarray*}
 \noindent 
{We put    $\hat{L}_t=\hat{L}_{t_1} \cup \hat{L}_{t_2}$.}  
By the definition of  $\widehat{\mathcal{S}}_t[Z]$,  we have that 
$\hat{L}_{t_1} \cup \hat{L}_{t_2}\in \widehat{\mathcal{S}}_t[Z]$. The above inequalities also show that 
$\hat{L}=\hat{L}_t\cup L_R \in \mathscr{S}$. It remains to show  that 
$\partial^{t}(\hat{L})=Z$.  
  Since $\partial^{t_1}(L)=Z$,  we have that 
 $\partial^{t_1}(\hat{L}_{t_1}  \uplus L_{t_2} \uplus L_R)=Z$. Now since $X_{t_1}=X_{t_2}$ we have that  $\partial^{t_2}(\hat{L}_{t_1}  \uplus L_{t_2} \uplus L_R)=Z$ and thus $\partial^{t_2}(\hat{L}_{t_1}  \uplus \hat{L}_{t_2} \uplus L_R)=Z$. Finally, because $X_{t_2}=X_t$, we conclude  that 
 $\partial^{t}(\hat{L}_{t_1}  \uplus \hat{L}_{t_2} \uplus L_R)=\partial^{t}(\hat{L})=Z$. This concludes the proof of correctness invariant. 
 
 \paragraph{Root node $r$.} Here, $X_{r}=\emptyset$. We go through all the solution in $\widehat{\mathcal{S}}_r[\emptyset]$ and output the one with the 
 minimum weight.   This concludes the description of the dynamic programming algorithm.  

\paragraph{Computation of $\widehat{\mathcal{S}}_t$.}
Now we show how to implement the algorithm described above in the desired running time by making use of  
Lemma~\ref{lem:sizeinvariant_2}. For our discussion let us fix a node $t$ and 
$Z\subseteq X_t$ of size $k$. While doing dynamic programming algorithm from the base nodes to the root node 
we always maintain the size invariant. 

\paragraph{Base node $t$.} Trivially, in this case we have maintained size invariant. 
\paragraph{Introduce node $t$ with child $t'$.}
Here, we have that $\widehat{\mathcal{S}}_t[Z]=\widehat{\mathcal{S}}_{t'}[Z\setminus \{v\}]$ and thus 
the number of partial solutions with $i$ connected components in $\widehat{\mathcal{S}}_t[Z]$ is bounded 
${k \choose i}$

\paragraph{Forget node $t$ with child $t'$.} In this case, 
  \[ \widehat{\mathcal{S}}_t[Z]= \widehat{\mathcal{S}}_{t'}[Z\cup \{v\}]\diamond  {\cal P}_v[Z].\] 

It is easy to see that the number of edge subsets with $i$ connected components in 
$\widehat{\mathcal{S}}_{t'}[Z\cup \{v\}]$ and ${\cal P}_v[Z]$ is upper bounded by ${k+1 \choose i}$ 
So we apply Lemma~\ref{lem:sizeinvariant_2} and obtain $\widehat{\mathcal{S}}_t'[Z]$ that maintains the 
correctness and size invariants. 
We update  $\widehat{\mathcal{S}}_t[Z]=\widehat{\mathcal{S}}_t'[Z]$.

 The running time $T$ to compute $\widehat{\mathcal{S}}_t$ (that is, across all subsets of $X_t)$ is 
\begin{eqnarray*}
T&=&\cO\left( \sum_{i=1}^{\tw +1} {\tw+1 \choose i}\left( i^{\omega}\left(2^{\omega}+2\right)^i n + i^\omega 2^{i(\omega-1)} 3^i n \right)\right)\\
&=&\cO\left({\tw}^{\omega}n\left(2^{\omega}+3\right)^{\tw}+{\tw}^\omega n\left( 1+ 2^{\omega-1}\cdot 3\right)^{\tw}\right)
\end{eqnarray*}

\paragraph{Join node $t$ with two children $t_{1}$ and $t_{2}$.}   Here we defined  
   $$ \widehat{\mathcal{S}}_t[Z]= \widehat{\mathcal{S}}_{t_1}[Z]\diamond  \widehat{\mathcal{S}}_{t_2}[Z].$$ 

The number of edge subsets with $i$ connected components in 
$\widehat{\mathcal{S}}_{t_1}[Z]$ and  $\widehat{\mathcal{S}}_{t_2}[Z]$ by ${k \choose i}$. 
Now, we apply Lemma~\ref{lem:sizeinvariant_2} and obtain $\widehat{\mathcal{S}}_t'[Z]$ that maintains the 
correctness invariant and has size at most $2^k$. We {put } $\widehat{\mathcal{S}}_t[Z]=\widehat{\mathcal{S}}_t'[Z]$. 
The running time to compute $\widehat{\mathcal{S}}_t$  is 
\begin{eqnarray*}
\cO\left({\tw}^{\omega}n\left(2^{\omega}+3\right)^{\tw}+{\tw}^\omega n\left( 1+ 2^{\omega-1}\cdot 3\right)^{\tw}\right). 
\end{eqnarray*}
Thus the whole algorithm takes 
$\cO\left({\tw}^{\omega}n^2\left(2^{\omega}+3\right)^{\tw}+{\tw}^\omega n^2\left( 1+ 2^{\omega-1}\cdot 3\right)^{\tw}\right)=\cO(8.7703^{\tw}n^2)$ 
as the number of nodes in a nice tree-decomposition is upper bounded by $\cO(n)$. However, observe that we do not 
need to compute the forests and the  associated weight at every step of the algorithm. The size of the forest is 
at most $\tw+1$ and we can maintain these forests across the bags during dynamic programming in time $\tw^{\cO(1)}$. 
This will lead to an algorithm with the claimed running time. 
This completes the proof. 
\end{proof}



\subsection{{\sc Feedback Vertex Set} parameterized by treewidth}
In this section we study the  {\sc Feedback Vertex Set} problem which is defined as follows.   
\medskip
\begin{center} 
\fbox{\begin{minipage}{0.96\textwidth}
\noindent{\sc Feedback Vertex Set} \\ 
\noindent {\bf Input}: An undirected graph $G$ and a weight function $w~:~V(G)\rightarrow \mathbb{N}$.\\
\noindent{\bf Task}: Find a minimum weight set $Y\subseteq V(G)$ 
such that $G[V(G)\setminus Y]$ is a forest.\\ 
\end{minipage}}
\end{center}
\medskip

Let $G$ be an input  graph of the {\sc Feedback Vertex Set} problem. In this section instead of saying 
feedback vertex set $Y\subseteq V(G)$ is a solution, we say that $V(G)\setminus Y$ is a solution, i.e, our 
objective is to find a maximum weight set $V'\subseteq V(G)$ such that $G[V']$ is a forest. We call 
$V'\subseteq V(G)$ is an {\em optimal solution} if $V'$ is a solution with maximum weight. Let $\mathscr{S}$ be a 
family of vertex subsets such that every vertex subset corresponds to an optimal solution. 
That is, 
$$\mathscr{S}=\{V'\subseteq V(G)~|~V' \mbox{ is an optimal solution}\}.$$  
Let $(\mathbb{T},\mathcal{ X})$  be a tree decomposition of $G$ of width $\tw$. 
For each tree node $t$ and $Z\subseteq X_t$, we define ${\cal S}_t[Z]$, 
{\em family of partial solutions} as follows.  
\begin{eqnarray*}
\mathcal{S}_t[Z]=\{U\subseteq V(G_t)~|~ U\cap X_t=Z \mbox{ and } G_t[U]\mbox{ is a forest }\} 
\end{eqnarray*}
  
We denote by $K^t$ a complete graph on the vertex set $X_t$. Let $G^*$ be subgraph of $G$. 
Let $C_1',\ldots,C_\ell'$ be the connected components of $G^*$ that have nonempty intersection with $X_t$.  
Let $C_i=C_i'\cap X_t$. By $F(G^*)$ we denote the a forest $\{Q_1,\ldots,Q_\ell\}$ where each $Q_i$ is an 
arbitrary spanning tree of $K^t[C_i]$. 

For two family of vertex subsets ${\cal P}$ and ${\cal Q}$ of a graph $G$, 
we denote 
$$ {\cal P} \otimes {\cal Q}=\{U_1\cup U_2 ~|~ U_1\in {\cal P}, U_2\in {\cal Q} \mbox{ and } 
G_t[U_1\cup U_2] \mbox{ is a forest }\}.$$

Now we are ready to state the main theorem. 

\begin{theorem}\label{thm:fvs_DP}
Let $G$ be an $n$-vertex graph given together with its tree decomposition of with $\tw$. Then 
{\sc Feedback Vertex Set} on $G$ can be solved in time $\cO\left( \left(1+ 2^{\omega-1}\cdot 3\right)^{\tw} 
 {\tw}^{\cO(1)}n \right)$.
\end{theorem}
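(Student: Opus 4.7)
The plan is to follow the Steiner Tree template of Theorem~\ref{thm:steinertree_DP} essentially verbatim, swapping the role of edge sets for vertex sets and the $\diamond$-product for the $\otimes$-product. Working over a nice tree decomposition $(\mathbb{T},\mathcal{X})$ of $G$, for every node $t$ and every $Z\subseteq X_t$ I will maintain a family $\widehat{\mathcal{S}}_t[Z]\subseteq\mathcal{S}_t[Z]$ of candidate partial solutions, together with, for every $U\in\widehat{\mathcal{S}}_t[Z]$, the auxiliary forest $F(G_t[U])$ on the vertex set $Z$ that records how the connected components of $G_t[U]$ touch $X_t$. The \emph{correctness invariant} is: for every optimal induced forest $V'\in\mathscr{S}$, letting $V'_t=V'\cap V(G_t)$, $V'_R=V'\setminus V'_t$, and $Z=V'\cap X_t$, there exists $\hat U\in\widehat{\mathcal{S}}_t[Z]$ with $w(\hat U)\geq w(V'_t)$ and $\hat U\cup V'_R\in\mathscr{S}$. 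The \emph{size invariant} is that the sub-collection of $\widehat{\mathcal{S}}_t[Z]$ whose forest $F(G_t[U])$ contains exactly $i$ edges (equivalently, whose induced graph $G_t[U]$ has $|Z|-i$ components intersecting $X_t$) has at most $\binom{|Z|}{i}$ elements.

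The crucial subroutine is a vertex-analogue of the Product Shrinking Lemma~\ref{lem:sizeinvariant_2}. Given a family of the form $\widehat{\mathcal{S}}_t[Z]=\mathcal{P}\otimes\mathcal{Q}$ in which both $\mathcal{P}$ and $\mathcal{Q}$ already obey slice-wise bounds of the form $\binom{|Z|+c}{i}$, I associate each $U$ with its induced forest $F(G_t[U])$ in the graphic matroid $M$ on the complete graph $K^t[Z]$, stratify each family by the edge count of the associated forest, and invoke Corollary~\ref{cor:product_general_matroid} to compute, in time $\cO\bigl(k^{\omega}(2^{\omega}+2)^k + k^\omega 2^{k(\omega-1)} 3^k\bigr)$ with $k=|Z|$, a max-representative family $\widehat{\mathcal{L}_{1,i}\bullet\mathcal{L}_{2,j}}$ of size $\binom{k}{i+j}$ for every $(i,j)$. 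Pulling these choices back to the underlying vertex sets (weighted by the vertex weight $w$, in the max-representative regime since we want to maximise the weight of the retained vertices) yields a shrunken $\widehat{\mathcal{S}}'_t[Z]$ that satisfies both invariants. The exchange argument from the Steiner Tree proof transfers without change, because any cycle in $G[\hat U\cup V'_R]$ must descend to a cycle in $F(G_t[\hat U])\cup F(G[V'_R])$, contradicting independence in the graphic matroid.

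Armed with this shrinking lemma the node-by-node transitions are routine. At a base node I set $\widehat{\mathcal{S}}_t[\emptyset]=\{\emptyset\}$; at an introduce-vertex node I either include or exclude the new vertex; at a forget node I merge $\widehat{\mathcal{S}}_{t'}[Z]$ and $\widehat{\mathcal{S}}_{t'}[Z\cup\{v\}]$ into $\widehat{\mathcal{S}}_t[Z]$; and at a join node I put $\widehat{\mathcal{S}}_t[Z]=\widehat{\mathcal{S}}_{t_1}[Z]\otimes\widehat{\mathcal{S}}_{t_2}[Z]$ and immediately apply the product-shrinking step. The main obstacle I expect is the bookkeeping around the introduce-edge transition (in the edge-introducing refinement of the nice tree decomposition): inserting an edge between two vertices of $Z$ either merges two trees of $F$ or creates a cycle, and the family must be purged of the latter while the former must update $F$ without disturbing the size invariant. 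This is handled by a local update that preserves the edge-count grading of the associated forest, so Corollary~\ref{cor:product_general_matroid} continues to apply slice-by-slice after the update, exactly as in the Steiner Tree argument where edges were added at forget nodes.

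The running-time accounting is identical to that of Theorem~\ref{thm:steinertree_DP}: the dominant cost per node is the invocation of the product-shrinking subroutine, bounded by $\cO\bigl(\tw^{\omega}(1+2^{\omega-1}\cdot 3)^{\tw}\bigr)$; summing over the $\cO(n)$ nodes of the nice tree decomposition (and absorbing the $\tw^{\cO(1)}$ overhead for maintaining the forests $F$ across bags) gives the claimed $\cO\bigl((1+2^{\omega-1}\cdot 3)^{\tw}\tw^{\cO(1)}n\bigr)$ bound. The correctness invariant, propagated at every node type, guarantees that $\widehat{\mathcal{S}}_r[\emptyset]$ at the root contains an induced forest of maximum weight, and the complement of this set in $V(G)$ is the desired minimum-weight feedback vertex set.
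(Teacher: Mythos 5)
Your overall strategy is exactly the paper's: graded families of partial solutions per $(t,Z)$, a correctness and a size invariant, forests $F(\cdot)$ in the graphic matroid on $K^t[Z]$, max-representative families (Theorem~\ref{thm:repsetlovaszweighted} and Corollary~\ref{cor:product_general_matroid}) to shrink at the expensive transitions, and the same per-node cost analysis. The one step that is wrong as written is your treatment of the introduce-edge transition. When the new edge merges two trees of $F(G_t[U])$, the associated forest \emph{gains} an edge, so the edge-count grading is not preserved: every surviving set moves from slice $i$ to slice $i+1$, and the new slice $i+1$ is only bounded by $\binom{|Z|}{i}$, which exceeds $\binom{|Z|}{i+1}$ whenever $i\geq |Z|/2$. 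After the up to $\Theta(\tw)$ (or $\Theta(\tw^2)$) consecutive edge introductions that a single bag may require, the slice-wise bounds degrade far beyond the $\binom{|Z|+c}{i}$ (fixed $c$) hypothesis needed to invoke Corollary~\ref{cor:product_general_matroid}, so ``a local update \dots without disturbing the size invariant'' does not go through. The repair is routine but must be stated: after such nodes (and likewise after your forget nodes, where a slice is only bounded by $\binom{|Z|+1}{i}+\binom{|Z|+1}{i+1}$) you re-shrink each slice with Theorem~\ref{thm:repsetlovaszweighted}, at cost $\cO\bigl(2^{|Z|}\binom{|Z|}{i}^{\omega-1}\bigr)$ per slice, which stays inside the claimed budget. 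This is precisely what the paper does at its introduce and forget nodes, where it explicitly allows the slice to blow up to $2^{|Z|}$ temporarily and then restores the invariant.

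A second, benign, divergence: the paper does not use an edge-introducing refinement; it works with $G_t$ containing all edges inside the bag. Consequently, at a join node both $G_{t_1}[U_1]$ and $G_{t_2}[U_2]$ contain the edges of $G[Z]$, and $F(G[U_1])\cup F(G[U_2])$ can contain a cycle even when $G[U_1\cup U_2]$ is a forest. Lemma~\ref{lem:sizeinvariant_3} resolves this by associating $F(G[U_2]\setminus E(Z))$ with the second family, which destroys that family's grading and forces an extra preliminary application of Theorem~\ref{thm:repsetlovaszweighted} to the slices of ${\cal L}_2$ before Corollary~\ref{cor:product_general_matroid} is applied. Your edge-introduction convention makes the two children's graphs edge-disjoint and so sidesteps this wrinkle entirely; your transfer of the Steiner-Tree exchange argument to the join node is then correct as stated. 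So, modulo the missing re-shrinking step after edge and forget transitions, your proof matches the paper's and even simplifies its join-node bookkeeping.
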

\begin{proof}

For every node $t$ of  $\mathbb T$ and $Z\subseteq X_t$, we store a family of vertex subsets  
$\widehat{\mathcal{S}}_t[Z]$ of $V(G_t)$  satisfying the following correctness invariant.
\begin{quote}
{\bf Correctness Invariant:} For every $L\in \mathscr{S}$ we have the following. 
Let $L_t=V(G_t)\cap L$, $L_R=L\setminus L_t$ and $L\cap X_t=Z$. Then there exists 
$\hat{L}_t\in \widehat{\mathcal{S}}_t[Z]$ such that 
$\hat{L}=\hat{L}_t\cup L_R$ is an optimal solution, i.e $G[\hat{L}_t\cup L_R]$ is a forest with 
$w(\hat{L}_t)\geq w({L}_t)$
Thus we have that $\hat{L} \in \mathscr{S}$.
\end{quote}

We process the nodes of the tree $\mathbb T$  from base nodes to the root node while doing the dynamic programming. 
Throughout the process we maintain the correctness invariant, which will prove the correctness of the algorithm. 
However, our main idea is to use representative sets  to obtain   $\widehat{\mathcal{S}}_t[Z]$ of small size. 
That is, given the set $\widehat{\mathcal{S}}_t[Z]$ 
that satisfies the correctness invariant,  
we use {\em representative set} tool 
to obtain a subset 
$\widehat{\mathcal{S}}_t'[Z]$ of $\widehat{\mathcal{S}}_t[Z]$ that also satisfies the  correctness invariant 
and has size upper bounded by $2^{|Z|}$ in total. More precisely, the number of partial solutions in $\widehat{\mathcal{S}}_t'[Z]$ that have $i$  connected components with nonempty intersection with $X_t$  is upper 
bounded by ${|Z| \choose i}$. Thus, we maintain the following size invariant.

\begin{quote}
{\bf Size Invariant:} After  node $t$ of $\mathbb T$ is processed by the algorithm, we have that 
 $|\widehat{\mathcal{S}}_t[Z,i]|\leq {|Z| \choose i}$, where $\widehat{\mathcal{S}}_t[Z,i]$ is the set of partial 
solutions  that have $i$  connected components with nonempty intersection with $X_t$. 
\end{quote}
 

\begin{lemma}[Product Shrinking Lemma]
\label{lem:sizeinvariant_3}
Let $t$ be a join node of  $\mathbb T$ with children $t_1$ and $t_2$. Let  $Z\subseteq X_t$ be a set of size 
$k$. Let $\widehat{\mathcal{S}}_{t_1}[Z]$ and $\widehat{\mathcal{S}}_{t_2}[Z]$ be two family of vertex subsets of 
$V(G_{t_1})$ and $V(G_{t_1})$ satisfying the size and correctness invariants. Furthermore, let  
$\widehat{\mathcal{S}}_t[Z]=\widehat{\mathcal{S}}_{t_1}[Z] \otimes \widehat{\mathcal{S}}_{t_2}[Z]$ be the family of 
vertex subsets of $V(G_t)$ satisfying the correctness invariant.  
Then in time $\cO\left( k^{\omega}\left(2^{\omega}+2\right)^k n + k^\omega 2^{k(\omega-1)} 3^k n \right)$ 
we can compute $\widehat{\mathcal{S}}_t'[Z] \subseteq \widehat{\mathcal{S}}_t[Z]$ 
satisfying correctness and size invariants. 
\end{lemma}
 \begin{proof}
 We start by associating a matroid with  node $t$ and the set $Z\subseteq X_t$  as follows. We consider 
a graphic matroid \mat{} on $K^t[Z]$. Here, the element set $E$ of the matroid is  the edge set $E(K^t[Z])$ and 
the family of independent sets $\cal I$ consists of spanning forests of  $K^t[Z]$. 
Here our objective is to find a small subfamily of 
$\widehat{\mathcal{S}}_t[Z]=\widehat{\mathcal{S}}_{t_1}[Z] \otimes \widehat{\mathcal{S}}_{t_2}[Z]$
satisfying correctness and size invariants using efficient computation of representative family in the 
graphic matroid $M$. For an independent set $U\in \widehat{\mathcal{S}}_{t_1}[Z]\cup \widehat{\mathcal{S}}_{t_2}[Z]$, for $U_1\in \widehat{\mathcal{S}}_{t_1}[Z]$ and $U_2\in \widehat{\mathcal{S}}_{t_2}[Z]$, it is natural to associate $F(G[U_1])\cup F(G[U_2])$ as the corresponding independent set in the graphic matroid. 
%
However,  $F(G[U_1])\cup F(G[U_2])$ may not form a forest even if $G[U_1\cup U_2]$ is a forest. This 
happens precisely when there exists an edge in $Z$. To overcome this difficulty we associate 
$F(G[U]\setminus E(Z))$ with any $U\in \widehat{\mathcal{S}}_{t_2}[Z]$. We can observe that for any 
$U_1\in \widehat{\mathcal{S}}_{t_1}[Z]$ and $U_2\in \widehat{\mathcal{S}}_{t_2}[Z]$,  $G[U_1\cup U_2]$ is a forest 
if and only if $F(G[U_1])\cup F(G[U_2]\setminus E(Z))$ is a forest in $K^t[Z]$. 
 
Let $\widehat{\mathcal{S}}_{t_1}[Z]=\{A_1,\ldots,A_\ell\}$ and $\widehat{\mathcal{S}}_{t_2}[Z]=\{B_1,\ldots,B_{\ell'}\}$. 
Let ${\cal L}_1=\{ F(G[A_1]),\ldots,F(G[A_\ell])\}$ and ${\cal L}_2=\{ F(G[B_1]\setminus E(Z)),\ldots,F(G[B_{\ell'}]\setminus E(Z))\}$ 
be the set of forests in $K^t[Z]$ corresponding to the vertex subsets in $\widehat{\mathcal{S}}_{t_1}[Z]$ and 
$\widehat{\mathcal{S}}_{t_2}[Z]$ respectively. 
For each $F(G[A_i])\in {\cal L}_1$ we set $w(F(G[A_i]))=w(A_i)$, and for each 
$F(G[B_j]\setminus E(Z))$ we set $w(F(G[B_j]\setminus E(Z)))=w(B_j\setminus Z)$. 
For $i\in [k]$ and $r\in \{1,2\}$, 
let ${\cal L}_{r,i}$ be the family of forests of ${\cal L}_r$ with $i$ edges. 
Now we apply Theorem~\ref{thm:repsetlovaszweighted} and compute 
$\widehat{\cal L}_{2,j}\subseteq_{maxrep}^{k-1-j}{\cal L}_{2,j}$ for all $j$, of size ${k-1 \choose j}$ in time 
$\cO(2^k {k \choose j}^{w-1})$ (because $|{\cal L}_{2,j}|\leq 2^k$). 
Now we apply Corollary~\ref{cor:product_general_matroid} and find 
$\widehat{{\cal L}_{1,i}\bullet \widehat{{\cal L}}_{2,j}} \subseteq_{maxrep}^{k-1-i-j} {\cal L}_{1,i}\bullet {\cal L}_{2,j}$ 
of size ${k-1 \choose i+j}$ for all $i,j\in [k]$. 
Let $\widehat{\mathcal{S}}_t'[Z,k-m] \subseteq \widehat{\mathcal{S}}_t[Z,k-m]$ be such that for every 
$U_1\cup U_2\in \widehat{\mathcal{S}}_t'[Z,k-m] $ we have that 
$F(G[U_1])\cup F(G[U_2]\setminus Z) \in \cup_{i+j=m}\widehat{{\cal L}_{1,i}\bullet \widehat{{\cal L}}_{2,j}}$.  
Let $\widehat{\mathcal{S}}_t'[Z]= \cup_{j=0}^k \widehat{\mathcal{S}}_t'[Z,j]$. 
By Corollary~\ref{cor:product_general_matroid}, 
$|\widehat{\mathcal{S}}_t'[Z,k-m]| \leq k {k-1 \choose m}\leq {k \choose k-m}$, and hence 
$\widehat{\mathcal{S}}_t'[Z]$ maintains the size invariant.  

Now we show that the $\widehat{\mathcal{S}}_t'[Z]$ maintains the correctness invariant. 
Let $L\in \mathscr{S}$ and let $L_t=V(G_t)\cap L$, $L_R=L\setminus L_t$ and $Z=L\cap X_t$. 
Since $\widehat{\mathcal{S}}_t[Z]$ satisfy correctness invariant, there exists 
$\hat{L}_t \in \widehat{\mathcal{S}}_t[Z]$ such that $w(\hat{L}_t)\geq w(L_t)$,   
$\hat{L}= \hat{L}_t \cup L_R$ is an optimal solution and $\hat{L}\cap X_t=Z$. 
Since $\widehat{\mathcal{S}}_t[Z]=\widehat{\mathcal{S}}_{t_1}[Z] \otimes \widehat{\mathcal{S}}_{t_2}[Z]$, 
there exists $U_1\in \widehat{\mathcal{S}}_{t_1}[Z]$ and 
$U_2\in \widehat{\mathcal{S}}_{t_2}[Z]$ such that $\hat{L}_t= U_1\cup U_2$. 
Observe that $G[U_1\cup U_2]$ form a forest.    
Consider the forests $F(G[U_1])$ and $F(G[U_2]\setminus E(Z))$. 
Suppose $|F(G[U_1])|=i_1$ and $|F(G[U_2]\setminus E(Z))|=i_2$, 
then $F(G[U_1])\cup F(G[U_2]\setminus E(Z))\in {\cal L}_{1,i_1}\bullet {\cal L}_{1,i_2}$. 
This is because, if $F(G[U_1])\cup F(G[U_2]\setminus E(Z))$ contain a cycle, then corresponding to that cycle we 
can get a cycle in $G[U_1\cup U_2]$, which is a contradiction. 
Now let $E'=F(G[L_R\cup Z]\setminus E(Z))$ be the forest corresponding to $L_R\cup Z$ 
with respect to the bag $X_t$. Since $\hat{L}$ is a solution, we have that 
$F(G[U_1])\cup F(G[U_2]\setminus E(Z))\cup E'$ is a 
spanning tree in  $K^t[Z]$. Since $\widehat{{\cal L}_{1,i_1}\bullet \widehat{\cal L}_{2,i_2} }\subseteq_{maxrep}^{k-1-i_1-i_2}
{\cal L}_{1,i_1}\bullet {\cal L}_{2,i_2} $, we have that there exists 
a forest $F(G[U_1'])\cup F(G[U_2']\setminus E(Z)) \in \widehat{{\cal L}_{1,i_1}\bullet \widehat{\cal L}_{2,i_2} }$ 
such that $w(F(G[U_1'])\cup F(G[U_1']\setminus E(Z)) \geq w(F(G[L_t])) $ and $F(G[U_1'])\cup F(G[U_2']\setminus E(Z)) \cup E'$ 
is a spanning tree in  $K^t[Z]$. Thus, we can conclude that $U_1\cup U_2 \cup L_R$ is an optimal solution and 
$U_1\cup U_2 \in \widehat{\mathcal{S}}_t'[Z]$. 
This proves that $\widehat{\mathcal{S}}_t'[Z]$ maintains the correctness invariant.   
 
Since we are applying Corollary~\ref{cor:product_general_matroid} the running time to compute 
$\widehat{\mathcal{S}}'_t[Z]$ is upper bounded by, $\cO\left( k^{\omega}\left(2^{\omega}+2\right)^k n + k^\omega 2^{k(\omega-1)} 3^k n \right).$
 \end{proof}

We now explain the dynamic programming algorithm over the tree-decomposition $(\mathbb{T},\mathcal{ X})$ of $G$ 
and prove that it maintains the correctness invariant. We assume that $(\mathbb{T},\mathcal{ X})$ is a nice 
tree-decomposition of $G$. By $\widehat{\mathcal{S}}_t$ we denote 
$\cup_{Z\subseteq X_t} \widehat{\mathcal{S}}_t[Z]$ (also called a \emph{representative family of partial solutions}). 
 We show how $\widehat{\mathcal{S}}_t$  is obtained by doing dynamic programming from base node to the root node.

\paragraph{Base node $t$.}  Here the graph $G_t$ is empty and thus we take $\widehat{\mathcal{S}}_t=\emptyset$.  

\paragraph{Introduce node $t$ with child $t'$.} Here, we know that $X_{t}\supset X_{t'}$ and  $|X_{t}|=|X_{t'}|+1$. 
Let $v$ be the vertex in $X_{t}\setminus X_{t'}$. The graph $G_t= G_{t'}\setminus\{v\}$. So each partial solution 
in $G_{t'}$ is a partial solution in $G_t$ or it differs at vertex $v$ from a partial solution in $G_t$, i.e,  

$$\widehat{\mathcal{S}}_{t}[Z]=
     \left\{ \begin{array}{lcl}
\widehat{\mathcal{S}}_{t'}[Z] & \mbox{if} & v\notin Z \\ 
\left\{ U\cup\{v\}~|~ U \in \widehat{\mathcal{S}}_{t'}[Z\setminus \{v\}]\mbox{ and } G[U\cup\{v\}] \mbox{ is a forest }\right\} 
& \mbox{if} & v\in Z 
\end{array}\right.$$
When $v\notin Z$, $\widehat{\mathcal{S}}_{t}[Z]$ satisfies correctness and size invariant. 
When $v\in Z$, $|\widehat{\mathcal{S}}_{t}[Z,i]|\leq 2^k$ and we can apply Theorem~\ref{thm:repsetlovaszweighted} 
by associating a family of independent sets in $K^t[Z]$ (like in Lemma~\ref{lem:sizeinvariant_3}) and find 
$\widehat{\mathcal{S}}'_{t}[Z,i]\subseteq\widehat{\mathcal{S}}_{t}[Z,i]$ satisfies correctness and size invariant 
in time $\cO(2^k{k\choose i}^{w-1})$.
\paragraph{Forget node $t$ with child $t'$.} Here we know $X_{t}\subset X_{t'}$, $|X_t|=|X_{t'}|-1$ and $G_t=G_{t'}$. 
Let $X_t'\setminus X_t=\{v\}$. So for any $Z\subseteq X_{t}$ we have 
$\widehat{\cal S}_{t}[Z]=\widehat{\cal S}_{t'}[Z]\cup \widehat{\cal S}_{t'}[Z\cup\{v\}]$. 
The number of elements in $\widehat{\cal S}_{t}[Z]$ with $i$ number of connected components intersecting with 
$X_t$ is upper bounded by ${k+1 \choose i}+{k+1 \choose i+1}\leq {k+2 \choose i}$. 
Again by applying Theorem~\ref{thm:repsetlovaszweighted} we can find 
$\widehat{\mathcal{S}}'_{t}[Z,i]\subseteq\widehat{\mathcal{S}}_{t}[Z,i]$ satisfies correctness and size invariant 
in time $\cO({k+2 \choose i}{k\choose i}^{w-1})$.

\paragraph{Join node $t$ with two children $t_{1}$ and $t_{2}$.} Here, we know that  $X_{t}=X_{t_{1}}=X_{t_{2}}$. 
The natural way to get a family of partial solutions for $X_t$ is the union of vertex sets of two families stored 
at node $t_1$ and $t_2$ which form a forest, i.e,
\begin{eqnarray*}
\widehat{\cal S}_t[Z]&=&\{U_1\cup U_2~|~ U_1\in\widehat{\cal S}_{t_1}[Z], U_2\in\widehat{\cal S}_{t_2}[Z], G[U_1\cup U_2] \mbox{ is a forest}\}\\
 &=&\widehat{\cal S}_{t_1}[Z]\otimes \widehat{\cal S}_{t_2}[Z]
\end{eqnarray*}

Now we show that $\widehat{\mathcal{S}}_t$ maintains the invariant. Let $L\in \mathscr{S}$. 
Let $L_t=V(G_t)\cap L, L_{t_1}=V(G_{t_1})\cap L, L_{t_2}=V(G_{t_2})\cap L$  and $L_R=L\setminus L_t$. 
Let $Z=L\cap X_t$ 
Now observe that 
 \begin{eqnarray*}
 L\in \mathscr{S}  & \iff &  L_{t_1} \cup  L_{t_2} \cup L_R \in \mathscr{S}  \\
                   & \iff &  \hat{L}_{t_1}  \cup L_{t_2} \cup L_R \in \mathscr{S}~~~~\mbox{(by the property of $\widehat{\mathcal{S}}_{t_1}$ we have 
                                     that  $\hat{L}_{t_1}\in \widehat{\mathcal{S}}_{t_1}[Z]$)}\\
                   & \iff &  \hat{L}_{t_1} \cup \hat{L}_{t_2} \cup L_R \in \mathscr{S}~~~~\mbox{(by the property of $\widehat{\mathcal{S}}_{t_2}$ 
                             we have  that  $\hat{L}_{t_2}\in \widehat{\mathcal{S}}_{t_2}[Z]$)}
 \end{eqnarray*}
 \noindent 
{We put    $\hat{L}_t=\hat{L}_{t_1} \cup \hat{L}_{t_2}$.}  
By the definition of  $\widehat{\mathcal{S}}_t[Z]$,  we have that 
$\hat{L}_{t_1} \cup \hat{L}_{t_2}\in \widehat{\mathcal{S}}_t[Z]$. 
The above inequalities also show that 
$\hat{L}=\hat{L}_t\cup L_R \in \mathscr{S}$. 
Note that $(\hat{L}_t\cup L_R)\cap X_t=Z$
This concludes the proof of correctness invariant.

We apply Lemma~\ref{lem:sizeinvariant_3} and find 
$\widehat{\mathcal{S}}'_{t}[Z]\subseteq\widehat{\mathcal{S}}_{t}[Z]$ satisfies correctness and size invariant 
in time $\cO\left( k^{\omega}\left(2^{\omega}+2\right)^k n + k^\omega 2^{k(\omega-1)} 3^k n \right)$.
\paragraph{Root node $r$.} Here, $X_{r}=\emptyset$. We go through all the solution in 
$\widehat{\mathcal{S}}_r[\emptyset]$ and output the one with the 
maximum weight.

In worst case, in every tree node $t$, for all subset $Z\subseteq X_t$, we apply Lemma~\ref{lem:sizeinvariant_3}. So 
by doing the same run time analysis as in the case of Steiner Tree, the total running time will be upper bounded by
$\cO\left(\left( \left(2^{\omega}+3\right)^{\tw}+ \left( 1+ 2^{\omega-1}\cdot 3\right)^{\tw}\right) {\tw}^{O(1)}n \right).$
\end{proof}


\section{$k$-Path}
In this section we outline a parameterized algorithm for the $k$-{\sc Path} problem with running time $2.619^kn^{O(1)}$. The complete details of a $2.619^kn\log^2 n$ time algorithm will appear in the full version of~\cite{FominLS13}. The algorithm is basically an adaptation of the $k$-{\sc Path} algorithm of Fomin et al.~\cite{FominLS13}, but using generalized separating collections, rather than separating collections, in order to make a trade-off between the size of computed representative families and the time it takes to compute them. We start by giving a brief recolloection of the algorithm of Fomin et al.~\cite{FominLS13}.

Given as input a graph $G$ and integer $k$  we add a source vertex $s$ and make $s$ adjacent to all vertices in the input graph $G$, call the resulting graph $G'$. Every path of length $k$ in $G$ corresponds to a path rooted at $s$ of length $k+1$ in $G'$, and vice versa. Thus we look for such a path in $G'$. For a vertex $v \in V(G)$ define
\begin{eqnarray*}
 {\cal P}_{v}^i& = & \Big\{X~\Big|~X\subseteq V(G'),~v,s \in X, ~|X|=i \mbox{ and there is a path from $s$ to $v$ of length $i$} \\ 
   & & \hspace{1cm} \mbox{ in $G'$ with all the vertices belonging to $X$}. \Big\}
\end{eqnarray*}
It is easy to see that the following recurrence holds for the sets ${\cal P}_{v}^i$:
$${\cal P}_{v}^i = \bigcup_{u \in N_G(v)} \left[ {\cal P}_{u}^{i-1} \bullet \{v\} \right].$$
The correctness of this recurrence is formally proved in~\cite{FominLS13}. The aim now is to compute, for every $v \in V(G)$ and $i \leq k+1$ a $(k+1-i)$-representative family $\widehat{{\cal P}}_{v}^{i} \subseteq {\cal P}_{v}^i$. Fomin et al.~\cite{FominLS13} show that if for every $v$, $\widehat{{\cal P}}_{v}^{i-1}$ is a $(k+2-i)$-representative family of ${\cal P}_{v}^{i-1}$ and $\widehat{{\cal P}}_{v}^{i}$ is a $(k+1-i)$-representative family of 
$$\widetilde{\cal P}^i_v  = \bigcup_{u \in N_G(v)} \left[ \widehat{{\cal P}}_{u}^{i-1} \bullet \{v\} \right],$$
then $\widehat{{\cal P}}_{v}^{i}$ is a $(k+1-i)$-representative family of ${\cal P}_{v}^{i}$.
The algorithm first sets $\widehat{\cal P}^2_v  = {\cal P}^2_v = \{\{s,v\}\}$. Then, for each $i \geq 3$ in increasing order, the algorithm first computes $\widetilde{\cal P}^i_v$ using the recurrence above and then computes a $(k+1-i)$ representative family  $\widehat{{\cal P}}_{v}^{i}$ of  $\widetilde{\cal P}^i_v$ of size ${k+1 \choose i}$. Finally it is easy to see that $G'$ has a path of length $k+1$ rooted at $s$ if and only if some family $\widehat{\cal P}^{k+1}_v$ is non-empty.

The dependence on $k$ in the running time is determined by the running time of the step where a representative family  $\widehat{{\cal P}}_{v}^{i}$ of  $\widetilde{\cal P}^i_v$ is computed. This running time, in turn, depends on 
$|\widetilde{{\cal P}}_{v}^{i}|$, which is upper bounded by $n \cdot \max_u |\widehat{{\cal P}}_{u}^{i-1}|$. In the algorithm of Fomin et al~\cite{FominLS13} each family $\widehat{{\cal P}}_{u}^{i-1}$ is a $(k+2-i)$-representative family of size approximately ${k+1 \choose i-1}$. Simple calculus shows that for any $i$ the running time of the algorithm of  Fomin et al~\cite{FominLS13} is upper bounded by $2.851^kn^{O(1)}$.

Our new algorithm proceeds in exactly the same manner, but with one crucial difference. For each $i \leq k$ the algorithm appropriately selects a probability variable $x_i$ between $0$ and $1$. When the algorithm computes a $(k+1-i)$-representative family  $\widehat{{\cal P}}_{v}^{i}$ of  $\widetilde{\cal P}^i_v$, in the place where the algorithm of Fomin et al. constructs a $(n,i,k+1)$-separating collection, our algorithm uses a generalized $(n,i,k+1-i)$-separating collection with constant $x_i$ instead. This has two effects. First, the running time for computing  $\widehat{{\cal P}}_{v}^{i}$ is {\em decreased} to roughly $|\widetilde{\cal P}^i_v| \cdot (1-x_i)^{-(k-i)}$. Second, the {\em size} of the family  $\widehat{{\cal P}}_{v}^{i}$ is {\em increased} to approximately $x_i^{-i}(1-x_i)^{-(k-i)}$. The increase in the size of the output family then affects the running time of the next iteration of the algorithm, since it increases the size of $\widetilde{\cal P}^{i+1}_v$.  However, it is possible to show that one can choose $x_i$ for every $i$ such that the savings in the running time outweigh the loss caused due to the increased size of the representative family. Specifically setting 
$$x_i = \frac{\frac{i}{k+1-i}}{2-\frac{i}{k+1-i}}$$
yields an upper bound of $2.619^kn^{O(1)}$ for the total running time.



\section{Conclusion}
In this paper we gave algorithms for finding representative sets for product families 
that are faster that the naive computation for these families. We showed their applicability by   designing 
the best known deterministic algorithms for {\sc $k$-wMlD}, {\sc $k$-wMMlD} and for ``connectivity problems'' parameterized by treewidth.  One of 
the main technical components of our algorithm is the deterministic construction of 
generalized separating collections. We believe that this pseudo-random object, as well as our algorithms for computing representative sets of product families,  will be useful to accelerate other algorithms.
We conclude with several  interesting problems. 
\begin{enumerate}
\item What are the other natural set families for which we can find representative sets faster than by directly applying the results of Fomin et al.~\cite{FominLS13}?
\item  Can we find representative sets for a uniform matroid in time linear in the input size? 
\item  Does there exist a deterministic algorithm for {\sc $k$-wMlD} running in time $2^k n^{\cO(1)} \log W$?
\end{enumerate}


\begin{thebibliography}{10}
\bibitem{AlonYZ}
Noga Alon, Raphael Yuster, and Uri Zwick.
\newblock Color-coding.
\newblock {\em J. Assoc. Comput. Mach.}, 42(4):844--856, 1995.

\bibitem{BellmanK62}
Richard Bellman and William Karush.
\newblock Mathematical programming and the maximum transform.
\newblock {\em J. Soc. Indust. Appl. Math.}, 10:550--567, 1962.

\bibitem{BellmanK62a}
Richard Bellman and William Karush.
\newblock On the maximum transform and semigroup of transformations.
\newblock {\em Bull. Amer. Math. Soc.}, 68:516--518, 1962.

\bibitem{BjorklundHKK07}
A.~Bj\"orklund, T.~Husfeldt, P.~Kaski, and M.~Koivisto.
\newblock {F}ourier meets {M}\"{o}bious: Fast subset convolution.
\newblock In {\em Proceedings of the 39th annual ACM Symposium on Theory of
  Computing (STOC 2007)}, page to appear, New York, 2007. ACM Press.

\bibitem{BjHuKK10}
Andreas Bj{\"o}rklund, Thore Husfeldt, Petteri Kaski, and Mikko Koivisto.
\newblock Narrow sieves for parameterized paths and packings.
\newblock {\em CoRR}, abs/1007.1161, 2010.

\bibitem{bjrklund_et_al:LIPIcs:2013:3919}
Andreas Bj{\"o}rklund, Petteri Kaski, and Lukasz Kowalik.
\newblock Probably optimal graph motifs.
\newblock In {\em 30th International Symposium on Theoretical Aspects of
  Computer Science (STACS 2013)}, volume~20 of {\em Leibniz International
  Proceedings in Informatics (LIPIcs)}, pages 20--31, Dagstuhl, Germany, 2013.
  Schloss Dagstuhl--Leibniz-Zentrum fuer Informatik.

\bibitem{BodlaenderCK12}
Hans~L. Bodlaender, Marek Cygan, Stefan Kratsch, and Jesper Nederlof.
\newblock Solving weighted and counting variants of connectivity problems
  parameterized by treewidth deterministically in single exponential time.
\newblock In {\em ICALP}, pages 196--207, 2013.

\bibitem{Cygan11}
Marek Cygan, Jesper Nederlof, Marcin Pilipczuk, Michal Pilipczuk, Johan M.~M.
  van Rooij, and Jakub~Onufry Wojtaszczyk.
\newblock Solving connectivity problems parameterized by treewidth in single
  exponential time.
\newblock In {\em Proceedings of the 52nd Annual Symposium on Foundations of
  Computer Science (FOCS 2011)}. IEEE, 2011.

\bibitem{FominLRS12}
Fedor~V. Fomin, Daniel Lokshtanov, Venkatesh Raman, Saket Saurabh, and
  B.~V.~Raghavendra Rao.
\newblock Faster algorithms for finding and counting subgraphs.
\newblock {\em J. Comput. System Sci.}, 78(3):698--706, 2012.

\bibitem{FominLS13}
Fedor~V. Fomin, Daniel Lokshtanov, and Saket Saurabh.
\newblock Efficient computation of representative sets with applications in
  parameterized and exact algorithms.
\newblock {\em CoRR.}, abs/1304.4626:To appear in SODA 2014, 2013.

\bibitem{GuillemotS13}
Sylvain Guillemot and Florian Sikora.
\newblock Finding and counting vertex-colored subtrees.
\newblock {\em Algorithmica}, 65(4):828--844, 2013.

\bibitem{Kloks94}
Ton Kloks.
\newblock {\em Treewidth, Computations and Approximations}, volume 842 of {\em
  Lecture Notes in Computer Science}.
\newblock Springer, 1994.

\bibitem{Koutis08}
Ioannis Koutis.
\newblock Faster algebraic algorithms for path and packing problems.
\newblock In {\em Proceedings of the 35th International Colloquium on Automata,
  Languages and Programming (ICALP 2008)}, volume 5125 of {\em Lecture Notes in
  Computer Science}, pages 575--586, 2008.

\bibitem{Koutis12}
Ioannis Koutis.
\newblock Constrained multilinear detection for faster functional motif
  discovery.
\newblock {\em Inf. Process. Lett.}, 112(22):889--892, 2012.

\bibitem{KW09}
Ioannis Koutis and Ryan Williams.
\newblock Limits and applications of group algebras for parameterized problems.
\newblock In {\em Proceedings of the 36th International Colloquium on Automata,
  Languages and Programming (ICALP 2009)}, volume 5555 of {\em Lecture Notes in
  Computer Sci.}, pages 653--664. Springer, 2009.

\bibitem{Lovasz77}
L.~Lov{\'a}sz.
\newblock Flats in matroids and geometric graphs.
\newblock In {\em In Combinatorial surveys (Proc. Sixth British Combinatorial
  Conf., Royal Holloway Coll., Egham)}, pages 45--86. Academic Press, London,
  1977.

\bibitem{Marx:2006ys}
D{\'a}niel Marx.
\newblock Parameterized coloring problems on chordal graphs.
\newblock {\em Theor. Comput. Sci.}, 351(3):407--424, 2006.

\bibitem{Marx09}
D{\'a}niel Marx.
\newblock A parameterized view on matroid optimization problems.
\newblock {\em Theor. Comput. Sci.}, 410(44):4471--4479, 2009.

\bibitem{mitzenmacher2005probability}
Michael Mitzenmacher and Eli Upfal.
\newblock {\em Probability and computing: Randomized algorithms and
  probabilistic analysis}.
\newblock Cambridge University Press, 2005.

\bibitem{Monien85}
B.~Monien.
\newblock How to find long paths efficiently.
\newblock In {\em Analysis and design of algorithms for combinatorial problems
  ({U}dine, 1982)}, volume 109 of {\em North-Holland Math. Stud.}, pages
  239--254. North-Holland, Amsterdam, 1985.

\bibitem{oxley2006matroid}
James~G Oxley.
\newblock {\em Matroid theory}, volume~3.
\newblock Oxford University Press, 2006.

\bibitem{Williams09}
Ryan Williams.
\newblock Finding paths of length $k$ in ${O}^*(2^k)$ time.
\newblock {\em Inf. Process. Lett.}, 109(6):315--318, 2009.

\bibitem{Williams12}
Virginia~Vassilevska Williams.
\newblock Multiplying matrices faster than {C}oppersmith-{W}inograd.
\newblock In {\em Proceedings of the 44th Symposium on Theory of Computing
  Conference (STOC 2012)}, pages 887--898. ACM, 2012.
\end{thebibliography}
\end{document}